\DeclareMathOperator*{\tr}{\text{Tr}}
\DeclareMathOperator*{\diag}{diag}
\DeclareMathOperator*{\argmin}{argmin}
\DeclareMathOperator*{\argmax}{argmax}
\theoremstyle{theorem}
\newtheorem{theorem}{Theorem}
\theoremstyle{lemma}
\newtheorem{lemma}{Lemma}[section]
\title{Parsimonious Feature Extraction Methods: Extending Robust Probabilistic Projections with Generalized Skew-t}
\author{
Dorota Toczydlowska \\
  School of Mathematical and Physical Sciences\\
  University of Technology Sydney\\
  \texttt{dtoczydlowska@gmail.com} \\
   \And
Gareth W. Peters \\
Department of Actuarial Mathematics and Statistics\\
Heriot-Watt University\\
  \texttt{garethpeters78@gmail.com} \\
     \And
Pavel V. Shevchenko \\
Department of Actuarial Studies and Business Analytics\\
Macquarie University\\
  \texttt{ pavel.shevchenko@mq.edu.au}
}
\begin{document}
\maketitle

\begin{abstract}
We propose a novel generalisation to the Student-t Probabilistic Principal Component methodology which: (1) accounts for an asymmetric distribution of the observation data; (2) is a framework for grouped and generalised multiple-degree-of-freedom structures, which provides a more flexible approach to modelling groups of marginal tail dependence in the observation data; and (3) separates the tail effect of  the error terms and factors. The new feature extraction methods are derived in an incomplete data setting to efficiently handle the presence of missing values in the observation vector. We discuss various special cases of the algorithm being a result of simplified assumptions on the process generating the data.   The applicability of the new framework is illustrated on a data set that consists of crypto currencies with the highest market capitalisation.
\end{abstract}

\keywords{probabilistic PCA; EM algorithm; robust orthogonal projections; skew grouped t-Copula; missing data; tail dependence; dependence modelling;}

\section{Introduction}
The study focuses on extension to the approach of Principal Component Analysis (PCA), as defined in \cite{Hotelling1933} , \cite{Jolliffe02} or  \cite{Vidal2016}. PCA and related matrix factorisation methodologies are widely used in data-rich environments for dimensionality reduction, data compression, feature-extraction techniques or data de-noising. The methodologies identify a lower-dimensional linear subspace to represent the data, which captures second-order dominant information contained in high-dimensional data sets. PCA can be viewed as a matrix factorisation problem which aims to learn the lower-dimensional representation of the data, preserving its Euclidean structure. However, in the presence of either a non-Gaussian distribution of the data generating distribution or in the presence of outliers which corrupt the data, the standard PCA methodology provides biased information about the lower-rank representation.

In many applications, the stochastic noise or observation errors in the data set are assumed to be, in some sense, ``well-behaved''; for instance, additive, light-tailed, symmetric and zero-mean. When non-robust feature extraction methods are naively utilised in the presence of violations of these implicit statistical assumptions, the information contained in the extracted features cannot be relied upon, resulting in misleading inference. Therefore, it is critical to ensure that the feature extraction captures information about correct characteristics of the process generating the data. In the following study, we relax the inherent assumption of ``well-behaved'' observation noise by developing a class of robust estimators that can withstand violations of such assumptions, which routinely arise in real data sets.

The investigated framework facilitates incorporation of prior assumptions about the data distribution into the model to ensure robust analysis of large incomplete datasets. 

Many improvements to the classical PCA methodology have been introduced in the literature to accommodate various characteristics that may deviate from the standard assumptions on the data when applying classical PCA methods. For instance robust variants of standard PCA modify the distance measure between each observation and its lower-rank approximation.     

Since the standard problem of PCA corresponds to finding the directions which maximise the covariance of the projected data, one group of improvements focuses on robustifying the calculation of the estimators of mean and covariance matrices, see \cite{Croux2000}, \cite{TorreBlack2001}, \cite{Croux2005}, \cite{Hubert2005} or \cite{Croux2007}.  

The use of a robust estimator is, in fact, equivalent to introducing observation-specific weights to the loss function of the standard PCA. The majority of these approaches are based on down-weighting, or even removing, observations with outlying distances; this categorises them as a non-probabilistic set of methodologies, as they do not directly incorporate any assumption about the noise distribution. The concept of observation-specific weights is also addressed in \cite{ParkZhangZhaKasturi2004} with weights, which are inversely proportional to the distance between the data points and which specify the distance measure on the sample. The study of \cite{ZhaoLinTang2007} assumes the local representation of the data as Gaussian.

The other class of approaches to address robust PCA investigates different types of measures which assess the distance between a set of observations and its projection. Consequently, new frameworks provide procedures which are efficient in the presence of various assumptions on the noise. The framework of PCA in the presence of sparse and of arbitrary amplitude has been studied by \cite{Tibshirani1996}, \cite{CrouxFilzmoser1998}, \cite{UkkelbergBorgen1993} and \cite{KeKanade2005}, \cite{DingZhouHeZha2006}, among many others, where authors proposed various PCA algorithms which incorporate regularised covariance shrinkage of an $L_1$-type loss function. 

A set of alternative methodologies, called projection pursuit methods, address the problem of representing a data matrix by sparse and dense components, see the review in \cite{Bouwmans2017}. Starting with the studies of \cite{WrightYang2009}, \cite{CandesLiMaWright2009} or \cite{ZhouLiWrightCandesMa2010}, the alternative PCA methodology aims to combine both sparse and dense noise patterns. The studies investigate a model for PCA which decomposes the data into a lower-rank matrix comprised of a small and dense noise matrix and a large and sparse noise matrix. Most of these methods are very sensitive to the initialisation step. For instance, the frameworks investigated by \cite{ZhouLiWrightCandesMa2010} requires a good estimate of the magnitude of the dense noise, which is usually difficult to obtain. The work of \cite{XuCaramanisSanghav2012} proposes a formulation of the problem which focuses on the exact recovery of the eigenvector representation of the data matrix, rather than the recovery of the data matrix as is broadly approached in other studies.

The other class of methodologies, which extends PCA to its probabilistic interpretation, were introduced by \cite{TippingBishop1999} and \cite{Roweis1998} as Probabilistic Principal Component Analysis (PPCA). In its first formulation, the standard PPCA assumes the observation vector to be Gaussian what allows for a straightforward interpretation of representation obtained by PCA in terms of PPCA, see \cite{Vidal2016}. PPCA can be easily tailored to handle incomplete information in the sample data and allows the utilisation of the probabilistic assumptions about both the type of missingness as well as the distribution of missing values. The Expectation-Maximisation (EM) algorithm, formalised by \cite{Dempster1977} and discussed in detail, with its extensions, in \cite{McLachanKrishnan1997}, is especially suited for inference of probabilistic models with unobserved or hidden variables.

One of the natural extensions of the standard formulation of PPCA is to introduce a heavy-tailed distribution to the process generating the data.  \cite{RidderFranc2003}, \cite{KhanDellaert2004}, \cite{Archambeau2006}, \cite{Fang2008} and \cite{Chen2009} address this problem and explore the use of the Student-t assumption on the noise distribution and its impact on the robustness of the methodology with respect to dense noise. \cite{Gao2008} formulates the PPCA problem with the Laplace error term and Gaussian latent variables. Other works introducing sparsity to the solution of PPCA are \cite{Guan2009} and \cite{BachArchambeau2009}, both of which incorporate a sparse prior distribution on the model's parameters via a variational Expectation-Maximisation. \cite{khanna2015} add sparse domain constraints on the distribution of latent variables.

\cite{XieXing2015} improve the robustness of PPCA to both sparse and dense outliers of significant magnitude by assuming that the error term and latent random vector follow a Cauchy distribution. Another flexible framework for PPCA is introduced in works of \cite{RednerWalker1984}, \cite{PeelMcLachlan2000}, \cite{Archambeau2005_phd}, \cite{ChenMorris2006} where they propose PPCA frameworks for mixture models, that follow Gaussian or Student-t distributions, in order to model arbitrary probability density functions of the observation process or the distribution of the noise which corrupts the data. 

The reviewed robust feature-extraction methods are primarily based on the assumption that observations are independent over time and that the marginal distributions of their components have the same profile of heavy tails. This reasoning might be criticised for having a limited ability to capture various tail-dependence patterns in multivariate data analysis. Therefore, we want to investigate an approach that will be able to accommodate a broader range of dependence and marginal distribution assumptions in the data-generating mechanism.  We comment that this new model for PPCA can easily be reduced to the simpler representation such as Gaussian PPCA and Student-t PPCA if the data reveals such characteristics. 

Therefore, our first contribution is to separate the tail effect of the error terms and factors that reflects the representation of the original data in the new basis. It allows for independent assumptions about the profiles of heavy tails of the error term and the original representation.

Secondly, we show how to employ Grouped t-copula into the PPCA framework.  The Grouped t-copula allows for a grouped or individual degrees of freedom parameter per marginal of a random vector. It has been explored in \cite{Daul2003}, \cite{Luo2010} and \cite{LuoShevchenko2009} in the context of risk modelling. The new component allows the marginal elements of unobserved vectors to have individual or grouped profiles of heavy-tails dependency structures and, consequently, provides greater flexibility in capturing second order characteristics of the data set.

Our next contribution is to combine the described concepts with the flexibility of modelling an asymmetric correlation and heavy- tail dependence in a multivariate setting.  We focus on the skewed Student-t distribution from the Generalized Hyperbolic family of distributions as defined and discussed in \cite{BarndorffNielsen1981} or \cite{Demarta2005},  and comprehensively compared with other families of skewed distributions in \cite{Aas2006}.  The type of a Student-t copula that accommodate skewness and individual degrees of freedom is studied in \cite{Church2012}.  

In addition, we study the robustness of the developed class of the PPCA as defined by \cite{Hampel1986}. The introduced structural components of the representation of the data generating process increase the flexibility of PPCA frameworks to take into account different features of the data. These features may impact on the obtained projection as well as on its rotation.  Given numerical examination, we show that this flexibility allows capturing complex characteristic of the data generating process, when they appear. Also, it results in an accurate estimation in the presence of the dynamics that are consistent with the assumptions of standard PCA or PPCA approaches.

Lastly, we develop an efficient Expectation-Maximisation (EM) algorithm of \cite{Dempster1977} that estimates the parameters of this new class of PPCA methods.  The framework handles the presence of missing data, and we comment how the procedure can be adjusted to various assumptions about the patterns of missing data. 

We apply our framework to cryptocurrencies data, and show how the new methodology can be accommodated to guide portfolio construction by measuring market concentration, the potential for diversification or hedging.

\section{Introduction to Probabilistic Principal Component Analysis}\label{sec:GaussianPPCA}
Let the $d$-dimensional random vector $\mathbf{Y}_t$ represents a process that generates the observation data with a realisation $\mathbf{y}_t$  at time $t$ . We observe $N$ realisations of $\mathbf{Y}_t$ ,  $\mathbf{y}_{1:N} = \big\{\mathbf{y}_{1},\ldots,\mathbf{y}_{N} \big\}$.  The standard PPCA, that assumes Gaussian distribution of $\mathbf{Y}_t$, has been introduced by \cite{TippingBishop1999}. The method seeks $k$ - dimensional uncorrelated latent vector $\mathbf{X}_t$ which provides the most meaningful model of $\mathbf{Y}_t$,
\begin{equation}\label{eq:ppca_model}
\mathbf{Y}_t = \bm{\mu} + \mathbf{X}_t \mathbf{W}_{d \times k}^T + \bm{\epsilon}_t,
\end{equation}
for a vector of constants $\bm{\mu} \in \mathbb{R}^d$ and $d$-dimensional error term random vector $\bm{\epsilon}_t$.  As remarked in \cite{TippingBishop1999}, in contrary to the standard PCA, the probabilistic version does not requires the orthogonality condition of $\mathbf{W}$, that is $\mathbf{W}^T \mathbf{W} = \mathbb{I}_k$, where $\mathbb{I}_k$ denotes a $k$ by $k$ identity matrix. This condition was essential in non-probabilistic PCA in order to impose a restriction or identification of a unique solution. In the optimisation problem in the classical PCA, the condition limits the space of possible solutions that minimize the distance between the original data and its projection to the new orthonormal space.  On the other hand, the objective function of the probabilistic PPCA can be represented by the likelihood  of the considered model and when it is combined in PPCA with a distribution on the factors, the marginal likelihood having integrated out the random factors removes the need for such a constraint.

In the classical Gaussian PPCA formulation it is assumed that the latent random vectors being sought in the feature extraction, that characterise the data, are distributed according to a multivariate normal distributions, that is $\mathbf{X}_t \sim \mathcal{N}\left(0, \mathbb{I}_k \right)$  and $\bm{\epsilon}_t \sim \mathcal{N}\left(0, \sigma^2 \mathbb{I}_d\right)$ where $\sigma^2 \geq 0$. They are also assumed to be mutually independent and independent over time. Given the model in  \eqref{eq:ppca_model}, $\mathbf{Y}_t$ is also $d-$dimensional random vector which follows Gaussian distribution and is independent over time, with mean $\bm{\mu}$ and the covariance matrix $\mathbf{C} = \mathbf{W} \mathbf{W}^T + \sigma^2 \mathbb{I}_d$. 

In the Gaussian PPCA,  the objective is to estimate the projection matrix $\mathbf{W}$, the vector $\bm{\mu}$ and the scalar $\sigma^2$ given the marginal distribution of $\mathbf{Y}_t$ 
\begin{equation*}
\mathbf{Y}_t | \Psi \sim \mathcal{N} \left(\bm{\mu}, \mathbf{W}\mathbf{W}^T + \sigma^2 \mathbb{I}_d  \right),
\end{equation*}
for the static parameters $\Psi = \left[\mathbf{W}, \bm{\mu},\sigma^2 \right]$ of the model in  \eqref{eq:ppca_model}. Given $N$ realisations of $\mathbf{Y}_t |\Psi$, the marginal likelihood $L(\Psi;\mathbf{y}_{1:N}) : =  \pi_{\mathbf{Y}_{1:N}|\Psi}(\mathbf{y}_{1:N})$ of the model under the Gaussian case can be factorized as 
\begin{equation}\label{eq:EM_loglik}
L(\Psi;\mathbf{y}_{1:N}) = \big( 2\pi \big)^{- \frac{N}{2}} \Big| \mathbf{W}\mathbf{W}^T + \sigma^2\mathbb{I}_d  \Big|^{- \frac{N}{2}}  \exp \bigg\{ - \frac{1}{2}\sum_{t = 1}^N (\mathbf{y}_t - \bm{\mu} )(\mathbf{W}\mathbf{W}^T + \sigma^2\mathbb{I}_d)^{-1}(\mathbf{y}_t - \bm{\mu} )^T	\bigg\},
\end{equation}
where the marginalisation is undertaken with regards to $\mathbf{X}_t$ random vectors.  

In order to calculate the covariance matrix we have to estimate the parameters $\Psi$ and marginalise $\mathbf{X}_t$, achieved by the iterative procedure of the EM algorithm of \cite{Dempster1977}. The steps and derivation of the algorithm have been described in \cite{RubinThayer1982} or \cite{TippingBishop1999} where no missingness is assumed. The  EM algorithm finds Maximum Likelihood Estimation (MLE) estimates of parameters in probabilistic models when the direct optimisation of a likelihood function is not feasible. The MLE estimates of $\Psi \in \Omega$ are computed by maximising the marginalized likelihood function $L(\Psi;\mathbf{y}_{1:N})$ which in the Gaussian PPCA model is given in \eqref{eq:EM_loglik}. The space $\Omega$ represents the parameter space of $\Psi$.

In order to iteratively find a stationary point of the function in \eqref{eq:EM_loglik}, the EM algorithms exploits the artificial formulation of probabilistic models with regards to $\mathbf{Y}_t$ as incomplete information about the studied model with the latent vector $\mathbf{X}_t$ being assumed to be a missing part of the complete random vector $(\mathbf{Y}_t,\ \mathbf{X}_t$). The joint model is know for certain assumptions about the distribution of error and $\mathbf{X}_t$ and so the joint likelihood of $\mathbf{Y}_t$  and $\mathbf{X}_t$ is known given its probability density function $\pi_{\mathbf{Y}_t,\mathbf{X}_t|\Psi}(\mathbf{y}_t,\mathbf{x}_t)$. The random vector $\mathbf{Y}_t$ acts as an observable elements of this vector. 

Each iteration of the EM algorithm seeks maximizers of $L(\Psi;\mathbf{y}_{1:N})$ with respect to $\Psi$, and consists of two steps: an expectation step (E-step) and a maximisation step (M-step).  The E-step infers missing values or latent variables, $\mathbf{X}_{1:N}$, by finding their distribution given the known observed values $\mathbf{Y}_{1:N}$, and current estimates of parameters. It then integrates the joint log-likelihood or complete data likelihood with regards to the distribution of these latent random vectors. At the $i$-th iteration, the E-step specifies an estimate of complete information formulated as a function of parameters, that is
\begin{equation}\label{eq:EM_intro_Q}
Q(\Psi^*,\Psi):=\mathbb{E}_{\mathbf{X}_{1:N}|\mathbf{Y}_{1:N},\Psi^*} \Big[ \log \pi_{\mathbf{Y}_{1:N},\mathbf{X}_{1:N}|\Psi}(\mathbf{y}_{1:N},\mathbf{x}_{1:N})\Big] \text{ for } \Psi^* = \Psi^{(i)}.
\end{equation}
Next, the M-step maximises the marginalised complete data likelihood obtained from the E-step in \eqref{eq:EM_intro_Q} which is now just a function of observed $\mathbf{Y}_{1:N}$ and parameters $\Psi$
\begin{equation*}
\Psi^{(i+1)} = \argmax_{\Psi \in \Omega} Q(\Psi^{(i)},\Psi).
\end{equation*}
The key idea behind the steps of the algorithm is to use the following representation of the logarithm of the likelihood function, which exploits Bayes' rule applied to $\pi_{\mathbf{Y}_t,\mathbf{X}_t|\Psi}(\mathbf{y}_t,\mathbf{x}_t)$, that is
\begin{align} \label{eq:Em_loglik_formulation}
l(\Psi,\mathbf{y}_{1:N})& :  = \log L(\Psi;\mathbf{y}_{1:N}) =   \log \pi_{\mathbf{Y}_{1:N},\mathbf{X}_{1:N}|\Psi}(\mathbf{y}_{1:N},\mathbf{x}_{1:N}) - \log \pi_{\mathbf{X}_{1:N}|\mathbf{Y}_{1:N},\Psi}(\mathbf{x}_{1:N}).
\end{align} 
By noting that the term $l(\Psi,\mathbf{y}_{1:N})$ is invariant under the expectation with respect to the conditional distribution $\mathbf{X}_{1:N}|\mathbf{Y}_{1:N},\Psi^*$ we have that
\begin{align*} 
\log \pi_{\mathbf{Y}_{1:N}|\Psi}(\mathbf{y}_{1:N}) & = \mathbb{E}_{\mathbf{X}_{1:N}|\mathbf{Y}_{1:N},\Psi^*} \Big[ \log \pi_{\mathbf{Y}_{1:N},\mathbf{X}_{1:N}|\Psi}(\mathbf{y}_{1:N},\mathbf{x}_{1:N})\Big] - \mathbb{E}_{\mathbf{X}_{1:N}|\mathbf{Y}_{1:N},\Psi^*}\Big[\log \pi_{\mathbf{X}_{1:N}|\mathbf{Y}_{1:N},\Psi}(\mathbf{x}_{1:N})\Big],
\end{align*} 
for some $\Psi^* \in \Omega$. In their study, \cite{Dempster1977} shows that the maximizers of $\log \pi_{\mathbf{Y}_{1:N}|\Psi}(\mathbf{y}_{1:N})$ can be specified by iteratively optimising the first component of the representation in \eqref{eq:Em_loglik_formulation}, the expectation $Q(\Psi^*,\Psi)$ defined in \eqref{eq:EM_intro_Q}, using the steps of the EM algorithm. \cite{Dempster1977} shows that the sequence of the log-likelihood function evaluations, obtained iteratively in EM algorithm updates  of the parameters, denoted by $\Big\{ l^{(i)} \Big\}_{i\in \mathbb{N}_{0}}$ for $l^{(i)} = l(\Psi^{(i)},\mathbf{y}_{1:N})$ ,  is non-decreasing and consequently, each iteration of the EM algorithm results in the update of the parameter $\Psi$ which increases the loglikelihood in \eqref{eq:EM_loglik} or leaves it unchanged.  Therefore, the EM algorithm monotonically increases the likelihood function during each iteration. The studies of \cite{Dempster1977}, \cite{Wu1983} and \cite{Boyles1983} investigate additional assumptions  such as monotonicity of the sequence $\Big\{ l^{(i)} \Big\}_{i\in \mathbb{N}_{0}}$ or the smoothness of the objective function  which need to be satisfied in order to ensure that the sequence  $\Big\{ l^{(i)} \Big\}_{i\in \mathbb{N}_{0}}$ converges to a stationary point or, more specifically, a local or global maximum. 

\section{Generalized Skew-t Probabilistic Principal Component} \label{sec:GStSPPCA}
Following the concept of combining Skew-t and Grouped t-copula distributions discussed in  \cite{Church2012}, we introduce a PPCA model which allows one to develop tail dependence structures more flexible than the ones under Gaussian PPCA. Our novel proposed PPCA model will allow a greater degree of flexibility, especially when asymmetry is present in tail dependence between pairs or sub-sets of the multivariate random observation vectors.  We achieve this by developing novel extensions based on grouped and generalised Student-t PPCA models. 

Consider the stochastic representation of the Student-t random variables, which can be expressed as a scale mixture of a Gaussian random vector and Gamma variable, as formulated in \cite{Gupta1999} or \cite{Kotz2004}. Our extension to PPCA assumes representing the scale mixtures of $\mathbf{X}_t$ and $\bm{\epsilon}_t$ by independent Gamma random variables. Consequently, the vectors themselves are mutually independent and have individual dependency structures. The assumption provides the model with the additional flexibility to determine which component of the model impacts on marginal tails behaviour of the observation vector. We introduce the coefficient of skewness which specifies the strength of asymmetry in the distribution of the unobserved random vectors using the definition of the hyperbolic Skew-t distribution as introduced in \cite{BarndorffNielsen1981} or \cite{Demarta2005}. We choose this  definition of Skew-t distribution due to two reasons: simplicity of conditional distributions given by the stochastic representation; and the appealing property of the hyperbolic Skew-t distribution remarked in  \cite{Aas2006}, that the tails of corresponding distributions have different behaviours, polynomial  and exponential. Therefore,  the tails of the distribution can have different magnitude of heaviness. 

Lastly, we want to highlight that under appropriate assumptions on the deterministic parameters of the introduced models, the generalized PPCA framework reduces to various special cases such as the PPCA model under a Grouped t-copula distribution (when skewness is equal to zero), or the PPCA model under a Skew-t distribution (when degrees of freedom are equal per marginal). 

We consider two cases of the distributions: the first in which the random vectors $\mathbf{X}_t$ and $\bm{\epsilon}_t$ are independently and non-identically distributed, and the second in which they are identically and conditionally independently distributed over time -- the assumptions and the derivations of the latter model are given in \cite{Toczydlowska_thesis}.

\subsection{Independent Generalized Skew-t Probabilistic Principal Component} \label{ssec:PPCA_GStS_ind}
Let us denote  two mutually independent and identically distributed over time uniform random variables $S_{\epsilon,t}, S_{x,t} \sim \mathcal{U}\left(0,1 \right)$. For convenience of the notation, we denote $d$- and $k$-dimensional random vectors $\mathbf{U}_{t}$ and $\mathbf{V}_{t}$, respectively,
\begin{equation}\label{eq:latentProcesStochRep_U_V_ind}
\mathbf{U}_t  = \left( \frac{\chi_{\nu_{\epsilon}^1}^{-1} (S_{\epsilon,t})}{\nu_{\epsilon}^1}, \ldots, \frac{\chi_{\nu_{\epsilon}^d}^{-1} (S_{\epsilon,t})}{\nu_{\epsilon}^d} \right)_{1 \times d} \ \text{ and } \ \mathbf{V}_t = \left( \frac{\chi_{\nu_x^1}^{-1} (S_{x,t})}{\nu_x^1}, \ldots, \frac{\chi_{\nu_x^k}^{-1} (S_{x,t})}{\nu_x^k} \right)_{1 \times k},
\end{equation}
for vectors of non-negative real numbers $\bm{\nu}_{\epsilon} = \lbrace \nu_{\epsilon}^1, \ldots, \nu_{\epsilon}^d \rbrace$ and $\bm{\nu}_x = \lbrace \nu_x^1, \ldots, \nu_x^k \rbrace$ and $\chi_{\nu}^{-1}$ denoting the quantile function of the Chi-square distribution with $\nu$ degrees of freedom. Note, that each of the vectors  $\mathbf{U}_t$ and $\mathbf{V}_t$ follows a multivariate Gamma distribution, are mutually independent and independent in time. However, the components of the vectors are dependent. In fact, they are co-monotonic, since they are constructed as transformations of a common uniform variable at time $t$.

Let us denote $d$-dimensional and $k$-dimensional real valued model parameter vectors, $\bm{\delta}_\epsilon$ and $\bm{\delta}_x$. The stochastic representation of the $d$-dimensional error term $\bm{\epsilon}_t$ and the $k$-dimensional latent variable $\mathbf{X}_t$ is given by 
\begin{align}\label{eq:latentProcesStochRep_X_epsilon_ind}
 \mathbf{X}_t = \mathbf{V}_t^{-1} \circ \bm{\delta}_x + \sqrt{\mathbf{V}_t^{-1}} \circ \mathbf{Z}_{x, t} \ \text{ and } \ \bm{\epsilon}_t = \mathbf{U}_t^{-1} \circ \bm{\delta}_\epsilon + \sqrt{\sigma^2 \mathbf{U}_t^{-1}} \circ \mathbf{Z}_{\epsilon,t},
\end{align}
for $\mathbf{Z}_{x, t}$ and $\mathbf{Z}_{\epsilon, t}$ being mutually independent standard normal multivariate variables, $k$- and $d$-dimensional respectively, and being independent of $\mathbf{U}_t$ and $\mathbf{V}_t$ (or $S_{x,t}$ and $S_{\epsilon,t}$ likewise). The operator $\circ$ denotes the Hadamard product, that is, for two $d$-dimensional vectors $\mathbf{a}$ and $\mathbf{b}$, the product of thee vectors results in the $d$-dimensional vector $\mathbf{a} \circ \mathbf{b} = \big(a^1b^1,\ldots, a^db^d\big)$. Consequently, we have the following joint probability density function of the Generalized Skew-t PPCA (GSt PPCA) model given $N$ realisations of the random vectors at times $t = 1, \ldots, N$, that is
\begin{align} \label{eq:joinProbability_ind}
& \pi_{\mathbf{Y}_{1:N},\mathbf{X}_{1:N}, \mathbf{U}_{1:N}, \mathbf{V}_{1:N}|\Psi} ( \mathbf{y}_{1:N},\mathbf{x}_{1:N}, \mathbf{u}_{1:N}, \mathbf{v}_{1:N} )
= \prod_{t=1}^N \bigg\{\pi_{\mathbf{Y}_t|\mathbf{X}_t, \mathbf{U}_t, \mathbf{V}_t,\Psi} ( \mathbf{y}_t) \cdot \pi_{\mathbf{X}_t| \mathbf{U}_t, \mathbf{V}_t,\Psi} (\mathbf{x}_t) \cdot \pi_{\mathbf{U}_t | \Psi} ( \mathbf{u}_t) \cdot \pi_{\mathbf{V}_t | \Psi} ( \mathbf{v}_t) \bigg\},
\end{align}
for $\Psi = \left[ \mathbf{W},\ \bm{\mu}, \ \sigma^2,\ \bm{\delta}_\epsilon, \ \bm{\delta}_x, \ \bm{\nu}_{x}, \ \bm{\nu}_{\epsilon} \right]$ being a vector which consists of all static parameters in the model specified in \eqref{eq:ppca_model} under GSt PPCA assumptions. Recall that the distributions of  $\mathbf{Y}_t, \ \mathbf{X}_t$ and $\bm{\epsilon}_t$ are conditionally multivariate Gaussian, such that
\begin{align*}\notag
& \mathbf{Y}_t | \mathbf{X}_t, \mathbf{U}_t, \mathbf{V}_t, \Psi   \sim  \mathcal{N}\left(\bm{\mu} +  \bm{\delta}_\epsilon \mathbf{D}_{\epsilon,t}^{-1} +  \mathbf{X}_t \mathbf{W} ^T,  \sigma^2 \mathbf{D}_{\epsilon,t}^{-1} \right), \ \mathbf{X}_t | \mathbf{V}_t, \Psi    \sim  \mathcal{N}\left( \bm{\delta}_x \mathbf{D}_{x,t}^{-1}, \mathbf{D}_{x,t}^{-1} \right), \  \ \bm{\epsilon}_t | \mathbf{U}_t, \Psi    \sim  \mathcal{N}\left( \bm{\delta}_\epsilon \mathbf{D}_{\epsilon,t}^{-1}, \sigma^2 \mathbf{D}_{\epsilon,t}^{-1} \right), 
\end{align*}
where
\begin{equation*}
\mathbf{D}_{\epsilon,t} =  \begin{pmatrix}
U_t^1 &0&0 \\
0& \ddots &0 \\
0& 0& U_t^d
\end{pmatrix}_{d \times d} , \
\mathbf{D}_{x,t} =  \begin{pmatrix}
V_t^1 &0&0 \\
0& \ddots &0 \\
0&0 & V_t^k
\end{pmatrix}_{k \times k}.
\end{equation*}

\subsubsection{Special Cases: PPCA with Skew-t Distribution and Grouped-t Distribution}\label{sssec:PPCA_GStS_special}
Under appropriate assumptions on the deterministic parameters of the model, the GSt PPCA reduces to simpler special cases. We show how these special cases are defined using independent PPCA model assumptions from Subsection \ref{ssec:PPCA_GStS_ind} as the identical and conditionally distributed case proceeds analogously.  We can straightforwardly obtain the following three models
\begin{enumerate}[itemindent = 5em, leftmargin = *,font={\bfseries},label=Special Case \arabic*:]
\item $\bm{\delta}_\epsilon = \bm{\delta}_x = \mathbf{0}$.

Given the skewness coefficients equal to zero, the generalized PPCA model reduces to \textbf{Grouped-t GSt PPCA} case with the representation of the latent processes
\begin{align*}
\mathbf{X}_t = \sqrt{\mathbf{V}_t^{-1}} \circ \mathbf{Z}_{x, t}
\ \text{ and } \ \bm{\epsilon}_t = \sqrt{\sigma^2 \mathbf{U}^{-1}_t} \circ \mathbf{Z}_{\epsilon,t},
\end{align*}
results in the conditional distributions given by 
\begin{align*}\notag
& \mathbf{Y}_t | \mathbf{X}_t, \mathbf{U}_t, \mathbf{V}_t, \Psi    \sim  \mathcal{N}\left(\bm{\mu} +  \mathbf{X}_t \mathbf{W} ^T,  \sigma^2 \mathbf{D}_{\epsilon,t}^{-1} \right), \ \mathbf{X}_t | \mathbf{V}_t, \Psi    \sim  \mathcal{N}\left( \mathbf{0}, \mathbf{D}_{x,t}^{-1} \right) ,\ \ \bm{\epsilon}_t | \mathbf{U}_t, \Psi    \sim  \mathcal{N}\left( \mathbf{0}, \sigma^2 \mathbf{D}_{\epsilon,t}^{-1} \right). 
\end{align*}
 
\item $\nu_\epsilon^1 = \ldots = \nu_\epsilon^d = \nu_\epsilon $ and $\nu_x^1 = \ldots = \nu_x^k = \nu_x $.

When we assume that the marginal distributions of $\mathbf{X}_t$ and $\bm{\epsilon}_t$ are characterized by the same heavy tails, the scaling variable $\mathbf{U}_t$ and $\mathbf{V}_t$ become one dimensional as they components are identical, $U_t = \frac{\chi_{\nu_{\epsilon}}^{-1} (S_{\epsilon,t})}{\nu_{\epsilon}}$ and $V_t = \frac{\chi_{\nu_x}^{-1} (S_{x,t})}{\nu_x}$. Therefore, the generalized PPCA model reduces to \textbf{Skew-t GSt PPCA} case with the representation of the latent processes
\begin{align*}
 \mathbf{X}_t = V_t^{-1}  \bm{\delta}_x + \sqrt{V_t^{-1}} \mathbf{Z}_{x, t} \ \text{ and } \ \bm{\epsilon}_t = U_t^{-1} \bm{\delta}_\epsilon + \sqrt{\sigma^2 U_t^{-1}} \mathbf{Z}_{\epsilon,t},
\end{align*}
results in the conditional distributions given by 
\begin{align*}\notag
& \mathbf{Y}_t | \mathbf{X}_t, U_t, V_t, \Psi    \sim  \mathcal{N}\left(\bm{\mu} +  U_t^{-1}  \bm{\delta}_\epsilon +  \mathbf{X}_t \mathbf{W} ^T,  \sigma^2 U_t^{-1} \mathbb{I}_d \right), \ \mathbf{X}_t | V_t, \Psi   \sim  \mathcal{N}\left( V_t^{-1} \bm{\delta}_x , V_t^{-1} \mathbb{I}_k \right), \  \ \bm{\epsilon}_t | U_t, \Psi    \sim  \mathcal{N}\left( U_t^{-1} \bm{\delta}_\epsilon, \sigma^2 U_t^{-1} \right).
\end{align*}

\item $\bm{\delta}_\epsilon = \bm{\delta}_x = \mathbf{0}$,  $\nu_\epsilon^1 = \ldots = \nu_\epsilon^d = \nu_\epsilon $ and $\nu_x^1 = \ldots = \nu_x^k = \nu_x $.

When we assume that the marginal distributions of $\mathbf{X}_t$ and $\bm{\epsilon}_t$ are characterized by the same heavy tails and the skewness coefficients are equal to zero, the generalized PPCA model reduces to \textbf{Student-t GSt PPCA} case which separates the tail effect of the error term $\bm{\epsilon}_t$ and the latent process $\mathbf{X}_t$ on the observation vector $\mathbf{Y}_t$. Given the independent case from Subsection  \ref{ssec:PPCA_GStS_ind} , the representation of the latent processes
\begin{align*}
\mathbf{X}_t = \sqrt{V_t^{-1}}  \mathbf{Z}_{x, t} \ \text{ and } \ \bm{\epsilon}_t = \sqrt{\sigma^2 U_t^{-1}}  \mathbf{Z}_{\epsilon,t},
\end{align*}
results in the conditional distributions given by 
\begin{align*}
& \mathbf{Y}_t | \mathbf{X}_t, U_t, V_t, \Psi    \sim  \mathcal{N}\left(\bm{\mu} +   \mathbf{X}_t \mathbf{W} ^T,  \sigma^2 U_t^{-1} \mathbb{I}_d \right), \  \mathbf{X}_t | V_t, \Psi    \sim  \mathcal{N}\left( \mathbf{0}, V_t^{-1} \mathbb{I}_k \right), \  \
\bm{\epsilon}_t | U_t, \Psi   \sim  \mathcal{N}\left( \mathbf{0}, \sigma^2 U_t^{-1} \right).
\end{align*}
Recall that the above formulation is an additional variant to the Student-t PPCA model derived in  \cite{RidderFranc2003}, \cite{KhanDellaert2004} or \cite{ToczydlowskaPeters2018}. The difference lies in the stochastic representation of the random vectors $\bm{\epsilon}_t$ and $\mathbf{X}_t$. In contrary to the existing models, our formulation assumes the independence of the scaling random variables, $U_t$ and $V_t$, and therefore allows to model separately the tails of $\bm{\epsilon}_t$ and $\mathbf{X}_t$.
\end{enumerate}

\subsection{Formulation and Eigendecomposition of the Covariance Matrix of $\mathbf{Y}_t$}
\label{ssec:comments_on_eigen}
We want to define eigenvectors and eigenvalues of the covariance matrix of $\mathbf{Y}_t$ given its formulation according to the model in \eqref{eq:ppca_model} under the model assumption discussed in Section \ref{ssec:PPCA_GStS_ind}, that is
\begin{align*} \notag
\mathbf{Cov}_{\mathbf{Y}_t|\Psi}\big[ \mathbf{Y}_t\big] &: = \mathbb{E}_{\mathbf{Y}_t|\Psi}\big[ \mathbf{Y}_t^T\mathbf{Y}_t \big] - \mathbb{E}_{\mathbf{Y}_t|\Psi}\big[ \mathbf{Y}_t \big]^T \mathbb{E}_{\mathbf{Y}_t|\Psi}\big[ \mathbf{Y}_t\big]  = \mathbf{W} \mathbf{Cov}_{\mathbf{X}_t|\Psi}\big[ \mathbf{X}_t\big]\mathbf{W}^T + \mathbf{Cov}_{\bm{\epsilon}_t|\Psi}\big[ \bm{\epsilon}_t\big].
\end{align*}
We discuss the parametrisation of the covariance matrix of the observation process in terms of the static parameters of the model defined in Section \ref{ssec:PPCA_GStS_ind}. We highlight how assumptions of the special cases of the GSt PPCA model listed in Subsection \ref{sssec:PPCA_GStS_special} impact on an eigendecomposition of $\mathbf{Cov}_{\mathbf{Y}_t|\Psi}\big[ \mathbf{Y}_t\big] $ as well as the characteristics of the representation determined by latent vector $\mathbf{X}_t$. We show that 
\begin{enumerate}
\item[(1)] when skewness is not present in the model (Cases 1 \& 3), the eigenvectors of $\mathbf{Cov}_{\mathbf{Y}_t|\Psi}\big[ \mathbf{Y}_t\big]$ are defined by the singular vectors of $\mathbf{W}$. In the presence of the skewness, the eigenvectors are also defined by the parameters of skewness, $\bm{\delta}_x$ and $\bm{\delta}_\epsilon$;
\item[(2)] when skewness is present in the model, the expectations and covariance matrices of the random vectors $\mathbf{X}_t$ and  $\bm{\epsilon}_t$ are defined by the corresponding parameters of skewness and might not be zero mean or diagonal, respectively. Therefore, the new representation determined by $\mathbf{X}_t$ might not be orthogonal;
\item[(3)]  the parameters of degrees of freedom influence the eigenvalues of $\mathbf{Cov}_{\mathbf{Y}_t|\Psi}\big[ \mathbf{Y}_t\big]$.
\end{enumerate}

\subsubsection{Solution for GSt PPCA Without Grouped Heavy Tails of Marginals (Cases 2 \& 3) }
The latent variable and the error terms has the following stochastic representation
\begin{align*}
\mathbf{X}_t = V_t^{-1}  \bm{\delta}_x + \sqrt{V_t^{-1}} \mathbf{Z}_{x, t} \ \text{ and } \ \bm{\epsilon}_t = U_t^{-1} \bm{\delta}_\epsilon + \sqrt{\sigma^2 U_t^{-1}} \mathbf{Z}_{\epsilon,t}.
\end{align*}
Firstly, let us recall that since $V_t \sim \Gamma \big(\frac{v_x}{2},\frac{v_x}{2} \big)$ and $U_t \sim \Gamma \big(\frac{v_\epsilon}{2},\frac{v_\epsilon}{2} \big)$, the moments of their inverses are 
\begin{align*} \notag
 \mathbb{E}_{V_t|\Psi}\big[ V_t^{-1}  \big]  = \frac{v_x}{v_x-2} \text{ and }  \mathbf{Cov}_{V_t|\Psi}\big[ V_t^{-1}  \big]  = \frac{2v_x}{(v_x-2)(v_x-4)^2}, \\\notag
  \mathbb{E}_{U_t|\Psi}\big[ U_t^{-1}  \big]  = \frac{v_\epsilon}{v_\epsilon-2}  \text{ and }  \mathbf{Cov}_{U_t|\Psi}\big[ U_t^{-1}  \big]  = \frac{2v_\epsilon}{(v_\epsilon-2)(v_\epsilon-4)^2}.
\end{align*}
The moments of the marginal distributions of $\mathbf{X}_t$ and $\bm{\epsilon}_t$, which are needed to specify the covariance matrix of $\mathbf{Y}_t$, are then
\begin{align*}\notag
& \mathbb{E}_{\mathbf{X}_t|\Psi}\big[ \mathbf{X}_t \big] = \mathbb{E}_{V_t|\Psi}\Big[ \mathbb{E}_{\mathbf{X}_t|V_t, \Psi }\big[ \mathbf{X}_t \big]  \Big] =   \frac{v_x}{v_x-2} \bm{\delta}_x,  \\\notag
& \mathbf{Cov}_{\mathbf{X}_t|\Psi}\big[ \mathbf{X}_t \big] =  \mathbb{E}_{V_t|\Psi}\Big[ \mathbf{Cov}_{\mathbf{X}_t|V_t, \Psi }\big[ \mathbf{X}_t \big]  \Big] +  \mathbf{Cov}_{V_t|\Psi}\Big[ \mathbb{E}_{\mathbf{X}_t|V_t, \Psi }\big[ \mathbf{X}_t \big]  \Big] =   \frac{v_x}{v_x-2}  \mathbb{I}_k +   \bm{\delta}_x^T   \bm{\delta}_x  \frac{2v_x}{(v_x-2)^2(v_x-4)},
\end{align*}
and 
\begin{align*}\notag
& \mathbb{E}_{\bm{\epsilon}_t\Psi}\big[\bm{\epsilon}_t \big] = \mathbb{E}_{U_t|\Psi}\Big[ \mathbb{E}_{\bm{\epsilon}_t|U_t, \Psi }\big[ \bm{\epsilon}_t \big]  \Big] = \frac{v_\epsilon}{v_\epsilon-2}  \bm{\delta}_\epsilon,  \\\notag
& \mathbf{Cov}_{\bm{\epsilon}_t|\Psi}\big[ \bm{\epsilon}_t \big] =  \mathbb{E}_{U_t|\Psi}\Big[ \mathbf{Cov}_{\bm{\epsilon}_t|U_t, \Psi }\big[ \bm{\epsilon}_t \big]  \Big] +  \mathbf{Cov}_{U_t|\Psi}\Big[ \mathbb{E}_{\bm{\epsilon}_t|U_t, \Psi }\big[\bm{\epsilon}_t \big]  \Big] =  \sigma^2 \frac{v_\epsilon}{v_\epsilon-2}  \mathbb{I}_d  + \bm{\delta}_\epsilon^T \bm{\delta}_\epsilon  \frac{2v_\epsilon}{(v_\epsilon-2)^2(v_\epsilon-4)}.
\end{align*}
Hence, the covariance matrix of the observation process $\mathbf{Y}_t$ is given by 
\begin{align*} \notag
& \mathbf{Cov}_{\mathbf{Y}_t|\Psi}\big[ \mathbf{Y}_t\big]  = \mathbf{W} \bigg(   \frac{v_x}{v_x-2}  \mathbb{I}_k +   \bm{\delta}_x^T   \bm{\delta}_x  \frac{2v_x}{(v_x-2)^2(v_x-4)}  \bigg) \mathbf{W}^T + \sigma^2 \frac{v_\epsilon}{v_\epsilon-2}  \mathbb{I}_d  + \bm{\delta}_\epsilon^T \bm{\delta}_\epsilon  \frac{2v_\epsilon}{(v_\epsilon-2)^2(v_\epsilon-4)}.
\end{align*}
The $d\times k$ matrix $\mathbf{W}$ has the following singular value decomposition $\mathbf{W} = \mathbf{M}_{d \times d} \mathbf{D}_{d \times k} \mathbf{N}_{k \times k}$.  If $\bm{\delta}_\epsilon = \bm{\delta}_x = \mathbf{0}$ we can show that
\begin{align*} \notag
& \mathbf{Cov}_{\mathbf{Y}_t|\Psi}\big[ \mathbf{Y}_t\big]  = \mathbf{M} \bigg(  \mathbf{D}\mathbf{D}^T \frac{v_x}{v_x-2}  +  \sigma^2 \frac{v_\epsilon}{v_\epsilon-2}  \mathbb{I}_d  \bigg) \mathbf{M}^T.
\end{align*}
Since the matrix $\mathbf{D}$ has only non-zero diagonal elements, the eigenvectors of the covariance matrix correspond to the left singular-vectors of the matrix $\mathbf{W}$. However, when we relax the assumption about the parameters which control the skewness of the distribution, and allow $\bm{\delta}_\epsilon  \neq \mathbf{0}$  and $ \bm{\delta}_x \neq \mathbf{0}$ then the eigenvectors are again dependent on the outer products of the parameters of skewness. 

\subsubsection{Solution for GSt PPCA with Grouped Heavy Tails of Marginals (Cases 1 \& general GSt PPCA) }
The latent variable and the error terms have the following stochastic representation as in \eqref{eq:latentProcesStochRep_X_epsilon_ind} and their moments are determined as
\begin{align*}\notag
& \mathbb{E}_{\mathbf{X}_t|\Psi}\big[ \mathbf{X}_t \big] = \mathbb{E}_{\mathbf{V}_t|\Psi}\Big[ \mathbb{E}_{\mathbf{X}_t|\mathbf{V}_t, \Psi }\big[ \mathbf{X}_t \big]  \Big] =  \mathbb{E}_{\mathbf{V}_t|\Psi}\big[ \mathbf{V}_t^{-1} \big] \circ \bm{\delta}_x, \\\notag
& \mathbf{Cov}_{\mathbf{X}_t|\Psi}\big[ \mathbf{X}_t \big] =  \mathbb{E}_{\mathbf{V}_t|\Psi}\Big[ \mathbf{Cov}_{\mathbf{X}_t|\mathbf{V}_t, \Psi }\big[ \mathbf{X}_t \big]  \Big] +  \mathbf{Cov}_{\mathbf{V}_t|\Psi}\Big[ \mathbb{E}_{\mathbf{X}_t|\mathbf{V}_t, \Psi }\big[ \mathbf{X}_t \big]  \Big] =  \mathbb{E}_{\mathbf{V}_t|\Psi}\big[ \mathbf{V}_t^{-1}  \big] \circ \mathbb{I}_k +\Big(  \bm{\delta}_x^T  \bm{\delta}_x  \Big) \circ   \mathbf{Cov}_{\mathbf{V}_t|\Psi}\big[ \mathbf{V}_t^{-1} \big] ,
\end{align*}
and 
\begin{align*}\notag
& \mathbb{E}_{\bm{\epsilon}_t\|\Psi}\big[\bm{\epsilon}_t\big] = \mathbb{E}_{\mathbf{U}_t|\Psi}\Big[ \mathbb{E}_{\bm{\epsilon}_t\|\mathbf{U}_t, \Psi }\big[ \bm{\epsilon}_t\ \big]  \Big] =  \mathbb{E}_{\mathbf{U}_t|\Psi}\big[ \mathbf{U}_t^{-1} \big] \circ \bm{\delta}_\epsilon, \\\notag
& \mathbf{Cov}_{\bm{\epsilon}_t |\Psi}\big[ \bm{\epsilon}_t\big] =  \mathbb{E}_{\mathbf{U}_t|\Psi}\Big[ \mathbf{Cov}_{\bm{\epsilon}_t|\mathbf{U}_t, \Psi }\big[ \bm{\epsilon}_t\big]  \Big] +  \mathbf{Cov}_{\mathbf{U}_t|\Psi}\Big[ \mathbb{E}_{\bm{\epsilon}_t |\mathbf{U}_t, \Psi }\big[ \bm{\epsilon}_t \big]  \Big] =  \mathbb{E}_{\mathbf{U}_t|\Psi}\big[ \mathbf{U}_t^{-1}  \big] \circ \mathbb{I}_d +\Big(  \bm{\delta}_\epsilon^T  \bm{\delta}_\epsilon  \Big) \circ   \mathbf{Cov}_{\mathbf{U}_t|\Psi}\big[ \mathbf{U}_t^{-1} \big] ,
\end{align*}
Hence, the covariance matrix of the observation process $\mathbf{Y}_t$ is given by 
\begin{align*} \notag
& \mathbf{Cov}_{\mathbf{Y}_t|\Psi}\big[ \mathbf{Y}_t\big]  = \mathbf{W} \bigg(  \mathbb{E}_{\mathbf{V}_t|\Psi}\big[ \mathbf{V}_t^{-1}  \big] \circ \mathbb{I}_k +\Big(  \bm{\delta}_x^T  \bm{\delta}_x  \Big) \circ   \mathbf{Cov}_{\mathbf{V}_t|\Psi}\big[ \mathbf{V}_t^{-1} \big]  \bigg) \mathbf{W}^T +\mathbb{E}_{\mathbf{U}_t|\Psi}\big[ \mathbf{U}_t^{-1}  \big] \circ \mathbb{I}_d +\Big(  \bm{\delta}_\epsilon^T  \bm{\delta}_\epsilon  \Big) \circ   \mathbf{Cov}_{\mathbf{U}_t|\Psi}\big[ \mathbf{U}_t^{-1} \big] ,
\end{align*}
that is defined by 
\begin{align*}
& \mathbb{E}_{\mathbf{V}_t|\Psi}\big[ \mathbf{V}_t^{-1}  \big]_{1 \times k} = \mathbb{E}_{S_{x,t}|\Psi}\big[T_x( s_{x,t})^{-1}   \big] = \bigg(  \frac{v^1_x}{v^1_x-2}  , \ldots ,  \frac{v^k_x}{v^k_x-2}   \bigg)_{1 \times k } ,  \\
& \mathbb{E}_{\mathbf{U}_t|\Psi}\big[ \mathbf{U}_t^{-1}  \big] _{1 \times d} =  \mathbb{E}_{S_{\epsilon,t}|\Psi}\big[T_\epsilon( s_{\epsilon,t})^{-1}   \big]  =  \bigg(  \frac{v^1_\epsilon}{v^1_\epsilon-2}  , \ldots ,  \frac{v^d_\epsilon}{v^d_\epsilon-2}   \bigg)_{1 \times d }.
\end{align*}
Evaluating this requires solving the following integration problems
\begin{align*}
 \mathbf{Cov}_{\mathbf{V}_t|\Psi}\big[ \mathbf{V}_t^{-1}  \big]_{k \times k } &  =  \int_0^1 \Big( T_x( s_{x,t})^{-1} - \mathbb{E}_{S_{x,t}|\Psi}\big[ T_x( s_{x,t})^{-1} \big]  \Big)^T  \Big( T_x( s_{x,t})^{-1} - \mathbb{E}_{S_{x,t}|\Psi}\big[ T_x( s_{x,t})^{-1} \big]  \Big) \ d s_{x,t},   \\
 \mathbf{Cov}_{\mathbf{U}_t|\Psi}\big[ \mathbf{U}_t^{-1}  \big] _{d \times d}&  =  \int_0^1 \Big( T_x( s_{\epsilon,t})^{-1} - \mathbb{E}_{S_{\epsilon,t}|\Psi}\big[ T_\epsilon( s_{\epsilon,t})^{-1} \big]  \Big)^T  \Big( T_\epsilon( s_{\epsilon,t})^{-1} - \mathbb{E}_{S_{\epsilon,t}|\Psi}\big[ T_\epsilon( s_{\epsilon,t})^{-1} \big]  \Big) \ d s_{\epsilon,t} ,  
\end{align*}
where we applied the Jacobian of the transformation and the  following relations between the probability density functions  for the random variables $S_{\epsilon,t}$ and $S_{x,t}$ defined in Subsection \ref{ssec:PPCA_GStS_ind}, that is
\begin{align*} \notag
& \pi_{\mathbf{U}_t| \Psi} (\mathbf{u}_t) = \pi_{S_{\epsilon,t}| \Psi} (s_{\epsilon,t}) \Big|\frac{\partial T_\epsilon( s_{\epsilon,t})}{\partial s_{\epsilon,t}}\Big|^{-1} = \mathbf{1}_{[0,1]} (s_{\epsilon,t})  \Big|\frac{\partial T_\epsilon( s_{\epsilon,t})}{\partial s_{\epsilon,t}}\Big|^{-1}, \\
& \pi_{\mathbf{V}_t| \Psi} (\mathbf{v}_t) = \pi_{S_{x,t}| \Psi} (s_{x,t}) \Big|\frac{\partial T_x( s_{x,t})}{\partial s_{x,t}}\Big|^{-1} = \mathbf{1}_{[0,1]} (s_{x,t})  \Big|\frac{\partial T_x( s_{x,t})}{\partial s_{x,t}}\Big|^{-1}.
\end{align*}
Let the $d\times k$ matrix $\mathbf{W}$ has the singular decomposition $\mathbf{W} = \mathbf{M}_{d \times d} \mathbf{D}_{d \times k} \mathbf{N}_{k \times k}$.  If $\bm{\delta}_\epsilon = \bm{\delta}_x = \mathbf{0}$ we can show that
\begin{align*} \notag
& \mathbf{Cov}_{\mathbf{Y}_t|\Psi}\big[ \mathbf{Y}_t\big]  = \mathbf{M} \bigg(  \mathbf{D}\mathbf{D}^T \circ \mathbb{E}_{S_{x,t}|\Psi}\big[T_x( S_{x,t})^{-1}   \big]  +  \sigma^2 \mathbb{E}_{S_{\epsilon,t}|\Psi}\big[T_\epsilon( S_{\epsilon,t})^{-1}   \big] \circ  \mathbb{I}_d  \bigg) \mathbf{M}^T.
\end{align*}
Since the matrix $\mathbf{D}$ has only non-zero diagonal elements, the eigenvectors of the covariance matrix correspond to the left singular-vectors of the matrix $\mathbf{W}$. However, when we relax the assumption about the parameters which control the skewness of the distribution, and allow that $\bm{\delta}_\epsilon  \neq \mathbf{0}$  and $ \bm{\delta}_x \neq \mathbf{0}$, then again the eigenvectors are again dependent on the outer products of the parameters of skewness but also are defined by  $\mathbf{Cov}_{\mathbf{U}_t|\Psi}\big[ \mathbf{U}_t^{-1}  \big] $ and $ \mathbf{Cov}_{\mathbf{V}_t|\Psi}\big[ \mathbf{V}_t^{-1}  \big]$ that are not necessary diagonal matrices.

\section{EM Algorithm for GSt PPCA  in the Presence of Missing Data}\label{ssec:EM_PPCA_GStS_ind}
Until now, we assumed that the data set, which we analyse, does not contain any missing observations.  We start with discussing the concept of missingness, and address the questions of  how such characteristics of the data can be incorporated.  To achieve this we introduce the new notation where the random vector $\mathbf{Y}_t $ is partitioned into two subvectors, one which contains observed values $\mathbf{Y}_t ^o$ and the second which indicates missing entries $\mathbf{Y}_t^m$, such that $\mathbf{Y}_t = \big[ \mathbf{Y}_t^o ,\mathbf{Y}^m_t \big]$. We denote $d_o$ as the number of observed elements of the vector $\mathbf{Y}_t$ and $d_m = d - d_o$ the number of missing entries at time $t$. The numbers $d_o$ and $d_m$ can vary over time.  

In the incomplete-data case related sections we denote by $\mathbf{W}_o$ and $\mathbf{W}_m$ the $d_o \times  k$ and $d_m \times k$ non-square submatrices of $\mathbf{W}$ with corresponding rows to the elements of the vector $\mathbf{Y}_t$ which are observed and missing, respectively. In general, by lower index $o$ and $m$, we further refer to the elements of some objects corresponding to observed and missing values of $\mathbf{Y}_t$, respectively. 

Let us define the random vector $\mathbf{R}_t$ which indicates which entries of $\mathbf{Y}_t$ are missing and denotes them by $1$, otherwise $0$. Recall, that a single observation consists of the pair $\left[ \mathbf{Y}_t^o,\mathbf{R}_t \right]$ with distribution parameters $\left[ \Psi, \Theta \right]$ respectively. We assume the parameters to be distinct. The likelihood of parameters is proportional to the conditional probability  $\mathbf{Y}_t^o,\mathbf{R}_t |  \Psi, \Theta  $ that is
\begin{equation}\label{eq:MissingIntegral}
\pi_{\mathbf{Y}_t^o, \mathbf{R}_t|\Psi, \Theta} \left( \mathbf{y}_t^o,\mathbf{r}_t  \right) = \int \pi_{\mathbf{Y}_t^o, \mathbf{Y}_t^m,\mathbf{R}_t |  \Psi, \Theta } \left( \mathbf{y}_t^o, \mathbf{y}_t^m,\mathbf{r}_t \right) d \mathbf{y}^m_t = \int \pi_{\mathbf{R}_t | \mathbf{Y}_t, \Psi, \Theta} \left( \mathbf{r}_t \right) \pi_{ \mathbf{Y}_t| \Psi, \Theta} \left( \mathbf{y}_t \right) d \mathbf{y}^m_t .
\end{equation}
In our study, we assume the pattern of missing data to be MAR - missing at random as defined in  \cite{Little2002}. The assumptions imposes the indicator variable $\mathbf{R}_t$ to be independent of the value of missing data. Then the vector $\mathbf{Y}_t$ which is MAR satisfies $\pi_{\mathbf{R}_t | \mathbf{Y}_t, \Psi}(\mathbf{r}_t ) = \pi_{\mathbf{R}_t | \mathbf{Y}_t^o, \Psi} (\mathbf{r}_t )$
giving
\begin{equation*}
\pi_{\mathbf{Y}_t^o,\mathbf{R}_t | \Psi, \Theta } \left( \mathbf{y}_t^o \right) = \pi_{\mathbf{R}_t | \mathbf{Y}_t^o, \Theta} \left( \mathbf{r}_t  \right) \int \pi_{\mathbf{Y}_t| \Psi} \left( \mathbf{y}_t \right) d \mathbf{y}^m_t  = \pi_{\mathbf{R}_t | \mathbf{Y}_t^o, \Theta} \left( \mathbf{r}_t \right) \pi_{\mathbf{Y}^o_t| \Psi} \left( \mathbf{y}^o_t\right).
\end{equation*}
Under the MAR assumption, the estimation of $\Psi$ via maximum likelihood of the joint distribution  $\mathbf{Y}_t^o,\mathbf{R}_t | \Psi, \Theta$ is equivalent to the maximisation of the likelihood of the marginal distribution $\mathbf{Y}_t^o| \Psi$. Hence, we do not worry about the distribution of the indicator random variable $\mathbf{R}_t$ and the joint distribution of $\mathbf{Y}_t^o$ and $\mathbf{R}_t$. If the assumption about MAR does not hold, one needs to solve the integral from  \eqref{eq:MissingIntegral} in order to maximize the joint likelihood in the corresponding EM algorithm.

\subsection{The E-step and M-step of EM algorithm for GSt PPCA}
The two iterative steps of the EM algorithm for Generalized Skew-t PPCA which jointly maximize the expected log-likelihood of the observed and hidden variables are the following

\begin{enumerate}[label= \textbf{Step \arabic*:},start=1,leftmargin=0.6in]
\item Expectation Step (E-step) \mbox{}\\
We calculate the expectation of the conditional distribution $\mathbf{Y}_{1:N}^m,\mathbf{X}_{1:N},\mathbf{U}_{1:N},\mathbf{V}_{1:N}|\mathbf{Y}_{1:N}^o,\Psi$ over the joint distribution likelihood function of the model \eqref{eq:ppca_model}. Given $N$ realisations of the variables, the expectation is a function of two vectors with static parameters $\Psi = [\mathbf{W}, \bm{\mu}, \sigma^2, \bm{\delta}_\epsilon,\bm{\delta}_x]$ and $\Psi^* = [\mathbf{W}^*, \bm{\mu}^*, \sigma^{*2},\bm{\delta}_\epsilon^*,\bm{\delta}_x^*]$, that is  
\begin{equation*}
Q\big(\Psi, \Psi^* \big) = \mathbb{E}_{\mathbf{Y}_{1:N}^m,\mathbf{X}_{1:N},\mathbf{U}_{1:N},\mathbf{V}_{1:N}|\mathbf{Y}_{1:N}^o,\Psi} \Big[\log \pi_{\mathbf{Y}_{1:N},\mathbf{X}_{1:N},\mathbf{U}_{1:N},\mathbf{V}_{1:N}|\Psi^*} \big(\mathbf{Y}_{1:N},\mathbf{X}_{1:N},\mathbf{U}_{1:N},\mathbf{V}_{1:N}\big)\Big],
\end{equation*}
\item Maximisation Step (M-step)\mbox{}\\
We update the vector of static parameters $\Psi$ via maximisation of the resulting $Q_{GSt,ind}$ function with respect to the vector $\Psi^*$, that is
\begin{equation*}
\hat{\Psi}^* = \argmax_{\Psi^*} Q\big(\Psi, \Psi^* \big) .
\end{equation*}
\end{enumerate}
The $Q$ function of the E-step for the Generalized Skew-t PPCA is provided in Theorem \ref{th:Estep_GStS_missing_ind} whereas the derivation of its maximizers is given in Theorem \ref{th:Mstep_explicitFormulas_missing_ind}. The step of the EM alogirthm for special cases of the GSt PPCA are provided in \cite{Toczydlowska_thesis}. 

\begin{theorem}\label{th:Estep_GStS_missing_ind}
Consider observation vector $\mathbf{Y}_t$ modelled according to  \eqref{eq:ppca_model} with the latent processes $\bm{\epsilon}_t, \ \mathbf{X}_t,  \ \mathbf{U}_t$ and $\mathbf{V}_t$ following the assumptions of GSt PPCA given in \eqref{eq:latentProcesStochRep_U_V_ind} and  \eqref{eq:latentProcesStochRep_X_epsilon_ind}. Given $N$ realisations of the observed entries of the random vector $\mathbf{Y}_t^o$, denoted by $\mathbf{y}_{1:N}^o:= \big[\mathbf{y}_{1}^o, \ldots, \mathbf{y}_{N}^o \big]$, the E-step of the Expectation-Maximisation algorithm for Generalized Skew-t PPCA in the incomplete data setting is specified as follows
\begin{align*} 
Q(\Psi, \Psi^* )& = \frac{1}{\pi_{\mathbf{Y}_{1:N}^o| \Psi}(\mathbf{y}_{1:N}^o)} \sum_{t=1}^N \Big\{ I_1(\mathbf{y}_{t}^o;\Psi,\Psi^*) \prod_{s=1, s \neq t}^N  I_2(\mathbf{y}_s^o;\Psi) \Big\}. 
\end{align*}
The functions $I_1, \ I_2:\mathbb{R}^{d_o} \rightarrow \mathbb{R}$ are defined as 
\begin{align*}\notag
& I_1(\mathbf{y}_{t}^o;\Psi,\Psi^*) :=  \int_0^1 \int_0^1 \tilde{v}(\mathbf{y}_{t}^o,s_{\epsilon,t},s_{x,t};\Psi,\Psi^*) \  m(\mathbf{y}_{t}^o,s_{\epsilon,t},s_{x,t};\Psi) \ d s_{\epsilon,t} \ d s_{x,t} ,\\
& I_2(\mathbf{y}_{t}^o;\Psi) :=  \int_0^1 \int_0^1    m(\mathbf{y}_{t}^o,s_{\epsilon,t},s_{x,t};\Psi) \ d s_{\epsilon,t} \ d s_{x,t}. 
\end{align*}
where $\tilde{v}\big(\mathbf{y}_{t}^o,s_{\epsilon,t},s_{x,t};\Psi,\Psi^*\big):= v\big(\mathbf{y}_{t}^o,T_\epsilon(s_{\epsilon,t}),T_x(s_{x,t});\Psi,\Psi^*\big)$ specified in Lemma~\ref{lemma:integral_w_ind} in Appendix~\ref{appendix:proofs_GStS} and the function $m: \mathbb{R}^{d_o} \times [0,1]^2 \longrightarrow \mathbb{R}$ is given by 
\begin{align*}
m(\mathbf{y}_{t}^o,s_{\epsilon,t},s_{x,t};\Psi) = &  e^{ - \frac{1}{2} \bm{\delta}_x \big( \mathbf{D}_{x,t}^{-1} - \sigma^2  \mathbf{M}_t^{-1}\mathbf{W}^T\mathbf{D}_{\epsilon,t} \mathbf{N}_t^{-1}\mathbf{D}_{\epsilon,t} \mathbf{W}\mathbf{M}_t^{-1}  - \sigma^2  \mathbf{M}_t^{-1} \big)\bm{\delta}_x^T } \pi_{\mathbf{Y}_t^o |S_{\epsilon,t},S_{x,t},\Psi}(\mathbf{y}_t^o) \big(\sigma^2\big)^{\frac{k}{2}}  \Big| \mathbf{D}_{x,t} \Big|^{\frac{1}{2}}  \Big| \mathbf{M}_t\Big|^{- \frac{1}{2}} \Big| \mathbf{N}_t\Big|^{-\frac{1}{2}}.
\end{align*}
The probability density function $\pi_{\mathbf{Y}_t^o |S_{\epsilon,t},S_{x,t},\Psi}(\mathbf{y}_t^o)$ equals to the density function $\pi_{\mathbf{Y}_t^o |T_\epsilon(S_{\epsilon,t}),T_x(S_{x,t}),\Psi}(\mathbf{y}_t^o)$ specified in Lemma~\ref{lemma:tildePi_Ytmissing_Conditional_ind} in Appendix~\ref{appendix:proofs_GStS} where $\mathbf{M}_t = \sigma^2 \mathbf{D}_{x,t} + \mathbf{W}^T \mathbf{D}_{\epsilon,t} \mathbf{W}$ and $ \mathbf{N}_t = \mathbb{I}_d - \mathbf{W}\mathbf{M}_t^{-1} \mathbf{W}^T \mathbf{D}_{\epsilon,t}$.
\end{theorem}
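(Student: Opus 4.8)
The plan is to collapse the expectation defining $Q(\Psi,\Psi^*)$ into a sum of per–time–step double integrals by exploiting the across–time conditional independence already encoded in the joint density \eqref{eq:joinProbability_ind}. First I would write the conditional expectation explicitly as a normalised integral over the missing and latent blocks,
\[
Q(\Psi,\Psi^*) = \frac{1}{\pi_{\mathbf{Y}_{1:N}^o|\Psi}(\mathbf{y}_{1:N}^o)}\int \log\pi_{\mathbf{Y}_{1:N},\mathbf{X}_{1:N},\mathbf{U}_{1:N},\mathbf{V}_{1:N}|\Psi^*}(\cdots)\;\pi_{\mathbf{Y}_{1:N},\mathbf{X}_{1:N},\mathbf{U}_{1:N},\mathbf{V}_{1:N}|\Psi}(\cdots)\; d\big(\mathbf{y}_{1:N}^m,\mathbf{x}_{1:N},\mathbf{u}_{1:N},\mathbf{v}_{1:N}\big).
\]
By \eqref{eq:joinProbability_ind} both densities factorise as products over $t$, while the complete–data log-likelihood is the matching sum $\sum_{t=1}^N\log\pi_{\mathbf{Y}_t,\mathbf{X}_t,\mathbf{U}_t,\mathbf{V}_t|\Psi^*}(\cdots)$. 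Substituting and applying Fubini decouples the integral into
\[
\sum_{t=1}^N\Big(\!\int \log\pi_{\mathbf{Y}_t,\ldots|\Psi^*}\,\pi_{\mathbf{Y}_t,\ldots|\Psi}\,d(\mathbf{y}_t^m,\mathbf{x}_t,\mathbf{u}_t,\mathbf{v}_t)\Big)\prod_{s\neq t}\Big(\!\int \pi_{\mathbf{Y}_s,\ldots|\Psi}\,d(\mathbf{y}_s^m,\mathbf{x}_s,\mathbf{u}_s,\mathbf{v}_s)\Big),
\]
so it remains to identify the two kinds of per–time factor.

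For each factor I would integrate in two layers. Using the stochastic representations \eqref{eq:latentProcesStochRep_U_V_ind}–\eqref{eq:latentProcesStochRep_X_epsilon_ind}, change variables from $(\mathbf{u}_t,\mathbf{v}_t)$ to the driving uniforms $(s_{\epsilon,t},s_{x,t})$ via the corresponding Jacobian; conditionally on these, $\mathbf{Y}_t$ and $\mathbf{X}_t$ are jointly Gaussian, so $\mathbf{Y}_t^m$ and $\mathbf{X}_t$ are eliminated in closed form by Gaussian marginalisation and completing the square. For an index $s\neq t$ the resulting object is just $\pi_{\mathbf{Y}_s^o|\Psi}(\mathbf{y}_s^o)$, which, written as a double integral over $[0,1]^2$, is exactly $I_2(\mathbf{y}_s^o;\Psi)$; here the integrand $m$ is the joint density $\pi_{\mathbf{Y}_t^o,S_{\epsilon,t},S_{x,t}|\Psi}$ re-expressed through $\mathbf{M}_t$ and $\mathbf{N}_t$, which is the content of Lemma~\ref{lemma:tildePi_Ytmissing_Conditional_ind} together with the $\mathbf{X}_t$–integration that produces the $(\sigma^2)^{k/2}|\mathbf{D}_{x,t}|^{1/2}|\mathbf{M}_t|^{-1/2}|\mathbf{N}_t|^{-1/2}$ and $\bm{\delta}_x$–exponential factors. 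For the distinguished index $t$, the inner $(\mathbf{y}_t^m,\mathbf{x}_t)$–integral of $\log\pi_{\mathbf{Y}_t,\ldots|\Psi^*}$ against the conditional Gaussian law of $(\mathbf{Y}_t^m,\mathbf{X}_t)$ given $(\mathbf{Y}_t^o,S_{\epsilon,t},S_{x,t},\Psi)$ is precisely $\tilde v(\mathbf{y}_t^o,s_{\epsilon,t},s_{x,t};\Psi,\Psi^*)$ of Lemma~\ref{lemma:integral_w_ind}, so the $t$–factor equals $\int_0^1\!\int_0^1 \tilde v\, m\; ds_{\epsilon,t}\,ds_{x,t}=I_1(\mathbf{y}_t^o;\Psi,\Psi^*)$. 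Reassembling the sum and the normalising constant then gives the stated expression for $Q(\Psi,\Psi^*)$.

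The main obstacle is the evaluation of $\tilde v$, i.e.\ the conditional expectation of the complete–data log-likelihood of \eqref{eq:ppca_model} given $(\mathbf{Y}_t^o,S_{\epsilon,t},S_{x,t})$, which is why that computation is isolated as Lemma~\ref{lemma:integral_w_ind}. It requires (i) identifying the joint conditional Gaussian distribution of $(\mathbf{Y}_t^m,\mathbf{X}_t)$ given $(\mathbf{Y}_t^o,\mathbf{U}_t,\mathbf{V}_t)$ by Schur–complement and Woodbury manipulations of the partitioned covariance — the source of $\mathbf{M}_t=\sigma^2\mathbf{D}_{x,t}+\mathbf{W}^T\mathbf{D}_{\epsilon,t}\mathbf{W}$ and $\mathbf{N}_t=\mathbb{I}_d-\mathbf{W}\mathbf{M}_t^{-1}\mathbf{W}^T\mathbf{D}_{\epsilon,t}$ — and (ii) taking expectations of the quadratic forms and log-determinant terms arising from the four factors $\log\pi_{\mathbf{Y}_t|\mathbf{X}_t,\mathbf{U}_t,\mathbf{V}_t,\Psi^*}$, $\log\pi_{\mathbf{X}_t|\mathbf{V}_t,\Psi^*}$, $\log\pi_{\mathbf{U}_t|\Psi^*}$ and $\log\pi_{\mathbf{V}_t|\Psi^*}$, keeping careful track of the skewness parameters $\bm{\delta}_\epsilon,\bm{\delta}_x$. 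Once those two lemmas are in hand, the remainder — the factorisation across time, the Fubini step, and the change of variables to the uniforms — is routine bookkeeping.
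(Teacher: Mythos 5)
Your proposal follows essentially the same route as the paper's proof: exploit the temporal factorisation of the joint density to decouple $Q$ into $\sum_t I_1(\mathbf{y}_t^o)\prod_{s\neq t}I_2(\mathbf{y}_s^o)$, integrate out $\mathbf{X}_t$ and $\mathbf{Y}_t^m$ in closed form against their conditional Gaussian laws (the content of Lemmas~\ref{lemma:tildePi_Xt_Yt_ind}, \ref{lemma:w_ind}, \ref{lemma:tildePi_Ytmissing_Conditional_ind} and \ref{lemma:integral_w_ind}, which yield $\mathbf{M}_t$, $\mathbf{N}_t$ and $\tilde v$), and reduce the remaining $(\mathbf{u}_t,\mathbf{v}_t)$-integrals to the unit square via the change of variables to the driving uniforms. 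The only cosmetic difference is that you integrate $(\mathbf{Y}_t^m,\mathbf{X}_t)$ jointly whereas the paper does it sequentially ($\mathbf{x}_t$ first, then $\mathbf{y}_t^m$), which is immaterial.
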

\begin{proof}
The proof of Theorem \ref{th:Estep_GStS_missing_ind} is given in  Subsection \ref{proof:th_Estep_GStS_missing_ind} in Appendix~\ref{appendix:proofs_GStS}.
\end{proof}

\begin{theorem} \label{th:Mstep_explicitFormulas_missing_ind}
The solution to the system of equation $\nabla_{\Psi^*} Q(\Psi, \Psi^* ) = \mathbf{0}$  which determines the maximizers of the function $Q$ from Theorem \ref{th:Estep_GStS_missing_ind} with respect to the parameters $\bm{\mu}^*, \bm{\delta}_\epsilon^* , \bm{\delta}_x^*, \sigma^{*2}$ are given by explicit formulas defined by two-dimensional integration problems on the hypercube $[0,1]^2$ as follows
{\footnotesize 
\begin{align*}
\begin{cases}
& \bm{\mu}^*  = \bigg[  \mathbf{A}_6 (\mathbf{y}_{1:N}^o;\Psi) -  \mathbf{A}_{10} (\mathbf{y}_{1:N}^o;\Psi,\Psi^*)  - \bm{\delta}_\epsilon^*  A_0 (\mathbf{y}_{1:N}^o;\Psi) + \bm{\mu}  \mathbf{A}_{13} (\mathbf{y}_{1:N}^o;\Psi,\Psi^*) + \Big(\bm{\delta}_\epsilon \mathbf{W}  -  \sigma^2 \bm{\delta}_x\Big) \mathbf{A}_{12} (\mathbf{y}_{1:N}^o;\Psi,\Psi^*)^T \bigg]\\
& \hspace{1cm} \times \mathbf{A}_1(\mathbf{y}_{1:N}^o;\Psi)^{-1}, \\\notag
& \bm{\delta}_\epsilon^* =   \bigg[ \mathbf{A}_5(\mathbf{y}_{1:N}^o;\Psi) - \bm{\mu}^*A_0(\mathbf{y}_{1:N}^o;\Psi) - \mathbf{A}_8(\mathbf{y}_{1:N}^o;\Psi)\mathbf{W}^{*T}
 +  \bm{\mu} \mathbf{A}_{11}(\mathbf{y}_{1:N}^o;\Psi)  \mathbf{W}^{*T}+ \big( \bm{\delta}_\epsilon \mathbf{W} -  \sigma^2 \bm{\delta}_x \big) \mathbf{A}_4(\mathbf{y}_{1:N}^o;\Psi) \mathbf{W}^{*T} \bigg] \\
 & \hspace{1cm}\mathbf{A}_2(\mathbf{y}^{o}_{1:N};\Psi)^{-1} ,\\\notag
&\bm{\delta}_x^* = \bigg[ \mathbf{A}_8(\mathbf{y}_{1:N}^o)- \bm{\mu} \mathbf{A}_{11}(\mathbf{y}_{1:N}^o) - \big( \bm{\delta}_\epsilon \mathbf{W} -  \sigma^2 \bm{\delta}_x \big) \mathbf{A}_4(\mathbf{y}_{1:N}^o) \bigg] \mathbf{A}_3 (\mathbf{y}_{1:N},\Psi)^{-1}, \\\notag
& \sigma^{*2} = \frac{1}{d A_0(\mathbf{y}_{1:N}^o;\Psi)} \bigg[  A_{20}(\mathbf{y}_{1:N}^o;\Psi)  + \bm{\mu}^* \mathbf{A}_{1}(\mathbf{y}_{1:N}^o;\Psi) \bm{\mu}^{*T} + \bm{\delta}_\epsilon^{*} \mathbf{A}_{2}(\mathbf{y}_{1:N}^o;\Psi)\bm{\delta}_\epsilon^{*T} + 2   \mathbf{A}_{9}(\mathbf{y}_{1:N}^o;\Psi,\Psi^*) \bm{\mu}^T   \\\notag
& \hspace{1cm} + 2 \Big( \mathbf{A}_{10}(\mathbf{y}_{1:N}^o;\Psi,\Psi^*) -  \mathbf{A}_{6}(\mathbf{y}_{1:N}^o;\Psi)- \bm{\mu} \mathbf{A}_{13}(\mathbf{y}_{1:N}^o;\Psi,\Psi^*)\Big) \bm{\mu}^{*T} - 2  A_{21}(\mathbf{y}_{1:N}^o;\Psi,\Psi^*) +   \sigma^2 \tr \Big\{\mathbf{A}_{12}(\mathbf{y}_{1:N}^o;\Psi)^T \mathbf{W}^* \Big\}  \\\notag
&\hspace{1cm}  +   \tr \Big\{\mathbf{A}_{17}(\mathbf{y}_{1:N}^o;\Psi)^T \mathbf{W}^* \Big\} - \tr \Big\{\mathbf{A}_{18.1}(\mathbf{y}_{1:N}^o;\Psi)^T \mathbf{W}^* \Big\} - \tr \Big\{\mathbf{A}_{18.2}(\mathbf{y}_{1:N}^o;\Psi)^T \mathbf{W}^* \Big\} + \tr \Big\{\mathbf{A}_{19}(\mathbf{y}_{1:N}^o;\Psi)^T \mathbf{W}^* \Big\}   \\\notag
&\hspace{1cm} + 2 \Big( \bm{\mu}^{*}  A_{0}(\mathbf{y}_{1:N}^o;\Psi)-  \mathbf{A}_{5}(\mathbf{y}_{1:N}^o;\Psi) + \mathbf{A}_{8}(\mathbf{y}_{1:N}^o;\Psi) \mathbf{W}^{*T} -  \bm{\mu}  \mathbf{A}_{11}(\mathbf{y}_{1:N}^o;\Psi) \mathbf{W}^{*T} \Big) \bm{\delta}_\epsilon^{*T} \\\notag
&\hspace{1cm}+ 2 \Big(  \mathbf{A}_{7}(\mathbf{y}_{1:N}^o;\Psi,\Psi^*)  -  \bm{\mu}^{*} \mathbf{A}_{12}(\mathbf{y}_{1:N}^o;\Psi,\Psi^*) -   \bm{\delta}_\epsilon^{*}  \mathbf{W}^{*} \mathbf{A}_{4}(\mathbf{y}_{1:N}^o;\Psi) \Big) \big( \bm{\delta}_\epsilon \mathbf{W} - \sigma^2 \bm{\delta}_x \big)^T  \bigg], \\\notag
& f(\mathbf{W}^*;\Psi,\Psi^*) = \frac{\partial Q(\Psi,\Psi^*)}{\partial \mathbf{W}^*} = \mathbf{0}.
\end{cases}
\end{align*}
}
The solutions with respect to  $\bm{\mu}^*, \ \bm{\delta}_\epsilon^*$ and $\sigma^{*2}$ are linear function of the parameter $\mathbf{W}^*$. The maximizer of the function $Q$ with respect to $\mathbf{W}^*$ is determined by the function $f: \mathbb{R}^{d \times k} \rightarrow  \mathbb{R}^{d \times k} $ 
\begin{align*}
f(\mathbf{W}^*;\Psi,\Psi^*) &=  \mathbf{A}_{14} (\mathbf{y}_{1:N}^o;\Psi) - \mathbf{A}_{15} (\mathbf{y}_{1:N}^o;\Psi,\Psi^*) - \bm{\delta}_\epsilon^{*T} \mathbf{A}_{8} (\mathbf{y}_{1:N}^o;\Psi)- \mathbf{A}_{16} (\mathbf{y}_{1:N}^o;\Psi,\Psi^*)  + \bm{\delta}_\epsilon^{*T}\bm{\mu} \mathbf{A}_{11} (\mathbf{y}_{1:N}^o;\Psi) \\\notag
&\hspace{0.1cm}+ \bm{\delta}_\epsilon^{*T} \big(\bm{\delta} \mathbf{W} - \sigma^2 \mathbf{W} \big) \mathbf{A}_{4} (\mathbf{y}_{1:N}^o;\Psi)- \mathbf{A}_{17} (\mathbf{y}_{1:N}^o;\Psi,\Psi^*) +  \mathbf{A}_{18.1} (\mathbf{y}_{1:N}^o;\Psi,\Psi^*) + \mathbf{A}_{18.2} (\mathbf{y}_{1:N}^o;\Psi,\Psi^*) \\\notag
&\hspace{0.1cm}- \mathbf{A}_{19} (\mathbf{y}_{1:N}^o;\Psi,\Psi^*) - \sigma^2 \mathbf{A}_{12} (\mathbf{y}_{1:N}^o;\Psi,\Psi^*)^T.
\end{align*}
The function $f$ is linear with respect to the parameters $\bm{\mu}^*$ and $\bm{\delta}_\epsilon^*$. The two-dimensional integrals $\mathbf{A}_i$ on the hypercube $[0,1]^2$, for $i \in \Big\{ 0, \ldots, 21 \Big\}$, are defined in Subsection~\ref{proof:cor_Mstep_explicitFormulas_missing_ind} in Appendix~\ref{appendix:proofs_GStS}.
\end{theorem}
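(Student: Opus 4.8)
The plan is to compute $\nabla_{\Psi^*} Q(\Psi,\Psi^*)$ from Theorem~\ref{th:Estep_GStS_missing_ind}, set it to zero, and solve the resulting stationarity system in closed form. The key simplification is that in $Q(\Psi,\Psi^*)=\frac{1}{\pi_{\mathbf Y^o_{1:N}|\Psi}(\mathbf y^o_{1:N})}\sum_{t=1}^N I_1(\mathbf y^o_t;\Psi,\Psi^*)\prod_{s\neq t}I_2(\mathbf y^o_s;\Psi)$ the only object carrying a dependence on $\Psi^*$ is the factor $\tilde v$ inside each $I_1$; the weight $m(\cdot;\Psi)$, the terms $I_2(\cdot;\Psi)$ and the normaliser $\pi_{\mathbf Y^o_{1:N}|\Psi}$ are all frozen at the current iterate. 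Differentiating under the two-dimensional integral therefore gives $\nabla_{\Psi^*}Q=\frac{1}{\pi_{\mathbf Y^o_{1:N}|\Psi}(\mathbf y^o_{1:N})}\sum_{t=1}^N\big(\prod_{s\neq t}I_2(\mathbf y^o_s;\Psi)\big)\int_0^1\int_0^1\big(\nabla_{\Psi^*}\tilde v\big)\,m\,ds_{\epsilon,t}\,ds_{x,t}$, so the whole problem collapses to differentiating $\tilde v$ and folding the result back through the quadrature and the time sum.

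Next I would invoke the explicit form of $\tilde v$ from Lemma~\ref{lemma:integral_w_ind}: it is the inner conditional expectation of the complete-data log density $\log\pi_{\mathbf Y_t,\mathbf X_t,\mathbf U_t,\mathbf V_t|\Psi^*}$ under $\mathbf Y^m_t,\mathbf X_t\mid\mathbf Y^o_t,S_{\epsilon,t},S_{x,t},\Psi$. By the factorisation \eqref{eq:joinProbability_ind} this log density splits as $\log\pi_{\mathbf Y_t|\mathbf X_t,\mathbf U_t,\mathbf V_t,\Psi^*}+\log\pi_{\mathbf X_t|\mathbf V_t,\Psi^*}+\log\pi_{\mathbf U_t|\Psi^*}+\log\pi_{\mathbf V_t|\Psi^*}$, and only the first two summands depend on $(\bm\mu^*,\bm\delta_\epsilon^*,\bm\delta_x^*,\sigma^{*2},\mathbf W^*)$. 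These two are multivariate Gaussian log densities --- one in $\mathbf Y_t$ with mean $\bm\mu^*+\bm\delta_\epsilon^*\mathbf D_{\epsilon,t}^{-1}+\mathbf X_t\mathbf W^{*T}$ and covariance $\sigma^{*2}\mathbf D_{\epsilon,t}^{-1}$, the other in $\mathbf X_t$ with mean $\bm\delta_x^*\mathbf D_{x,t}^{-1}$ and covariance $\mathbf D_{x,t}^{-1}$ --- so standard matrix calculus (gradients of quadratic forms and of $\log|\cdot|$, plus the derivative in $\sigma^{*2}$) gives $\nabla_{\bm\mu^*}\tilde v$, $\nabla_{\bm\delta_\epsilon^*}\tilde v$, $\nabla_{\bm\delta_x^*}\tilde v$, $\partial_{\sigma^{*2}}\tilde v$ and $\nabla_{\mathbf W^*}\tilde v$ as affine or quadratic functions of the first and second conditional moments of $\mathbf X_t$ and $\mathbf Y_t$ given $\mathbf Y^o_t,S_{\epsilon,t},S_{x,t},\Psi$. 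Those moments are obtained from the joint-Gaussian conditioning of $(\mathbf Y^m_t,\mathbf X_t)$ on $\mathbf Y^o_t$ in Lemma~\ref{lemma:tildePi_Ytmissing_Conditional_ind}, which is exactly where $\mathbf M_t=\sigma^2\mathbf D_{x,t}+\mathbf W^T\mathbf D_{\epsilon,t}\mathbf W$ and $\mathbf N_t=\mathbb{I}_d-\mathbf W\mathbf M_t^{-1}\mathbf W^T\mathbf D_{\epsilon,t}$ enter.

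The remaining work is organisational: substituting these gradients into $\nabla_{\Psi^*}Q=\mathbf 0$, I would collect the time sum, the $[0,1]^2$ quadrature against $m$ and the factor $1/\pi_{\mathbf Y^o_{1:N}|\Psi}(\mathbf y^o_{1:N})$ into the scalar, vector and matrix integrals $\mathbf A_0,\dots,\mathbf A_{21}$ --- this is precisely how those objects are defined in Subsection~\ref{proof:cor_Mstep_explicitFormulas_missing_ind} in Appendix~\ref{appendix:proofs_GStS}. Each stationarity equation $\nabla_{\bm\mu^*}Q=\mathbf 0$, $\nabla_{\bm\delta_\epsilon^*}Q=\mathbf 0$, $\nabla_{\bm\delta_x^*}Q=\mathbf 0$, $\partial_{\sigma^{*2}}Q=0$ is linear in its own unknown and couples to the remaining unknowns and to $\mathbf W^*$ only linearly, so they can be resolved in cascade: $\bm\delta_x^*$ first in closed form from $\Psi$ and the data; then the coupled pair $\bm\mu^*,\bm\delta_\epsilon^*$ solved jointly and expressed as affine functions of $\mathbf W^*$; then $\sigma^{*2}$ in closed form given the others. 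This yields the displayed formulas, and inserting these affine expressions into $\nabla_{\mathbf W^*}Q=\mathbf 0$ eliminates all other unknowns and leaves the single matrix equation $f(\mathbf W^*;\Psi,\Psi^*)=\mathbf 0$.

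I expect the main obstacle to be not a conceptual step but the bookkeeping: carrying the observed/missing partition of $\mathbf Y_t$ through the conditional moments, keeping the frozen iterate $\Psi$ and the optimisation variable $\Psi^*$ strictly separate throughout, and verifying that the numerous terms produced by differentiating the two Gaussian log densities regroup exactly into the $\mathbf A_i$ integrals. A secondary point is confirming the stationary point is a maximiser; this is clear block by block, since $Q$ is concave in $\bm\mu^*$, in $\bm\delta_\epsilon^*$, in $\bm\delta_x^*$ and (for $\sigma^{*2}>0$) in $\sigma^{*2}$, the Hessians being negative-definite $m$-weighted integrals of Gaussian information matrices.
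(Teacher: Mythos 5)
Your proposal follows essentially the same route as the paper's proof: exploit that only $\tilde v$ inside $I_1$ carries the $\Psi^*$-dependence, differentiate it under the two-dimensional quadrature using the explicit Gaussian form from Lemma~\ref{lemma:integral_w_ind}, regroup the resulting weighted integrals into the $\mathbf{A}_i$ objects, and solve the coupled linear stationarity equations for $\bm{\mu}^*,\bm{\delta}_\epsilon^*,\bm{\delta}_x^*,\sigma^{*2}$ while leaving $\mathbf{W}^*$ as the root-finding problem $f(\mathbf{W}^*)=\mathbf{0}$. The only additions beyond the paper's argument are your (correct) remarks on block-wise concavity and on resolving the $\bm{\mu}^*$--$\bm{\delta}_\epsilon^*$ coupling jointly, which the paper leaves implicit.
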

\begin{proof}
The proof of Theorem \ref{th:Mstep_explicitFormulas_missing_ind} is given in Subsection~\ref{proof:cor_Mstep_explicitFormulas_missing_ind} in Appendix~\ref{appendix:proofs_GStS}.
\end{proof}

\section{Comments on Influence Function of GSt PPCA}\label{sec:IF}
Here we explore our proposed methodology from the perspective of robust estimation rather than a model selection. The introduced models provide estimators of parameters, such as mean and covariance, that contain additional flexibility to accommodate characteristics of the data such as skewness or various patterns of the marginal tail dependence.  The proposed estimators are obtained as solutions to the estimation equations that may be seen as 'distorted' in comparison to the equations for the same parameters in the standard PPCA.  We argue that this new class of estimators is more robust than the standard PPCA frameworks.  

In general, to verify this conjecture, one may analyse a non-linear system of equations with weighting that specify contributions of sample points to the formulation of an estimator and study the properties of the robustness imposed by these weights. However, our challenge is that for the new class of the estimators under GSt PPCA model, the weighting functions do not have analytic closed forms. Instead, we propose to study the characteristics of robustness by the notion of the influence functions and study them numerically in three different ways: by the asymptotic bias of an estimator, the asymptotic variance of an estimator, ie, the precision of the estimation, and by the sensitivity of an estimator to the effect of outliers. 

As it is shown in the final part of this section, the GSt PPCA is characterized by the highest robustness in comparison to the standard PPCA methods according to these three measures.  If one suspects that the data might reveal characteristics captured by the GSt PPCA family of models, the best choice is to use these PPCA frameworks as we show the significant loss of accuracy and robustness for simpler standard approaches such as Gaussian PPCA of \cite{TippingBishop1999} or Student-t PPCA of \cite{RidderFranc2003} or\cite{KhanDellaert2004}, when these characteristics appear in the data. On the other hand, if the data follows simpler distributions, the class of GSt PPCA methods is flexible enough to also accommodate simple Guassian and student-t structures. 

\subsection{Solutions to the Estimating Equations}
Under the Gaussian PPCA model from Section \ref{sec:GaussianPPCA}, we can specify  the marginal distribution of the observation vector $\mathbf{Y}_t$ to be Gaussian with the mean vector $\bm{\mu}$ and the covariance matrix $\mathbf{C} = \mathbf{W} \mathbf{W}^T + \sigma^2 \mathbb{I}_d$. The MLE estimates using this marginalised likelihood are then
\begin{align*}
\hat{\bm{\mu}}, \ \hat{\mathbf{C}} =  \argmax_{\bm{\mu}, \mathbf{C}} \sum_{t= 1}^N \log \big( \pi_{\mathbf{Y}_t|\Psi}(\mathbf{y}_{t}) \big) = \argmin_{\bm{\mu}, \mathbf{C}} \bigg\{N \log \big| \mathbf{C} \big| + \sum_{t= 1}^N \big( \mathbf{y}_t - \bm{\mu} \big) \mathbf{C}^{-1} \big( \mathbf{y}_t - \bm{\mu} \big)^T \bigg\}.
\end{align*}
After differentiating the objective function to obtain the maximum one obtains the following system of equations
\begin{align*}
\begin{cases}
& - N \bm{\mu} + \sum_{t= 1}^N \mathbf{y}_t = \mathbf{0} ,\\
& N\mathbf{C}^{-1} - \sum_{t = 1}^N \mathbf{C}^{-1} \big( \mathbf{y}_t - \bm{\mu} \big)^T \big( \mathbf{y}_t - \bm{\mu} \big) \mathbf{C}^{-1} = \mathbf{0},
\end{cases} \Longleftrightarrow
\begin{cases}
& \hat{\bm{\mu}}  = \frac{1}{N} \sum_{t= 1}^N \mathbf{y}_t, \\
& \hat{\mathbf{C}} = \frac{1}{N} \sum_{t = 1}^N \big( \mathbf{y}_t - \hat{\bm{\mu} }\big)^T \big( \mathbf{y}_t - \hat{\bm{\mu}} \big)^T. 
\end{cases}
\end{align*}
The MLE estimators of the Gaussian PPCA model can be obtained in closed form however,  they assume that all samples are of the same importance and hence, an outlying observation contributes to the construction of the estimators equally.  
This outcome results in the information captured by provided eigen vectors losing the intended interpretation (statistical summary) if the data is corrupted even by a single observation. In order to introduce a general notion of observation specific weight, the generalized version of MLE by M-estimators for independently distributed data was proposed in \cite{Maronna1976} and \cite{Huber2009} which defined by the following system of normal equations
\begin{equation*}
\begin{cases}
& \sum_{t = 1}^N \frac{ \psi(D_t)}{D_t}  \mathbf{C}^{-\frac{1}{2}} \left(\mathbf{y}_t - \bm{\mu} \right)   = \mathbf{0}, \\
& \sum_{t = 1}^N \frac{ \psi(D_t)}{D_t}   \mathbf{C}^{-\frac{1}{2}}  \left(\mathbf{y }_t  - \bm{\mu}\right)^T \left(\mathbf{y}_t - \bm{\mu} \right) \mathbf{C}^{-\frac{1}{2}}   = \mathbf{0}.
\end{cases}
\end{equation*}
where $D^2_t = \left( \mathbf{y}_t - \bm{\mu} \right) \mathbf{C}^{-1} \left( \mathbf{y}_t - \bm{\mu} \right)^T$ is a Mahalanobis distance of a sample point $\mathbf{y}_t$ and $\psi:\mathbb{R}_+  \rightarrow \mathbb{R}$.  We may look at estimators of the mean and  covariance obtained by Student-t PPCA of \cite{RidderFranc2003} from the same perspective as the solutions to the system of equations given by
\begin{align*}
\begin{cases}
& \sum_{t=1}^N \frac{\mathbf{C}^{-1} \big( \mathbf{y}_t - \bm{\mu} \big) }{v +  \big( \mathbf{y}_t - \bm{\mu} \big) \mathbf{C}^{-1} \big( \mathbf{y}_t - \bm{\mu} \big)^T } = \mathbf{0}, \\
& N \mathbf{C}^{-1} -(v + d) \sum_{t = 1}^N \frac{\mathbf{C}^{-1} \big( \mathbf{y}_t - \bm{\mu} \big)^T \big( \mathbf{y}_t - \bm{\mu} \big) \mathbf{C}^{-1}}{v +  \big( \mathbf{y}_t - \bm{\mu} \big) \mathbf{C}^{-1} \big( \mathbf{y}_t - \bm{\mu} \big)^T } = \mathbf{0},
\end{cases} \Longleftrightarrow \begin{cases}
& \sum_{t=1}^N \frac{ \mathbf{y}_t - \bm{\mu} }{v +  \big( \mathbf{y}_t - \bm{\mu} \big) \mathbf{C}^{-1} \big( \mathbf{y}_t - \bm{\mu} \big)^T } = \mathbf{0}, \\
& \mathbf{C} = \frac{v + d}{N} \sum_{t = 1}^N \frac{\big( \mathbf{y}_t - \bm{\mu} \big)^T \big( \mathbf{y}_t - \bm{\mu} \big)}{v +  \big( \mathbf{y}_t - \bm{\mu} \big) \mathbf{C}^{-1} \big( \mathbf{y}_t - \bm{\mu} \big)^T }.
\end{cases}
\end{align*}
and remark on the special weighting function that the method introduces to down weight outliers. If $v \rightarrow 0$, the estimator of the covariance matrix under Student-t PPCA corresponds to the Tyler Estimator of \cite{Tyler1987a} .

The above examples of the estimation equation for the PPCA model's parameters explicitly state the functional formulation of the weighting function. It is not a case for the family of GSt models. The analogous system of normal equations for the mean and covariance matrix of random vectors which  follows the GSt PPCA model, is given by the optimisation problem 
\begin{align*}
\hat{\bm{\mu}}, \ \hat{\mathbf{W}}, \hat{\sigma}^2, \hat{\bm{\delta}}_\epsilon, \hat{\bm{\delta}}_x = \argmin_{\bm{\mu}, \ \mathbf{W}, \ \sigma^2, \bm{\delta}_\epsilon, \bm{\delta}_x} \bigg\{ \log \big( \pi_{\mathbf{Y}_{1:N}|\bm{\mu}, \ \mathbf{W}, \ \sigma^2, \bm{\delta}_\epsilon, \bm{\delta}_x}(\mathbf{y}_{1:N}) \big) \bigg\},
\end{align*}
where the loglikelihood of the model is formulated as
\begin{align*}
\log L(\Psi,\mathbf{y}_{1:N} ) = \sum_{t = 1}^N \log \pi_{\mathbf{Y}_t|\Psi} (\mathbf{y}_t) = \sum_{t = 1}^N  \log \bigg(  \int_0^1 \int_0^1    m(\mathbf{y}_{t},s_{\epsilon,t},s_{x,t};\Psi) \ d s_{\epsilon,t} \ d s_{x,t} \bigg),
\end{align*}
since by using Lemma~\ref{lemma:tildePi_Xt_Yt_ind}  and  Lemma~\ref{lemma:tildePi_Ytmissing_Conditional_ind} in the Appendix~\ref{appendix:proofs_GStS}, the $\log \pi_{\mathbf{Y}|\Psi} (\mathbf{y})$ is expressed as 
\begin{align}\label{eq:GSt_log_pdf}
\log \pi_{\mathbf{Y}|\Psi} (\mathbf{y}_t)    =   \log \bigg(  \int_0^1 \int_0^1    m(\mathbf{y}_{t},s_{\epsilon,t},s_{x,t};\Psi) \ d s_{\epsilon,t} \ d s_{x,t} \bigg),
\end{align} 
for the function $m: \mathbb{R}^{d} \times [0,1]^2 \longrightarrow \mathbb{R}$ defined in Theorem \ref{th:Estep_GStS_missing_ind} as
\begin{align*}
& m(\mathbf{y}_{t},s_{\epsilon,t},s_{x,t};\Psi) =  e^{ - \frac{1}{2} \bm{\delta}_x \big( \mathbf{D}_{x,t}^{-1} - \sigma^2  \mathbf{M}_t^{-1}\mathbf{W}^T\mathbf{D}_{\epsilon,t} \mathbf{N}_t^{-1}\mathbf{D}_{\epsilon,t} \mathbf{W}\mathbf{M}_t^{-1}  - \sigma^2  \mathbf{M}_t^{-1} \big)\bm{\delta}_x^T } \pi_{\mathbf{Y}_t |S_{\epsilon,t},S_{x,t},\Psi}(\mathbf{y}_t) \big(\sigma^2\big)^{\frac{k}{2}}  \Big| \mathbf{D}_{x,t} \Big|^{\frac{1}{2}}  \Big| \mathbf{M}_t\Big|^{- \frac{1}{2}} \Big| \mathbf{N}_t\Big|^{-\frac{1}{2}}.
\end{align*}
Therefore, the corresponding log-likelihood function is not explicitly stated as it is defined by two-dimensional integration problems.  Consequently, it is not easy to find the explicit functional form for weight functions which modify the system of estimation equations in the GSt PPCA model. However, we still can indirectly study if this new formulations reduces the sensitivity of the estimators to the effect of outliers. We can achieve that by using an influence function, or more precisely, its approximation, and asses local robustness of an estimator as well as determine its asymptotic variance. 

\subsection{The Influence Function}
We follow the definition of the influence function from \cite{Hampel1986}. Let $\Omega$ be an open convex subset of $\mathbb{R}$.  Let $F$ be a cumulative distribution function parametrized by $\Psi \in \Omega$  of  a $d$-dimensional random vector $\mathbf{Y} \sim F$. We observe $N$ iid (independent and identically distributed) realisations of $\mathbf{Y}$, denoted by $\mathbf{y}_1,\ldots, \mathbf{y}_N$,  that determine its empirical cumulative distribution function, $\hat{F}_N$.  Let $T$ be  a mapping  from the data space defined by $F$ to a parameter space $\Omega$ that is Fisher consistent, that is
$$
T(F) = \Psi \text{ and } T(\hat{F}_N) = \hat{\Psi}_N.
$$
For instance, if we are interested in finding an estimator of the expectation of $\mathbf{Y}$, the parameter of interest is $\Psi =  \bm{\mu} = \mathbb{E}_F \mathbf{Y} $. Then $T(F) = \int_\mathcal{Y} \mathbf{y} dF(\mathbf{y})$.  

The influence function defined in \cite{Hampel1986} is a Gateaux derivative of an estimator $T$ at the measure $F$ in the direction of a local point $\mathbf{y} \in \mathcal{Y}$ expressed by
 \begin{equation*}
 IF(\mathbf{y},T,F) = \lim_{\epsilon \rightarrow 0 } \frac{T \left( (1-\epsilon) F + \epsilon \Delta_\mathbf{y}\right) - T (F)}{\epsilon},
\end{equation*}
where $\Delta_{\mathbf{y}}$ is a probability measure which puts mass $1$ at the point $\mathbf{y}$.  Practically, an influence function can be described as a measure of the effect of an infinitesimal contamination at the point $\mathbf{y}$ on the estimator that is standardized by the mass of the contamination. Furthermore, we assume that for the considered class of estimators the distribution function $F$ satisfies regularity conditions such as differentiability at $\Psi$, $\mathbb{E}_F\big[T(F)^2\big] < \infty$ and
\begin{equation*}
 \int IF(\mathbf{y},T,F) d F (\mathbf{y})  = \mathbb{E}_F \big[ IF(\mathbf{y},T,F) \big] = \mathbf{0}.
\end{equation*}
Under these conditions, we may formulate the expressions for the bias and asymptotic asymptotic variance of an estimator in terms of its influence function. Let us first derive the representation of $T \left( (1-\epsilon) F + \epsilon \Delta_\mathbf{y}\right) $ using its Tylor series expansion around $T (F)$ as shown in \cite{Hampel1974}, \cite{Hampel1986} , that is
\begin{equation*}
T \left( (1-\epsilon) F + \epsilon \Delta_\mathbf{y}\right)   = T (F) + \epsilon  IF(\mathbf{y},T,F)  + o_p(1),
\end{equation*}
where $ o_p(1)$ denotes the convergence of the reminder to zero in probability. This representation can be seen from a more general perspective as the von Mises expansion as in \cite{Fernholz1983}, \cite{vaart1998}. If we study the expansion of the functional $T$ at the empirical distribution $F_N$ around the true distribution $F$, we have
\begin{equation*}
T \left( F_N\right) =   T \left( F \right) + \int IF(\mathbf{y},T,F) d(F_N - F) (\mathbf{y}) + o_p(1) =  \int IF(\mathbf{y},T,F) d F_N(\mathbf{y}) + o_p(1) = \frac{1}{N}\sum_{t = 1}^N IF(\mathbf{y}_t,T,F)  + o_p(1).
\end{equation*}
The Cramer–Rao inequality relates the asymptotic efficiency to the influence function that determines the asymptotic variance of an estimator defined by $T$,
\begin{equation}\label{eq:IF_var}
\sqrt{N} (\hat{\Psi}_N - \Psi ) \rightarrow^{d} \mathcal{N} \Big(0, \mathbb{E} _\mathbf{Y}\big[ IF(\mathbf{Y},T,F) ^2 \big] \Big).
\end{equation}
Influence function can indicate some useful properties of an estimator, ie evaluation of the gross error sensitivity defined as 
\begin{equation}\label{eq:if_gross}
\gamma^*(T,F) = \sup_{\mathbf{y}  \in \mathcal{Y}} \Big| IF(\mathbf{y},T,F) \Big|,
\end{equation}
as a measure of maximum sensitivity of an estimator to local contamination, or a local-shift sensitivity expressed as 
\begin{equation}\label{eq:if_shift}
\lambda^*(T,F) = \sup_{\mathbf{y}_1 ,\mathbf{y}_2  \in \mathcal{Y} } \Big| \frac{IF(\mathbf{y}_1,T,F) - IF(\mathbf{y}_1,T,F)}{\mathbf{y}_1 - \mathbf{y}_2} \Big|,
\end{equation}
that measures the effect of shifting a single observation to a different value in the estimation process.

\subsection{Influence Function for MLE}
The estimation via the maximum likelihood principle can be generalised as defined in \cite{Hampel1986} ,  \cite{Maronna1976} or \cite{Huber2009}. The problem of finding an estimator according to some measure of target $\rho : \mathcal{Y} \times \Omega \rightarrow \mathbb{R} $ that is differentiable and satisfies Leibniz integral rule, can be specified as 
\begin{equation*}
\Psi^* = \argmin_{\Psi \in \Omega} \int_\mathcal{Y} \rho(\mathbf{y},\Psi) dF(\mathbf{y})  ,
\end{equation*}
and is found by solving the normal equations given by
\begin{equation*}
\int_\mathcal{Y}  \varphi(\mathbf{y},\Psi) dF(\mathbf{y})= 0,
\end{equation*}
for $ \varphi(\mathbf{y},\Psi)  =  \frac{\partial \rho(\mathbf{y},\Psi)  }{\partial \Psi} $. As noted in \cite{Hampel1986} , if $\varphi$ is strictly monotone, the problem has a unique solution.  Given this notation and conditions, the influence function of the parameters $\Psi$ can be expressed as
\begin{equation*}
IF(\mathbf{y},\Psi,F) =    - \varphi (\mathbf{y},\Psi) \left(  \frac{\partial  \mathbb{E}_{\mathbf{Y}|\Psi} \big[\varphi(\mathbf{Y},\Psi) \big] }{\partial \Psi} \right)^{-1} .
\end{equation*}
We remark that this formulation of an influence function implies that its expectation is zero, since
\begin{equation*}
\mathbb{E} _{\mathbf{Y}|\Psi}\big[ IF(\mathbf{Y},T,F) \big] = - \mathbb{E} _{\mathbf{Y}|\Psi}\big[ \varphi (\mathbf{y},\Psi) \big] \left(  \frac{\partial  \mathbb{E}_{\mathbf{Y}|\Psi} \big[\varphi(\mathbf{Y},\Psi) \big] }{\partial \Psi} \right)^{-1} = \mathbf{0}.
\end{equation*}
In the maximum log-likelihood estimation for independently distributed data, $ \rho(\mathbf{y},\Psi) =  \log \pi_{\mathbf{Y}|\Psi} (\mathbf{y}) $ and $ \varphi(\mathbf{y},\Psi)  = \frac{\partial \log \pi_{\mathbf{Y}|\Psi} (\mathbf{y}) }{\partial \Psi}  $. The estimators of parameters in GSt PPCA model family are MLE estimators for the logarithm of the probability density function specified defined by the two-dimensional integration over the function $m$ as shown in \eqref{eq:GSt_log_pdf}.  

Given differentiability classes of the function $m$ shown in Lemma~\ref{lemma:m_smoothness_ind}, we can apply products of the Leibniz integral rule to swap the order of differentiating and integrating, both to function $m$ and its first derivative to specify the terms required to calculate the influence function of the PPCA model and the influence function of $\Psi$ is expressed as 
\begin{equation*}
 IF(\mathbf{y},T,F) = - \frac{\partial \log \pi_{\mathbf{Y}|\Psi} (\mathbf{y}) }{\partial \Psi}   \Bigg( \mathbb{E}_{\mathbf{Y}|\Psi} \bigg[  \frac{\partial^2 \log \pi_{\mathbf{Y}|\Psi} (\mathbf{y}) }{\partial \Psi^2}   \bigg] \Bigg)^{-1},
\end{equation*}
where 
\begin{align*}
& \frac{\partial \log \pi_{\mathbf{Y}|\Psi} (\mathbf{y})}{\partial \Psi}   = \frac{ \int_0^1 \int_0^1   \frac{\partial  }{\partial \Psi} m(\mathbf{y}_{t},s_{\epsilon,t},s_{x,t};\Psi)\ d s_{\epsilon,t} \ d s_{x,t}  }{ \int_0^1 \int_0^1    m(\mathbf{y}_{t},s_{\epsilon,t},s_{x,t};\Psi) \ d s_{\epsilon,t} \ d s_{x,t} },
& \frac{\partial^2}{\partial \Psi^2 } \log \pi_{\mathbf{Y}|\Psi} (\mathbf{y})   = \frac{\partial }{\partial \Psi} \bigg( \frac{ \int_0^1 \int_0^1   \frac{\partial  }{\partial \Psi} m(\mathbf{y}_{t},s_{\epsilon,t},s_{x,t};\Psi)\ d s_{\epsilon,t} \ d s_{x,t}  }{ \int_0^1 \int_0^1    m(\mathbf{y}_{t},s_{\epsilon,t},s_{x,t};\Psi) \ d s_{\epsilon,t} \ d s_{x,t} }\bigg).
\end{align*} 
Which consequently results in the formulation of the asymptotic variance of an estimator for GSt PPCA model as follows
\begin{equation*}
\mathbb{E}_{\mathbf{Y}|\Psi}\Big[ IF(\mathbf{y},T,F)^2 \Big] = \Bigg( \mathbb{E}_{\mathbf{Y}|\Psi} \bigg[  \frac{\partial^2 \log \pi_{\mathbf{Y}|\Psi} (\mathbf{y}) }{\partial \Psi^2}   \bigg] \Bigg)^{-1}\mathbb{E}_{\mathbf{Y}|\Psi}\Bigg[ \bigg( \frac{\partial \log \pi_{\mathbf{Y}|\Psi} (\mathbf{y}) }{\partial \Psi}  \bigg)^T \frac{\partial \log \pi_{\mathbf{Y}|\Psi} (\mathbf{y}) }{\partial \Psi}  \Bigg] \Bigg( \mathbb{E}_{\mathbf{Y}|\Psi} \bigg[  \frac{\partial^2 \log \pi_{\mathbf{Y}|\Psi} (\mathbf{y}) }{\partial \Psi^2}   \bigg] \Bigg)^{-1}.
\end{equation*}
Since $\mathbb{E} _{\mathbf{Y}|\Psi}\big[ IF(\mathbf{Y},T,F) \big] = \mathbf{0}$, the expression $ \mathbb{E}  _{\mathbf{Y}|\Psi}\big[ IF(\mathbf{Y},T,F) ^2 \big] $ represents the variance of an influence function under the distribution $F$ of $\mathbf{Y}$. Under the assumptions of the consider family of models, we may use the Chebyshev's inequality to asses the probability of high values of an influence function
\begin{equation*}
\mathbb{P} \bigg( \Big|  IF(\mathbf{Y},T,F) \Big|  \geq a\bigg)   \leq \frac{\mathbb{E}  _{\mathbf{Y}|\Psi}\big[ IF(\mathbf{Y},T,F) ^2 \big]}{a^2},
\end{equation*}
for some $a >0$. Therefore, influence functions with lower variance are less likely to exceed level $a$ and consequently, the corresponding estimators are less likely to be sensitive to the effect of a local contamination.

\subsection{Assessing Numerically Properties of Influence Function  for GSt PPCA family}\label{ssec:if_sim}
In the classical theory of statistical inference, we assume knowledge of a model that characterises the data generating process. Based on this information, we derive estimators of parameters of interest and quantify their properties. However, the robust theory argues that the perfect model is often not known and if it is known, it will be an approximation of reality that is given by limited sample size. As argued in \cite{Hampel1986}, robust theory considers the distribution of estimators not only under the true model but also under other probability distributions. This short study can be seen from the perspective that we have a set of realisations of a random process. We assume its distribution to follow some statistical model and derive estimators of its parameters.  The estimators are just a function of the observed sample set, the introduced functional on its empirical distribution.  

Given this interpretation, we illustrate an exercise that shows the behaviour of estimators given two scenarios:  
\begin{description}
\item[S1] the true model that generates the data is consistent with the assumption on the distribution that is used to derive the estimator;
\item[S2] the true model that generates the data is not consistent with the assumption on the distribution that is used to derive the estimator;
\end{description}
Therefore, the scenario (S1) and (S2) assume no misspecification and misspecification of the model that characterizes an observation vector, respectively. 

We derive the influence function for the parameters $\sigma^2$ and $\mathbf{W}$ for $3$ PPCA frameworks: the standard Gaussian PPCA of \cite{TippingBishop1999a}, the standard Student-t PPCA of \cite{RidderFranc2003} and the Grouped-t GSt PPCA as Special Case 1 of the GSt PPCA family.  In scenario (S1), we calculate influence functions on the datasets that are consistent with the distributions of the PPCA models. 

In scenario (S2), we assume that the observation vector follows the model of Grouped-t GSt PPCA and study what happens if the data that we observe reflected more complex structure than assumed in the estimation. We show that fitting naive models such as Gaussian PPCA and Student-t PPCA, that do not have the structure of Grouped-t GSt PPCA or even Student-t GSt PPCA,  would lose efficiency and robustness in terms of the asymptotic variance and bias as well as the measures of sensitivity, the gross errors and local-shift sensitivities. We show the impact of the separation of the tail effect or grouped multiple-degree-of-freedom structures that define patterns of marginal heavy-tail distributions on the loss of robustness across three PPCA models.

\subsubsection{Set-up of Simulation Study}
To examine numerically the robustness of the estimators, we conduct the following simulation study. We generate $M = 1000$ realisations of the $d$-dimensional random vector $\mathbf{Y}  \sim F$, $\mathbf{y}_1, \ \ldots, \ \mathbf{y}_M$. Depending on the scenario, the distribution $F$ refers to the model of the Gaussian PPCA,  Student-t PPCA or Grouped-t PPCA.  We use the stochastic approximations of the expectations $\mathbb{E}_{\mathbf{Y}|\Psi}  \Big[ \log \pi_{\mathbf{Y}|\Psi} (\mathbf{Y})  \Big]$, $\mathbb{E}_{\mathbf{Y}|\Psi}  \Big[ \frac{\partial}{\partial \Psi} \log \pi_{\mathbf{Y}|\Psi} (\mathbf{Y})  \Big]$ and $\mathbb{E}_{\mathbf{Y}|\Psi}  \Big[ \frac{\partial^2}{\partial \Psi \partial \Psi} \log \pi_{\mathbf{Y}|\Psi} (\mathbf{Y})  \Big]$ by the $M$ realisations of $\mathbf{Y}$ formulated as combined Monte Carlo-Quadrature approximations (2-d quadrature for the inner integrals).
\begin{align*}\notag
& \mathbb{E}_{\mathbf{Y}|\Psi}  \Big[ \log \pi_{\mathbf{Y}|\Psi} (\mathbf{Y})  \Big] \approx \sum_{t = 1}^M  \log \Bigg(  \int_0^1 \int_0^1    m(\mathbf{y}_{t},s_{\epsilon,t},s_{x,t};\Psi) \ d s_{\epsilon,t} \ d s_{x,t} \Bigg),\\\notag
& \mathbb{E}_{\mathbf{Y}|\Psi}  \Big[ \frac{\partial}{\partial \Psi }\log \pi_{\mathbf{Y}|\Psi} (\mathbf{Y})  \Big] \approx \sum_{t = 1}^M \Bigg\{  \frac{ \int_0^1 \int_0^1   \frac{\partial  }{\partial \Psi} m(\mathbf{y}_{t},s_{\epsilon,t},s_{x,t};\Psi)\ d s_{\epsilon,t} \ d s_{x,t}  }{ \int_0^1 \int_0^1    m(\mathbf{y}_{t},s_{\epsilon,t},s_{x,t};\Psi) \ d s_{\epsilon,t} \ d s_{x,t} } \Bigg\},\\
& \mathbb{E}_{\mathbf{Y}|\Psi}  \Big[ \frac{\partial^2}{\partial \Psi^2 }\log \pi_{\mathbf{Y}|\Psi} (\mathbf{Y})  \Big] \approx \sum_{t = 1}^M \Bigg\{\frac{\partial }{\partial \Psi} \bigg( \frac{ \int_0^1 \int_0^1   \frac{\partial  }{\partial \Psi} m(\mathbf{y}_{t},s_{\epsilon,t},s_{x,t};\Psi)\ d s_{\epsilon,t} \ d s_{x,t}  }{ \int_0^1 \int_0^1    m(\mathbf{y}_{t},s_{\epsilon,t},s_{x,t};\Psi) \ d s_{\epsilon,t} \ d s_{x,t} }\bigg)\Bigg\} .
\end{align*}
We assume as well that we know the true parameters of the PPCA models and we calculate their influence functions under this assumption.  The random vector $\mathbf{Y}$ has dimensionality $d=3$, the dimentionality of the latent vector $\mathbf{X}$ is $k=2$, $\bm{\mu} = \mathbf{0}$, with $\sigma^2 = 0.1$ and the $d\times k$ projection matrix
\begin{equation*}
\mathbf{W}_{3 \times 2} = \begin{bmatrix}
w_1 & w_2 \\
w_3 & w_4 \\
w_5 & w_6
\end{bmatrix} =  \begin{bmatrix}
0.3& 1 \\
1.23& 0.8 \\
0.021 & 0.98
\end{bmatrix}.
\end{equation*}
For the standard Student-t PPCA we examine its influence function across the grid of degrees of freedom $\nu \in \Big\{4, 10,20,100 \Big\}$.  On the other hand, the random vectors $\bm{\epsilon}_t$ and $\mathbf{X}_t$ under Grouped-t GSt PPCA model have vectors of degrees of freedom $\bm{\nu}_\epsilon \in  \big\{4,100 \big\}^3$ and  $\bm{\nu}_\epsilon \in  \big\{4,100 \big\}^2$.  This assumption allows us to have different assumptions on heavy tails per marginal and, due to independence of $\bm{\epsilon}_t$ and $\mathbf{X}_t$, to separate the effect of heavy-tails between the new representation and perturbation. We remark that if all marginals of $\bm{\epsilon}_t$ and $\mathbf{X}_t$ have the same profiles of heavy tails, the Grouped-t GSt PPCA collapses to Student-t GSt PPCA.  Also, the data that follows distributions implied by the Student-t PPCA or Grouped-t GSt PPCA models with the degrees of freedom around $100$ may be seen analogous to the cases of normally distributed.

\subsubsection{Asymptotic Variance of Estimators}
First, we examine the robustness of the PPCA models in terms of the asymptotic variance of their estimators. The study reveals how misspecification of a true distribution of an observation process impacts on an asymptotic variance of the obtained estimators, that its, the precision of the estimation. We show that according to this criterion of robustness, the GSt PPCA model is the most efficient either under correctly specified or misspecified scenarios, (S1) and (S2) respectively. Therefore, the efficiency aspect of the estimation is enhanced in the new class of techniques as well its sensitivity to the perturbation in comparison to the other studied baseline methods of Gaussian PPCA and Student-t PPCA.

We show that the efficiency of the estimator of $\sigma^2$, $\hat{\sigma}^2$ is mostly dependent on the assumptions on the distribution of perturbation term. Next,  we explain that the loss of robustness defined by the asymptotic variance of the estimator of $\mathbf{W}$, $\hat{\mathbf{W}}$ is not uniform across all elements of the projection matrix and is impacted by both, the marginal distribution assumptions of both, $\bm{\epsilon}_t$ and $\mathbf{X}_t$. 
When misspecification of the model is present in the estimation and tail dependence and skewness structure is present in the data generating mechanism, this will impact the efficiency.

The asymptotic variances corresponding to the scenario (S1) for Gaussian PPCA, Student-t PPCA and Grouped-t GSt PPCA are listed in Table \ref{tab:if_var_s1}. The models used for the estimators of parameters $\mathbf{W}$ and $\sigma^2$ are consistent with assumptions about distributions of the observation vectors. The table presents the asymptotic variances for Gaussian PPCA or median values of the asymptotic variances with corresponding interquartile range across all considered degrees of freedom for Student-t PPCA and Grouped-t GSt PPCA. The interquartile range informs us about the dispersion of values across the combinations of the degrees of freedom. 

The Grouped-t GSt PPCA obtains lowest asymptotic variances, regardless of the degrees of freedom that characterizes $\mathbf{Y}$. It is seemingly uniform across degrees of freedom and parameters. This outcome stems from the fact that both, the median values and the interquartile ranges are the lowest.   

Figure \ref{fig:if_s2} shows the logarithm of the asymptotic variances for the scenario (\textbf{S2}), when the observation data follows a Grouped-t GSt PPCA model with different cases for degrees of freedom (x-axis and y-axis of the plots). The estimators of parameters are derived upon Gaussian PPCA (a), Student-t PPCA (b) or Grouped-t GSt PPCA (c) models. The asymptotic variances are scaled per parameter to unify the colour scale.   

Labels of the y-axis and x-axis on all plots correspond to the multidimensional vectors of degrees of freedom $\bm{\nu}_\epsilon$ and $\bm{\nu}_x$ for $\bm{\epsilon}_t$ and $\mathbf{X}_t$, respectively. For instance if  $\bm{\nu}_\epsilon = [4,100,4]$ the corresponding label on the y-axis is '4\_100\_4' and if  $\bm{\nu}_x = [100,4]$ - the label on the x-axis is '100\_4'.  

Similarly to the scenario (S1),  the asymptotic variances under scenario (S2) are the lowest for the estimators derived under the Grouped-t GSt PPCA model. If elements of $\bm{\nu}_\epsilon$ and $\bm{\nu}_x$ are the same per vector, the framework collapses to the Student-t GSt PPCA case. Hence, the robustness of the GSt PPCA class of methods is the highest according to this criterion. We remark on a few observations that might be of interest to the reader.

\begin{figure}[H]
  \centering   
       \begin{subfigure}[b]{0.95\textwidth}
         \centering
 		\includegraphics[scale = 0.15]{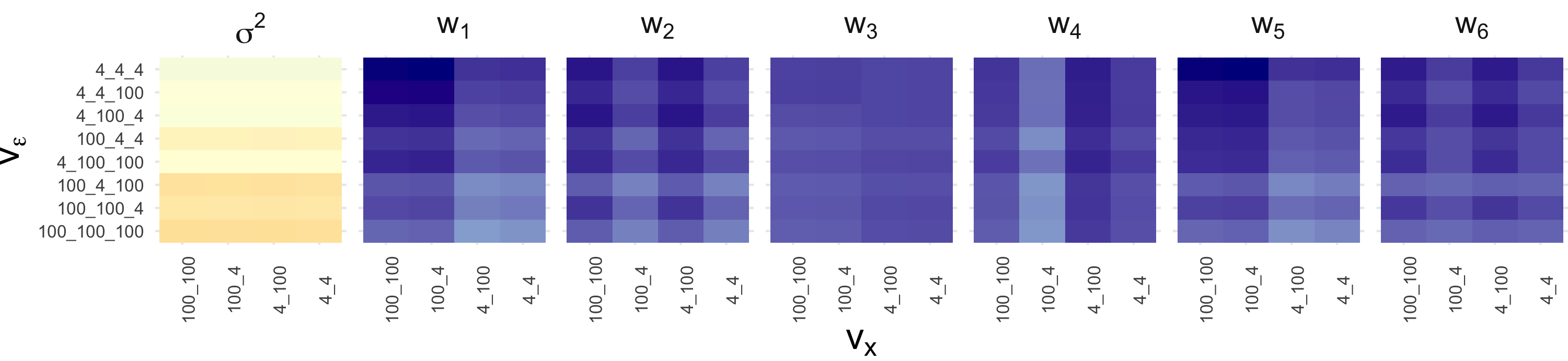}
 		 		\vspace{-0.3cm}
         \caption{Gaussian PPCA}
     \end{subfigure}  
     
      \begin{subfigure}[b]{0.95\textwidth}
         \centering
 		\includegraphics[scale = 0.15]{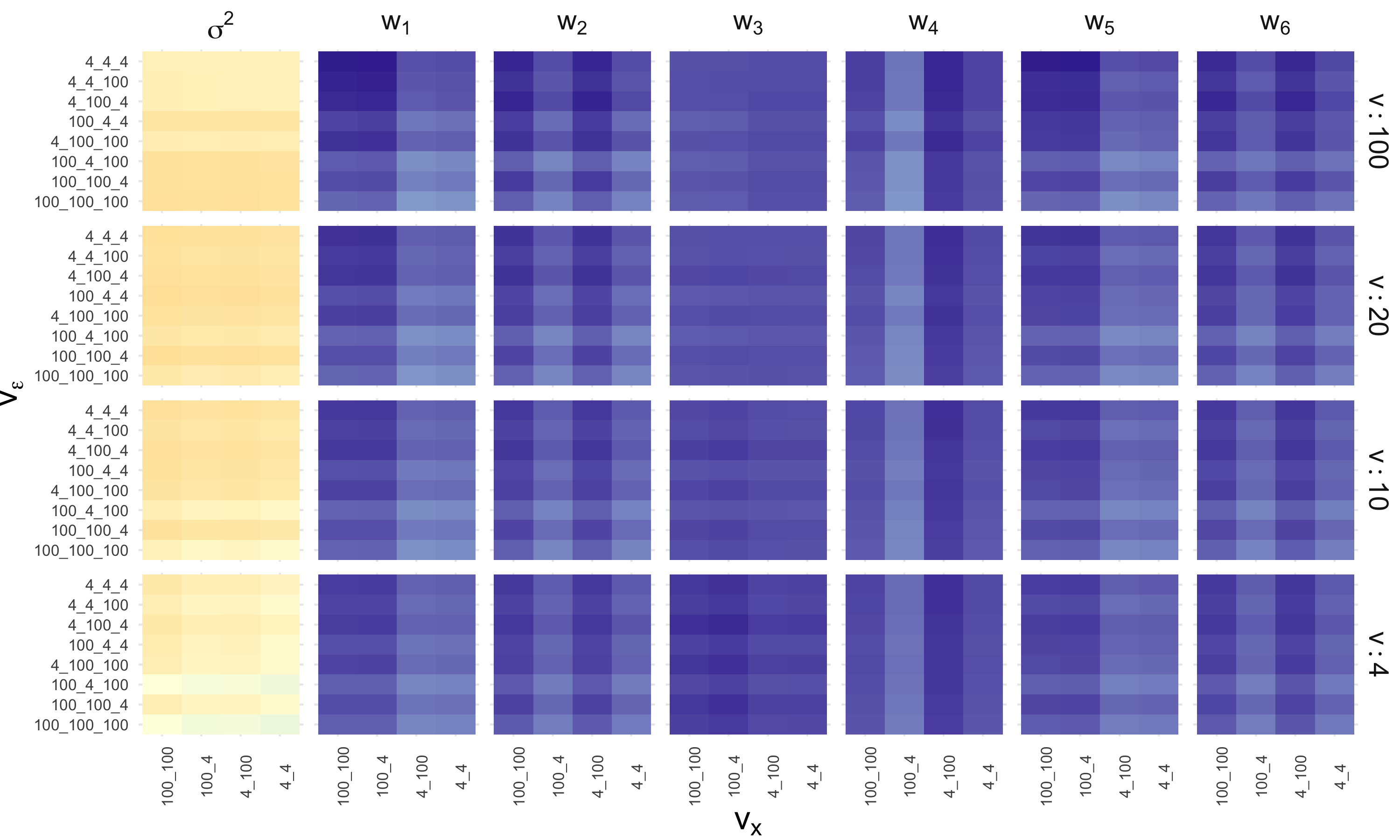}
 		 		\vspace{-0.3cm}
         \caption{Student-t PPCA}
     \end{subfigure}            
    
          \begin{subfigure}[b]{0.95\textwidth}
         \centering
 		\includegraphics[scale = 0.15]{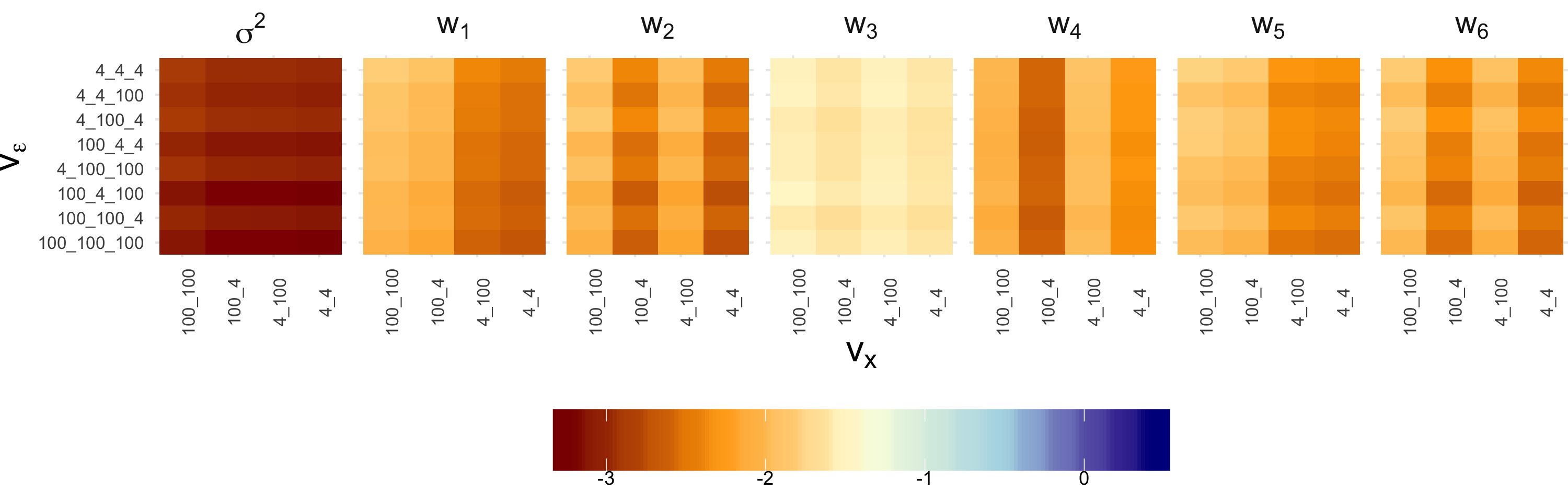}
         \caption{Grouped-t GSt PPCA}
     \end{subfigure}

        \caption{The logarithm of asymptotic variances defined in \eqref{eq:IF_var} of estimators for $\sigma^2$ and $\mathbf{W}$ for Gaussian PPCA (a), Student-t PPCA (b) and Grouped-t GSt PPCA (c) under the scenario ( \textbf{S2}). The observation data follows Student-t GSt PPCA  model under different assumptions on degrees of freedom $\bm{\nu}_\epsilon$ (y-axis) and $\bm{\nu}_x$ (x-axis). The columns-wise order of the panels corresponds to the variances for different parameters. The row-wise order of the panels in panel (b) corresponds to the assumptions on the degree of freedom $\nu$ that the estimators are derived upon. The panel (c) presents results for Grouped-t GSt PPCA when distribution assumptions that are used to derive the estimators and characterise the observation data are consistent.}\label{fig:if_s2}
\end{figure}

The asymptotic variance of $\hat{\sigma}^2$ decreases with an increasing number of marginals that are light-tail in $\bm{\epsilon}_t$. We remark that a profile of heavy-tails of the first marginal of $\bm{\epsilon}_t$ is important as well - if it is heavy-tailed then the asymptotic variance of  $\hat{\sigma}^2$  increases.  

The efficiency of $\hat{\mathbf{W}}$ exhibits interesting behaviours when the sample data has marginal-specific assumptions on distributions of $\bm{\epsilon}_t$ and $\mathbf{X}_t$. The asymptotic variance is element-specific for $\hat{w}_{i}$, for $i = 1,\ldots,6$, and depends mostly on the profile of heavy-tails that corresponds to the marginal of $\mathbf{X}_t$ which is a projection of $\mathbf{Y}_t$. However, its efficiency is also slightly impacted by the heavy-tail distribution of element-specific marginal of $\bm{\epsilon}_t$, that corresponds to the element of $\mathbf{Y}_t$ which is projected by  $\hat{w}_{i}$.  For instance, the estimation of $w_1,\ w_3, \ w_5$  has the highest precision when the first component of $\bm{\nu}_x$ is $4$, that is for the labels on the x-axis '4\_100' and '4\_4' as these components of $\mathbf{W}$ project onto the first marginal of $\mathbf{X}_t$. 

The same patterns of asymptotic variances given by different combinations of values in $\bm{\nu}_\epsilon$ and $\bm{\nu}_x$ that characterize sample data are observed for the estimators defined under Gaussian PPCA and Student-t PPCA models. However, their asymptotic variances are significantly higher and more sensitive to the distribution of the error terms, especially if it is not captured by the model's assumptions used for the estimation. Cases of Gaussian PPCA and Student-t PPCA with high degrees of freedom are the most illustrative for this example. For instance, let us consider asymptotic variances of $w_1$ when the first marginal of $\bm{\epsilon}_t$ is light-tail - the precision of the parameter's estimation under these frameworks declines.  In addition, when the t-Student PPCA model is assumed to be more heavy-tailed, $\nu \in  \big\{ 4,10\big\}$,  the corresponding estimator of $\sigma^2$ loses efficiency, when a combination of light- and heavy-tail marginal profiles of $\bm{\epsilon}_t$ and $\mathbf{X}_t$ are assumed.  Also, the framework becomes less robust to the data that was corrupted by a light-tail perturbation.   

Lastly, we observe that the discrepancy between the asymptotic variances of $\hat{w}_3$ and the other components in $\mathbf{W}$ is the highest under Grouped-t PPCA. 
\vspace{-10pt}
\begin{table}[htb]\caption{The asymptotic variances defined in \eqref{eq:IF_var} of estimators for $\sigma^2$ and $\mathbf{W}$ for PPCA, Student-t PPCA and Student-t GSt PPCA under the scenario ( \textbf{S1}) given their stochastic approximations. The measure is standardized per parameter.  The values corresponding to Student-t PPCA and Grouped-t GSt PPCA reflect the median across all true models for the observation vector that are dependent on the combinations of degrees of freedom. The values in the brackets represent an interquartile range. }\label{tab:if_var_s1}
\centering \small
\begin{tabular}{l|lll}
  \hline
  &  \multicolumn{3}{c}{PPCA} \\ 
 & Gaussian  & Student-t  & Grouped-t GSt  \\ 
  \hline
$\sigma^2$ & 1.9e-02 & 2.3e-02 (4.62e-03) & 9.3e-04 (2.81e-04) \\ 
  $w_1$ & 2.3e-01 & 2.6e-01 (3.13e-02) & 1.4e-03 (1.76e-03) \\ 
  $w_2$ & 1.8e-01 & 2.1e-01 (2.6e-02) &  1.1e-03 (1.37e-03) \\ 
  $w_3$ & 8.2e-01 & 9.9e-01 (1.65e-01) & 1.9e-02 (3.54e-03) \\ 
  $w_4$ & 1.5 & 1.7 (1.79e-01) & 7.3e-03 (6.47e-03) \\ 
  $w_5$ & 1.6e-01 & 1.8e-01 (1.85e-02) & 2e-03 (2.4e-03) \\ 
  $w_6$ & 1.5e-01 & 1.7e-01 (2.23e-02) & 1.6e-03 (1.82e-03) \\ 
   \hline
\end{tabular}
\end{table} 
\FloatBarrier
\vspace{-10pt}
\subsubsection{Sensitivity of Estimators }
The sensitivity of estimators is another property that we discuss to compare the robustness of the three PPCA frameworks using the notion of their influence functions.  The sensitivity measures a maximum impact of outlying points on the estimators.  We focus on the gross error sensitivity, that reflects a maximum impact of single contamination on an estimator, defined in \eqref{eq:if_gross}. Therefore it reflects the information of how a perturbation of a single point on the data set decreases the information conveyed by the dataset about a true parameter. Secondly, we study the local-shift sensitivity that measures the effect of removing a probability mass from one point from the domain of the random variable to another and its maximum effect on the estimator, defined in \eqref{eq:if_shift}.  Therefore, it reflects the effect of maximum deviation of points in the data and their impact on the estimation, standardized by a range of the deviation.

Given $M$ realisations of the observation random vectors, we evaluate influence functions point-wise.  The gross-error sensitivity is calculated as a maximum absolute value of evaluated influence functions for a generated dataset.  The local-shift sensitivity is calculated by measuring the pairwise $\mathbb{L}_1$ distances between realisations and separately per corresponding influence functions. Then, the maximum of the fractions defined in \eqref{eq:if_shift} is obtained for different cases of the data. 

The numerical approximations of $\gamma^*(T,F)$ and $\lambda^*(T,F)$  under the scenario (S1) for Gaussian PPCA, Student-t PPCA and Grouped-t GSt PPCA are listed in Table \ref{tab:if_errors_s1}.  The table presents the values for Gaussian PPCA or the median values with corresponding interquartile range across all considered degrees of freedom for Student-t PPCA and Grouped-t GSt PPCA.   Both sensitivity measures confirm that the estimators under the Grouped-t GSt PPCA are the most robust according to the sensitivity measures. 
\begin{figure}[H]
  \centering
       \begin{subfigure}[b]{0.95\textwidth}
         \centering
 		\includegraphics[scale = 0.15]{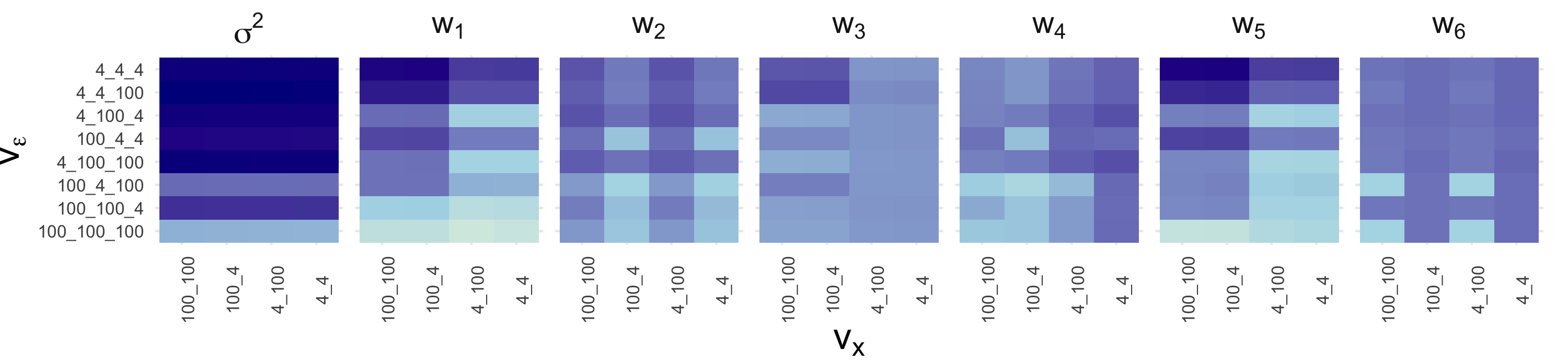}
 		\vspace{-0.3cm}
         \caption{Gaussian PPCA}
     \end{subfigure}  
     
      \begin{subfigure}[b]{0.95\textwidth}
         \centering
 		\includegraphics[scale = 0.15]{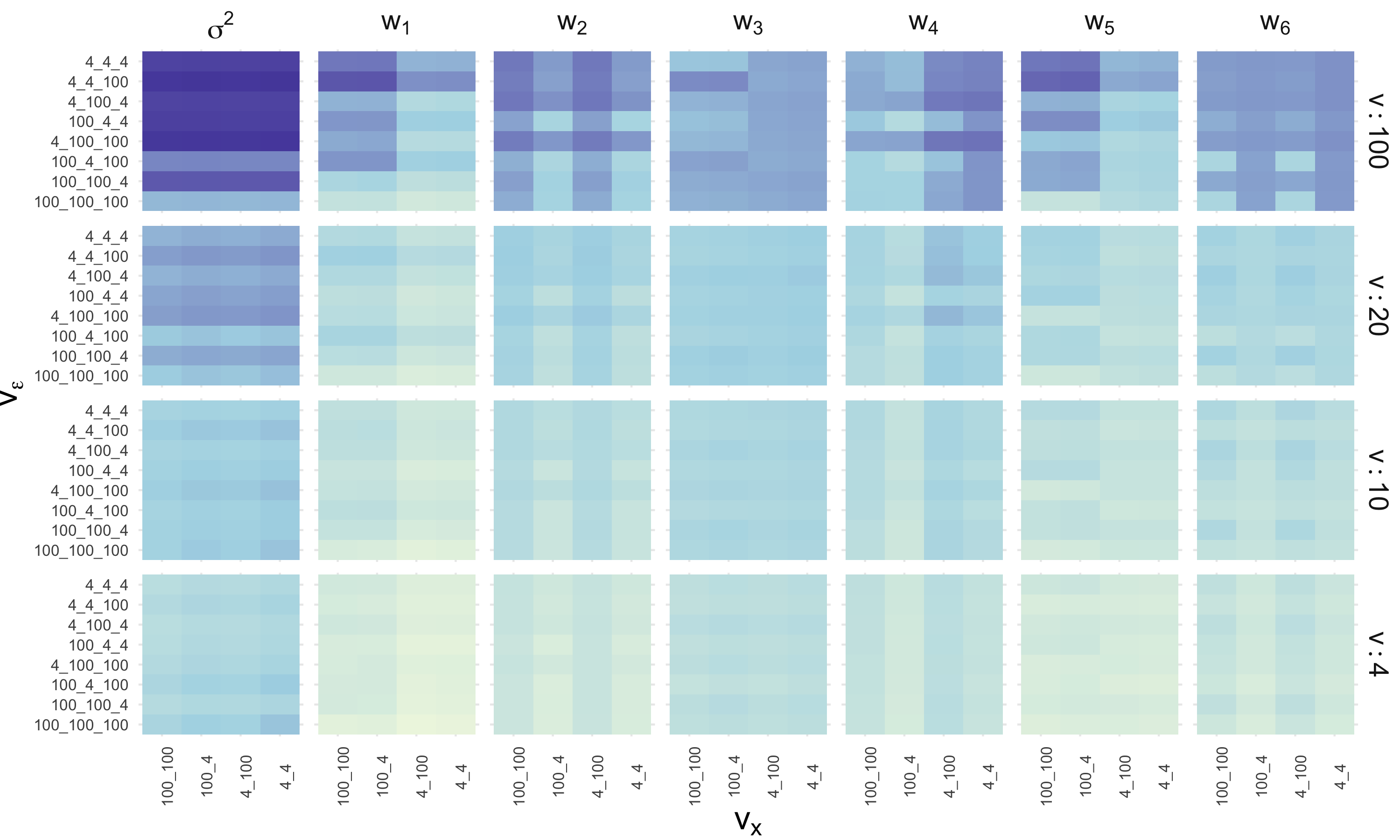}
 		 		\vspace{-0.3cm}
         \caption{Student-t PPCA}
     \end{subfigure}             
    
          \begin{subfigure}[b]{0.95\textwidth}
         \centering
 		\includegraphics[scale = 0.15]{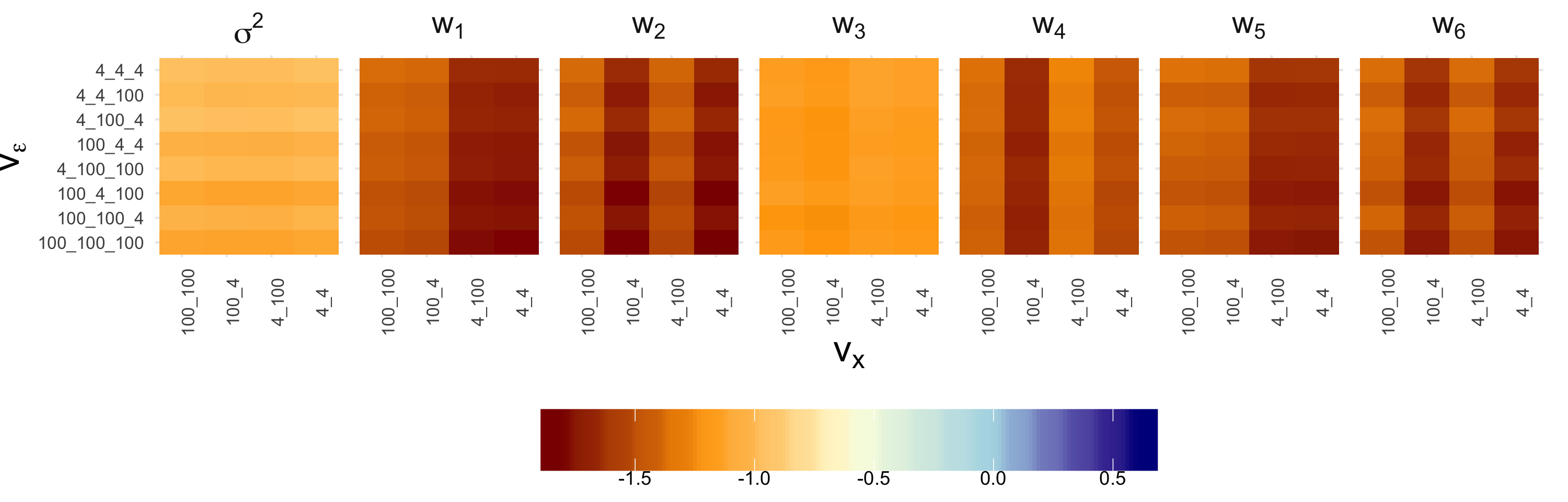}
 		 		\vspace{-0.3cm}
         \caption{Grouped-t GSt PPCA}
     \end{subfigure}      
        \caption{The logarithm of the gross error sensitivity from  \eqref{eq:if_gross} of estimators for $\sigma^2$ and $\mathbf{W}$ for Gaussian PPCA (a), Student-t PPCA (b) and Grouped-t GSt PPCA (c) under the scenario ( \textbf{S2}) given $1000$ realisation of $\mathbf{Y}$. The observation data follows Student-t GSt PPCA  model under different assumptions on degrees of freedom $\nu_\epsilon$ (y-axis) and $\nu_x$ (x-axis). The columns-wise order of the panels corresponds to the variances for different parameters. The row-wise order of the panels in panel (b) corresponds to the assumptions on the degree of freedom $\nu$ that the estimators of  $\sigma^2$ and $\mathbf{W}$ are derived upon. The panel (c) presents results for Grouped-t GSt PPCA when distribution assumptions that are used to derive the estimators and characterise the observation data are consistent.  }\label{fig:if_gross_s2}
\end{figure}

\begin{figure}[H]
  \centering  
       \begin{subfigure}[b]{0.95\textwidth}
         \centering
 		\includegraphics[scale = 0.15]{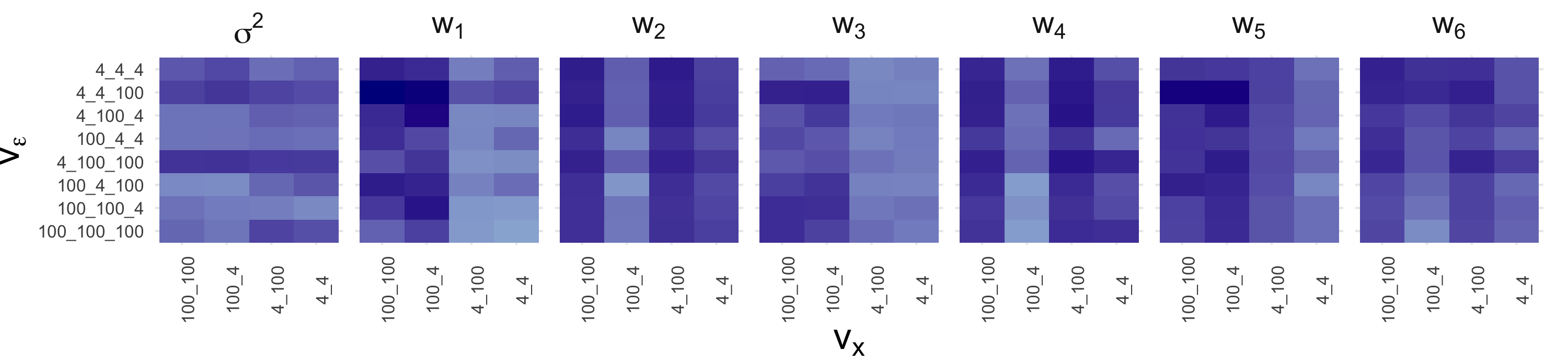}
 		 		\vspace{-0.3cm}
         \caption{Gaussian PPCA}
     \end{subfigure}  
      
      \begin{subfigure}[b]{0.95\textwidth}
         \centering
 		\includegraphics[scale = 0.15]{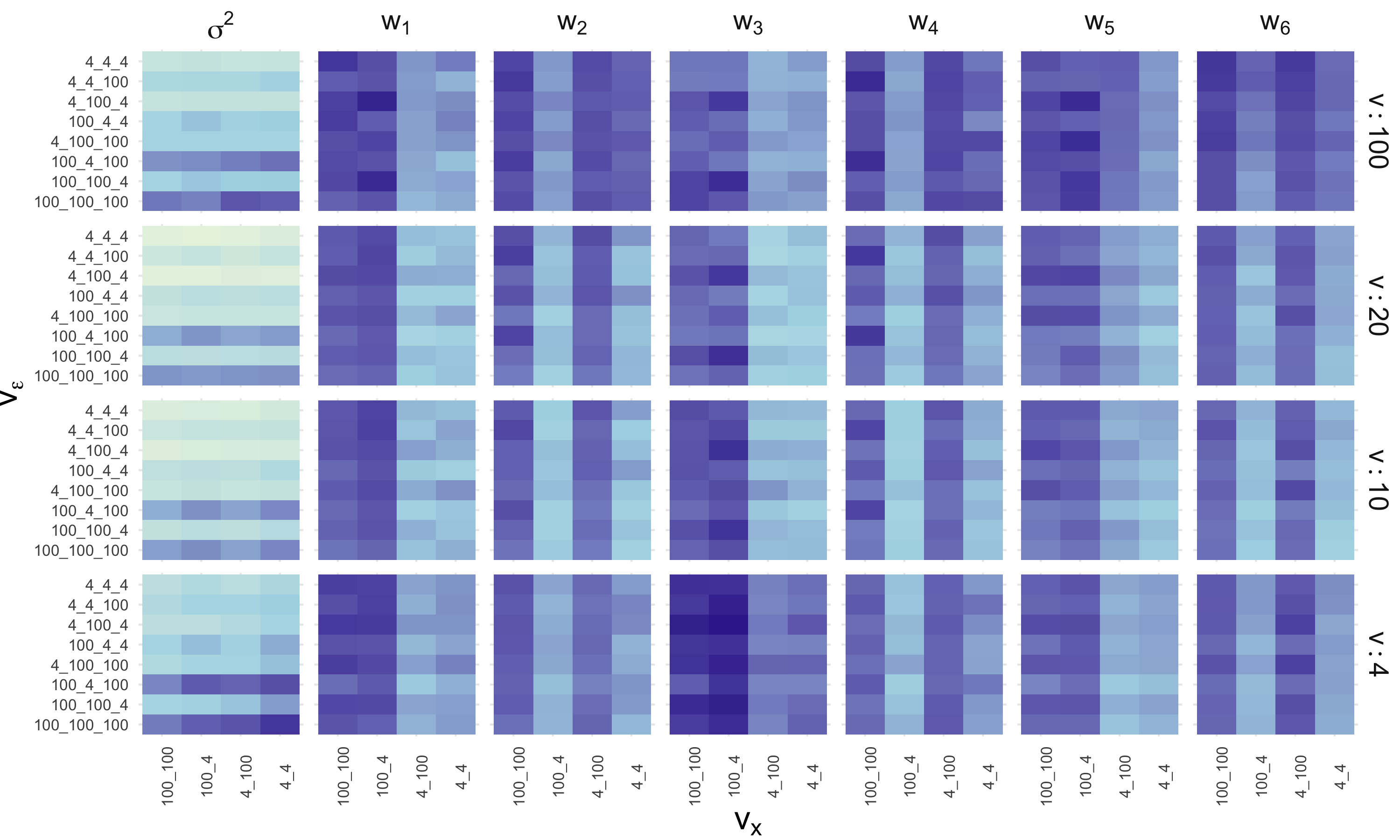}
 		 		\vspace{-0.3cm}
         \caption{Student-t PPCA}
     \end{subfigure}

          \begin{subfigure}[b]{0.95\textwidth}
         \centering
 		\includegraphics[scale = 0.15]{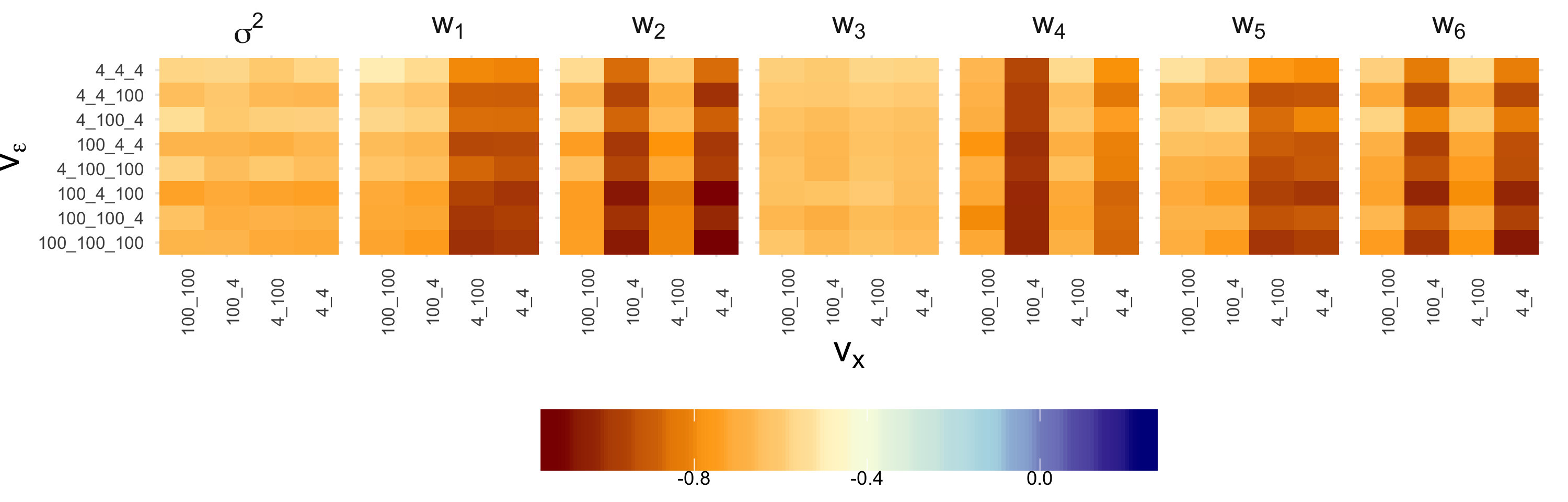}
 		 		\vspace{-0.3cm}
         \caption{Grouped-t GSt PPCA}
     \end{subfigure}      
        \caption{The logarithm of local-shift sensitivity from  \eqref{eq:if_shift} of estimators for $\sigma^2$ and $\mathbf{W}$ for Gaussian PPCA (a), Student-t PPCA (b) and Grouped-t GSt PPCA (c) under the scenario ( \textbf{S2}) given $1000$ realisation of $\mathbf{Y}$. The observation data follows Student-t GSt PPCA  model under different assumptions on degrees of freedom $\nu_\epsilon$ (y-axis) and $\nu_x$ (x-axis). The columns-wise order of the panels corresponds to the variances for different parameters. The row-wise order of the panels in panel (b) corresponds to the assumptions on the degree of freedom $\nu$ that the estimators of  $\sigma^2$ and $\mathbf{W}$ are derived upon. The panel (c) presents results for Grouped-t GSt PPCA when distribution assumptions that are used to derive the estimators and characterise the observation data are consistent. }\label{fig:if_shift_s2}
\end{figure}
\begin{table}[ht]\caption{The gross error sensitivity defined in \eqref{eq:if_gross} and the local-shift sensitivity defined in \eqref{eq:if_shift} of estimators for $\sigma^2$ and $\mathbf{W}$ for PPCA, Student-t PPCA and Student-t GSt PPCA under the scenario ( \textbf{S1}) given given $M$ realisations of $\mathbf{Y}$. The measures are standardised by parameter.  The values corresponding to Student-t PPCA and Grouped-t GSt PPCA reflect the median across all true models for the observation vector that are dependent on the combinations of degrees of freedom. The values in the brackets represent an interquartile range. }\label{tab:if_errors_s1}
\centering \small
\begin{tabular}{l|lll|lll}
&  \multicolumn{6}{c}{PPCA}  \\
 & Gaussian & Student-t & Grouped-t GSt & Gaussian  & Student-t  & Grouped-t GSt  \\ 
\hline
&  \multicolumn{3}{c}{$\gamma^*(T,F)$} &  \multicolumn{3}{c}{$\lambda^*(T,F)$}\\
\hline  
$\sigma^2$ & 8.6e-01 & 8.6e-01 (6.51e-02) & 9.4e-02 (1.79e-02) & 9.8e-01 & 8.4e-01 (1.08e-01) & 2.2e-01 (4.03e-02) \\ 
  $w_1$ & 2.5 & 2.5(1.87e-01) & 9.3e-02 (5.93e-02) &  2.7  & 2.3 (2.57e-01) & 3.4e-01 (2.15e-01  \\ 
  $w_2$ & 2.3 & 2.9 (2.99e-01) & 8.2e-02 (5.05e-02) & 2.3  & 2.2 (2.86e-01) & 2.8e-01 (1.6e-01) \\ 
  $w_3$ & 6.3  & 6.0 (1.34) & 3.9e-01 (3.04e-02) & 2.8  & 2.6 (2.04e-01) & 5.3e-01 (5.07e-02)\\ 
  $w_4$ & 6.6 & 5.8 (6.82e-01) & 2.3e-01 (1.06e-01)  & 3.6  & 3.2 (1.37e-01) & 5e-01 (2.31e-01) \\ 
  $w_5$ & 2.4 & 2.3 (4.28e-01) &  1.1e-01 (6.06e-02) & 2.2  & 2.1 (2.66e-01) & 3.7e-01 (1.89e-01) \\ 
  $w_6$ & 2.6 & 1.8 (4.44e-01) & 9.5e-02 (5.22e-02)& 2.3 & 2.0 (2.15e-01) & 3.2e-01 (1.63e-01) \\ 
   \hline 
\end{tabular}
\end{table}
The sensitivity of the estimators for the two measures under the misspecified case, the scenario (S2), are illustrated in Figure \ref{fig:if_gross_s2} and Figure \ref{fig:if_shift_s2} , for $\gamma^*(T,F)$ and $\lambda^*(T,F)$, respectively. 

The gross-error sensitivity exhibits similar patterns that are observed for the asymptotic variance. Given Grouped-t GSt PPCA, we observe a decline of the sensitivity of $\hat{\sigma}^2$ to the contamination of a single point when the distribution of the perturbation term is more light-tailed. However, when the estimator is derived under Gaussian PPCA or Student-t PPCA models, the single point contamination has the highest impact on the estimation $\sigma^2$ when models do not capture the heavy-tail distribution of $\bm{\epsilon}_t$. The sensitivity measure of the estimator under the Student-t PPCA model decreases when the assumed distribution is more heavy-tailed. This increase of robustness for estimation of  $\sigma^2$, that depends on $\nu$, is more rapid than when we considered the asymptotic variance.  The gross-error sensitivity of $\hat{\mathbf{W}}$ is element-specific and depends on marginal-specific assumptions on distributions of $\bm{\epsilon}_t$ and $\mathbf{X}_t$. We observe similar patterns as with the asymptotic variance. Again, the estimation of components of $\mathbf{W}$ under Gaussian PPCA and Student-t PPCA is more impacted by the distribution of $\bm{\epsilon}_t$, especially when the estimators have no flexibility to handle heavy-tail data.  Also, the gross-error sensitivity of the estimators under the Student-t PPCA decreases when $\nu$ decreases and becomes more uniform across different assumptions on the sample data. 

The robustness of the estimators that is defined by the local shift sensitivity, $\lambda^*(T,F)$, is illustrated in Figure \ref{fig:if_shift_s2}.  The sensitivity analysis by this measure results in similar patterns, especially for $\mathbf{W}$, that are observed for the asymptotic variance. However, it is less impacted by the different assumptions on the distribution of $\mathbf{\epsilon}_t$.  The Grouped-t GSt PPCA is significantly more robust according to this measure than the other PPCA frameworks. The main difference between the outcomes for the local-shift sensitivity and the other studied measures of robustness is the sensitivity of $\hat{\sigma}^2$.  We observe almost uniform sensitivity of this estimator under Grouped-t GSt PPCA regardless of the data assumptions. When the estimator is derived under the Gaussian PPCA or Student-t PPCA models, it is least robust for the data that has light-tail perturbation.  Also, we observe little change for the Student-t PPCA performance when the degrees of freedom $\nu$ change.

\FloatBarrier

\subsubsection{Bias of Estimators}
In the final part, we focus on the analysis of bias of the estimation for $\sigma^2$ and $\mathbf{W}$ given the increasing sample size.  We narrow our study and examine accuracy of the estimation for Student-t GSt PPCA as a special case of Grouped-t GSt PPCA, Student-t PPCA and Gaussian PPCA frameworks under the misspecified data case, the scenario (S2),  and different sample sizes $N = 100, \ 500, \ 1000, \ 5000, \ 10000$. We generate $M = 50$ replications of $\mathbf{Y}_{1:N}$ for each distribution and sample size assumptions to numerically calculate the mean square errors of the estimation, having the true values of the parameters specified as before.

Figure \ref{fig:if_mse}  shows the change of estimation accuracy for PPCA frameworks over increasing sample size measured by the mean squared error of $\sigma^2$ and its estimate, and the mean squared error of $\mathbf{W}$ across all its elements. We observe that the increasing sample size increasingly improves the estimation via Student-t GSt PPCA (red line), especially of $\sigma^2$. The estimation via Student-t GSt PPCA is consistently highly accurate regardless of the degrees of freedom.  When the distribution of $\bm{\epsilon}_t$ is light-tail, the accuracy of the estimators of Student-t PPCA is close to the one provided by the more complex framework. For this data case, Gaussian PPCA estimates well only $\sigma^2$, and its discrepancy of the accuracy of the estimation via Student-t GSt PPCA becomes higher with bigger sample size. On the other hand, when the distribution of $\mathbf{X}_t$ is light-tail, the estimation via Student-t PPCA and Gaussian PPCA results in good estimates of $\mathbf{W}$, however, at the prices of the estimators of $\sigma^2$.  We remark that  when the data follows more complex distribution such as Grouped-t GSt PPCA model, the discrepancies in the estimation accuracy increases significantly and the standard methods struggle to provide accurate estimation.

\begin{figure}[H]
  \centering   
\includegraphics[width = \textwidth]{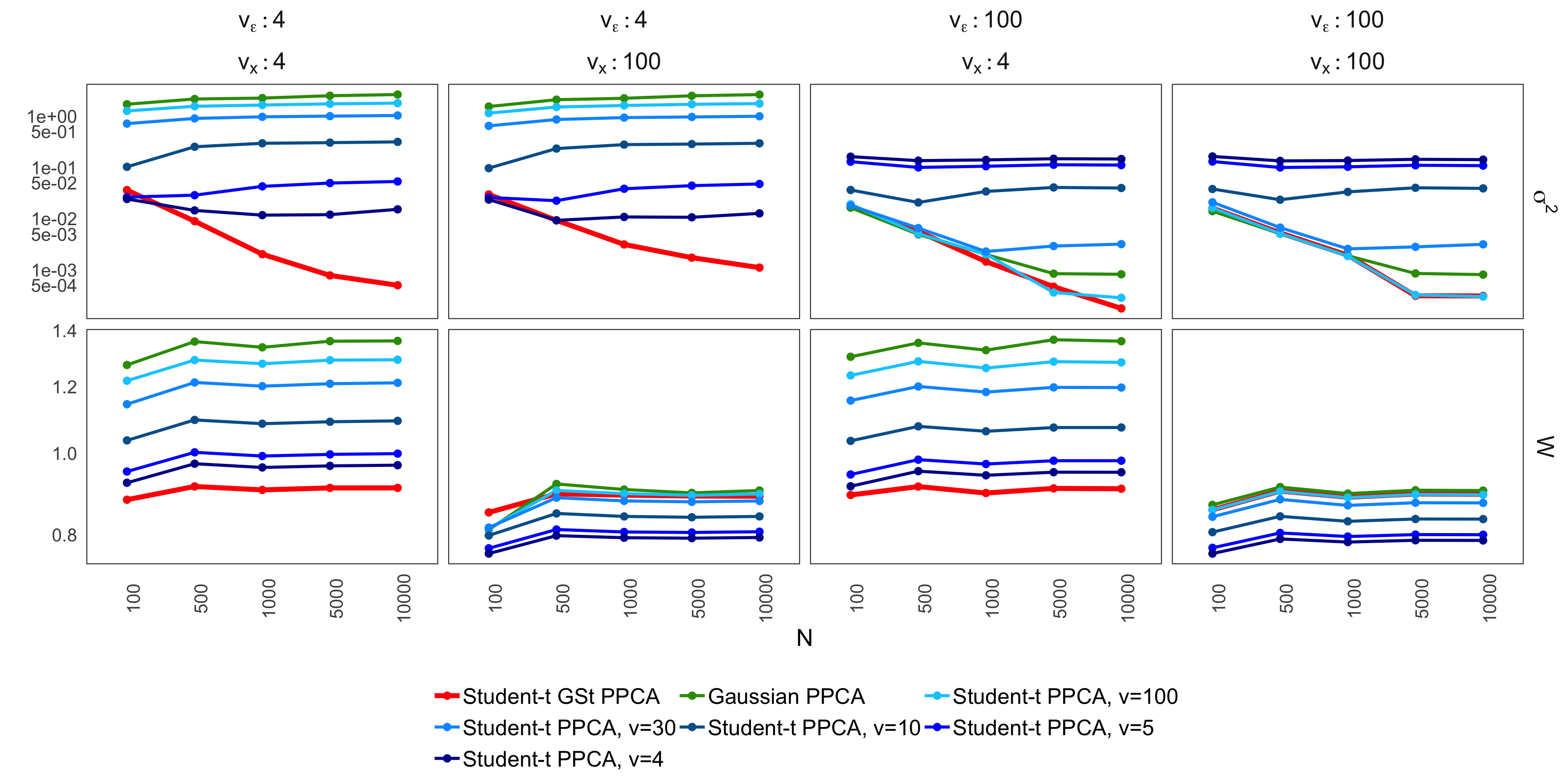}
         
  \caption{The logarithm of the mean squared errors of the estimates for $\sigma^2$ and $\mathbf{W}$ (across all its elements)  for Gaussian PPCA, Student-t PPCA and Student-t GSt under the scenario (\textbf{S2}) versus sample size $N$.  The column-wise order of the panels corresponds to the distribution assumption of $\mathbf{Y}_{1:N}$, the degrees of freedom $\nu_\epsilon$ of $\bm{\epsilon}_t$ and $\nu_x$ of $\mathbf{X}_t$, that defines the Student-t GSt PPCA model of $\mathbf{Y}_{1:N}$. The row-wise order corresponds to the logarithms of the means square errors per parameter. The colours of lines correspond to accuracy under different PPCA models assumptions. }\label{fig:if_mse}
\end{figure}

\section{Real Data Study on a Set of Crypto Assets}
PCA or PPCA methods can be used to measure market concentration and the potential for diversification. They are often employed to identify highly concentrated assets or to reduce the complexity of large sets of financial instruments by transforming them into a new set of uncorrelated components. One example of an application of these components is a strategy of diversified risk parity on the new representation that allocates portfolio weights to the original set.  

These feature extraction frameworks can be seen as a set of techniques to reveal common factors, called principal components, in a way that best explains the variability in the original data. The transformation of the observation data into principal components is defined in such a way that principal components have a decreasing variance. The methods suggest how to lower the dimensionality of our original data set by excluding elements which are in majority described by components with the least significant contribution to the overall variance and therefore reduce the size of investment portfolios universe of possible constituent assets to perform risk-based asset selection and weighting. 

In the following part we study the linear interactions between Bitcoin and 19 other altcoin crypto assets that are ranked highest on the list of top virtual currencies by market cap given two separate periods, 2018, so-called Initial Coin Offering (ICO) period where most of the cryptocurrencies projects were born and 2019, that is a start of lending markets when the other altcoins coins started to be less frequently traded. The details of the considered assets are given in Table \ref{tab:crypto_list}. The data for our study was collected from the Coin Metrices website (https://coinmetrics.io). We follow the categorisation of the assets from the Cryptoslate website (https://cryptoslate.com).  

We do not intend to present an optimal model that describes the dynamics of the studied data sets. Our motivation is to emphasise the effects of the robustness of the family of GSt PPCA methods that have been discussed in Section \ref{sec:IF}.  We present the eigen decomposition of the covariance matrix of the set of $20$ crypto assets given its estimators defined by $5$ different PPCA models: the standard Gaussian PPCA by \cite{TippingBishop1999a}, the standard Student-t PPCA of \cite{RidderFranc2003} and the three special cases of the family of GSt PPCA models introduced in Section \ref{sec:GStSPPCA}: Grouped-t GSt PPCA as Special Case 1 from Section \ref{sssec:PPCA_GStS_special}, Student-t GSt PPCA as Special Case 3 and Skew-t GSt PPCA as simplified Special Case 2 discussed in Appendix~\ref{appendix:skewX_syn_case_studies}. We show that the analysis of the eigendecomposition of the covariance matrix of the observation sets can provide useful insights into the distinguishable components of the variance that have economic interpretations. In the next section, we show that the flexibility of the PPCA frameworks from GSt PPCA family results in the separation of the covariance matrix into the components that have clearer economic interpretation than the other considered PPCA methods. 

\begin{table}[ht]
\caption{The list of $20$ altcoin crypto assets from 01-01-2018 to 31-12-2019 with corresponding categories. }\label{tab:crypto_list}
\centering \small
\begin{tabular}{lll|| lll}
  \hline
Ticker & Currency Name & Category & Ticker & Currency Name & Category \\ 
  \hline
  bat & Basic Attention Token & Advertising & etc & Ethereum Classic & Smart Contracts \\ 
  powr & Power Ledge & Energy  & eth & Ethereum & Smart Contracts\\ 
  bnb & Binance Coin & Exchange &  lsk & Lisk & Smart Contracts \\ 
   omg & OmiseGO & Financial Service &  neo & NEO & Smart Contracts\\ 
  xrp & XRP & Financial Service &  xlm & Stellar & Smart Contracts  \\ 
  dash & Dash & Governance &    dai & Dai & Stablecoin \\ 
  link & Chainlink &  Interoperability  & usdt\_eth & Tether (Ethereum) & Stablecoin  \\
  xmr & Monero & Privacy & bch & Bitcoin Cash & Technology\\  
  ada & Cardano &Smart Contracts &  btc & Bitcoin & Technology \\   
  eos & EOS & Smart Contracts&  ltc & Litecoin & Technology  \\ 

   \hline
\end{tabular}
\end{table}

\begin{figure}[H]
  \centering   
 	\includegraphics[width = \textwidth]{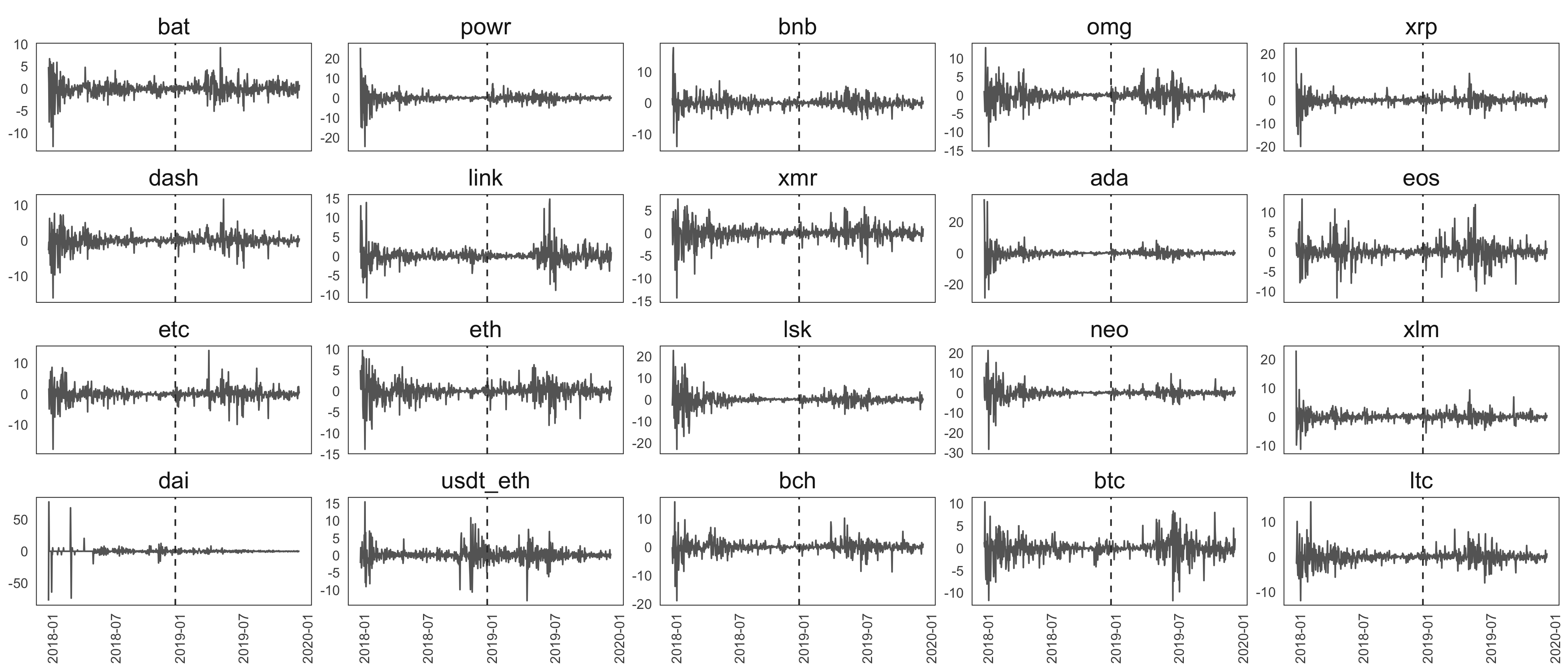}
        \caption{The standardized daily returns of $20$ crypto currencies listed in Table \ref{tab:crypto_list}. }
\label{fig:crypto_rets_std_cat}
\end{figure}

\subsection{Data Preparation \& Discussion on Collinearity}
To begin our study, we unify the magnitude of assets' values over time by considering their standardized returns. We calculate standard daily returns which are defined as a daily nominal change in price over time references to the US Dollar stable coin USD Tether (USDT) as numeraire. This was selected as it is the stable coin with highest market capitalization and utilization in all key exchanges. We divide the set of returns in subsets related to 2018 and 2019 and standardize them robustly per currency exchange rate by the Huber M-estimators of \cite{Huber1964}. 

Figure \ref{fig:crypto_rets_std_cat} illustrates the daily returns of the currencies. The dashed vertical lines divide the set of returns into two separate periods, one corresponding to 2018 and another one to 2019.   Figure \ref{fig:params_skew_sample_cat} shows the marginal estimates of the skewness by daily returns of assets. It is our motivation to include the model with skewness to our analysis. 

Figure~\ref{fig:5crypto_pairplots_cat_2019} illustrates the interactions between pairs of examined assets (off-diagonal panels) as well and histograms of the returns (diagonal panels) for year 2018 (red color) and 2019 (black color), respectively.  We observe weaker dependence between stable coins, USD Tether and Dai, and the other assets across the years.  The panels in Figure~\ref{fig:5crypto_pairplots_cat_2019} show possible increasing collinearity between Dai and Bitcoin, Bitcoin Cash and Litecoin, the three assets from the same category 'Technology' in 2019. On the other hand, the dependence between USD Tether and the remaining assets stays weak. The weak collinearity between stable coins and the rest of the assets stems from their design. They have been created to bridge between the highly volatile crypto currencies and stability of fiat currencies but still providing anonymity to its buyers. 

It is also worth to point out weakening dependence between Chainlink (link) and the other assets with 2019.  In 2018, we can still observe a moderate correlation between Chainlink and coins such as Basic Attention Token (bat), Cardano (ada), Dash (dash) and assets from Technology. It becomes significantly weaker in 2019 as shown in Figure~\ref{fig:5crypto_pairplots_cat_2019}.  

\begin{figure}[H]
\centering
\includegraphics[width = 0.75\textwidth]{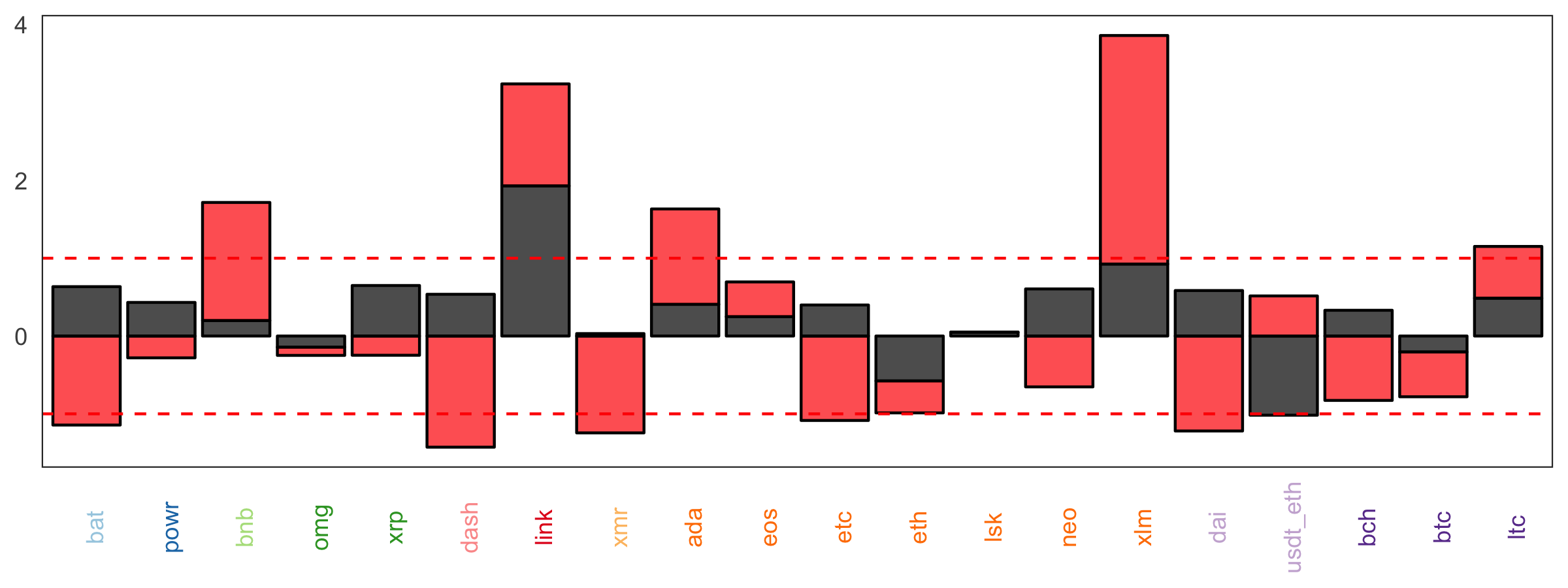}  
  \caption{The yearly sample estimates of univariate skewness for  returns of $20$ crypto currencies listed in Table \ref{tab:crypto_list}. The columns indicate the year of the sample.  The estimates for the standardized returns are the same as for non-standardized. The colors corresponds to the different sample periods, 2018 (red) and 2019 (black). }\label{fig:params_skew_sample_cat}
\end{figure}

\FloatBarrier

\subsection{Covariance Decomposition for 20 Crypto Currencies with Category-specific Heavy Tails Assumptions}
We seek to compare the interpretation of the observed collinearity with the eigen decomposition of the covariance that would select uncorrelated directions which explain the majority of the variance in the analysed data set of $20$ crypto assets. The optimal PPCA model choices and resulting log-likelihoods for $5$ PPCA frameworks are shown in Table~\ref{tab:20ccy_res}.

Overall results of the covariance decompositions have implications for the benefits of diversification as they indicate that the majority of the altcoin crypto assets are driven by a common factor that is highly correlated to Bitcoin. This co-moving group of assets is characterized by the highest contribution to the overall variance. The identification of this collinearity can aid the portfolio selection and management as holding only one of these assets provide most of the benefits for the diversification and allows to invested funds in other assets. On the other had, the remaining principal direction indicate uncorrelated assets that can be used for the risk hedging purposes. 

The proportion of the market variance explained by the first component increases from 2018 to 2019. It suggests that Bitcoin returns in 2019 to be the main driver of altcoins and, consequently, of the overall variance on the market.  We remark that the decomposition by the GSt PPCA family of methods is always characterized by the higher proportion of the first principal components to the overall variance.

The observed estimates of sample skewness in Figure \ref{fig:params_skew_sample_cat} motivate us to use a grid of $7$ elements per margin to select the skewness parameter, $\bm{\delta}_x$ for Skew-t GSt PPCA.  Therefore, we consider that the elements of the $3$-dimensional vector $\bm{\delta}_x$ can take values in the set $\{-1,-0.5,-0.2,0,0.2,0.5,1\}$.  Only in 2018 for non-robustly standardized returns, the first component of $\mathbf{X}_t$ was characterized by skewness and $\bm{\delta}_x = [0.5,0,0]$.  Otherwise,  the best model for Skew-t GSt PPCA assumes zero and the model simplifies to Student-t GSt PPCA model.  

The estimates of the three eigenvectors corresponding to the highest eigenvalues given by each of PPCA approaches are illustrated in Figure \ref{fig:params_evecl_cat} (a). The row-wise-wise order of panels corresponds to the different periods and standardisation methodologies. The labels of the y-axis correspond to the crypto assets with category-specific font colours. 

The estimates of the principal directions are consistent with the interpretation of the linear interactions between the assets.  In 2018, especially for the frameworks from the GSt PPCA family, the variance of the data set is decomposed into the principal direction that reflects the dependence between all crypto assets, except stable coins. The remaining principal components are dominated by each of the stable coins separately, USD Tether and Dai, respectively.  Given the weak interaction between USD Tether and Dai in 2018 as in Figure~\ref{fig:5crypto_pairplots_cat_2019}, it is an expected finding. Therefore, we observe little collinearity between USD Tether and Dai what is an observation that can be utilized in improvements to portfolio diversification. Only Gaussian PPCA indicated negative collinearity between the group of crypto assets correlated with Bitcoin and the stable coin Dai that is indicated by the first component. 

The contribution of the principal directions to the variance of the dataset measured by eigenvalues of the covariance matrix is illustrated in Figure \ref{fig:params_evecl_cat} (b). The panels show the proportion between the first eigenvalue and the remaining ones.  In 2018, we observe higher contribution to the overall variance of the second principal direction that represents the dynamic of Dai, especially for the Gaussian and Student-t PPCA. The variance explained by the third components, that characterizes Theter, is significantly lower.  The methods from the GSt PPCA family suggest higher proportion of the first principal components to the overall variance. 

In 2019, all frameworks suggest higher loadings on stable coins in the first component than in 2018; the second principal direction concentrates solely on the dynamic of Chain Link and the third component is dominated by stable coins. The frameworks from GSt PPCA family suggest higher loadings on one of the stable coins, Dai or USD Tether. Therefore, the stable coins are recognized as collinear by all frameworks in 2019. Also, their correlation with the rest of the assets increases. The proportions of eigenvalues illustrated in Figure \ref{fig:params_evecl_cat} (b) suggest that the first component dominates the variance of the dataset and the remaining ones have a significantly lower contributions. Therefore, the variance of the market explained by the stable coins is reduced in 2019.

\begin{figure}[H]
  \centering  
         \begin{subfigure}[b]{0.69\textwidth}
         \centering
\includegraphics[width = \textwidth]{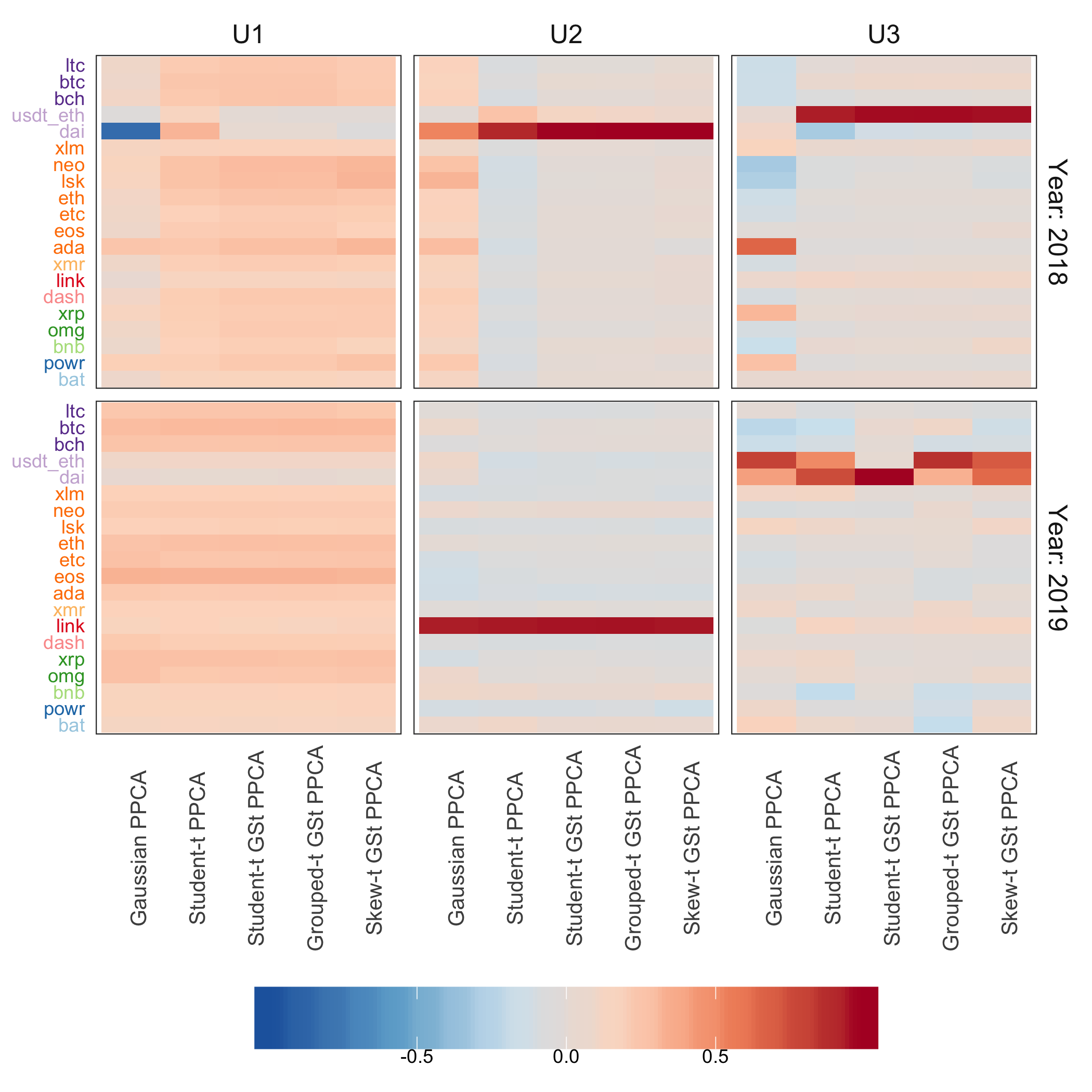}     
\caption{}
     \end{subfigure}  
            \begin{subfigure}[b]{0.295\textwidth}
         \centering
	\includegraphics[width = \textwidth]{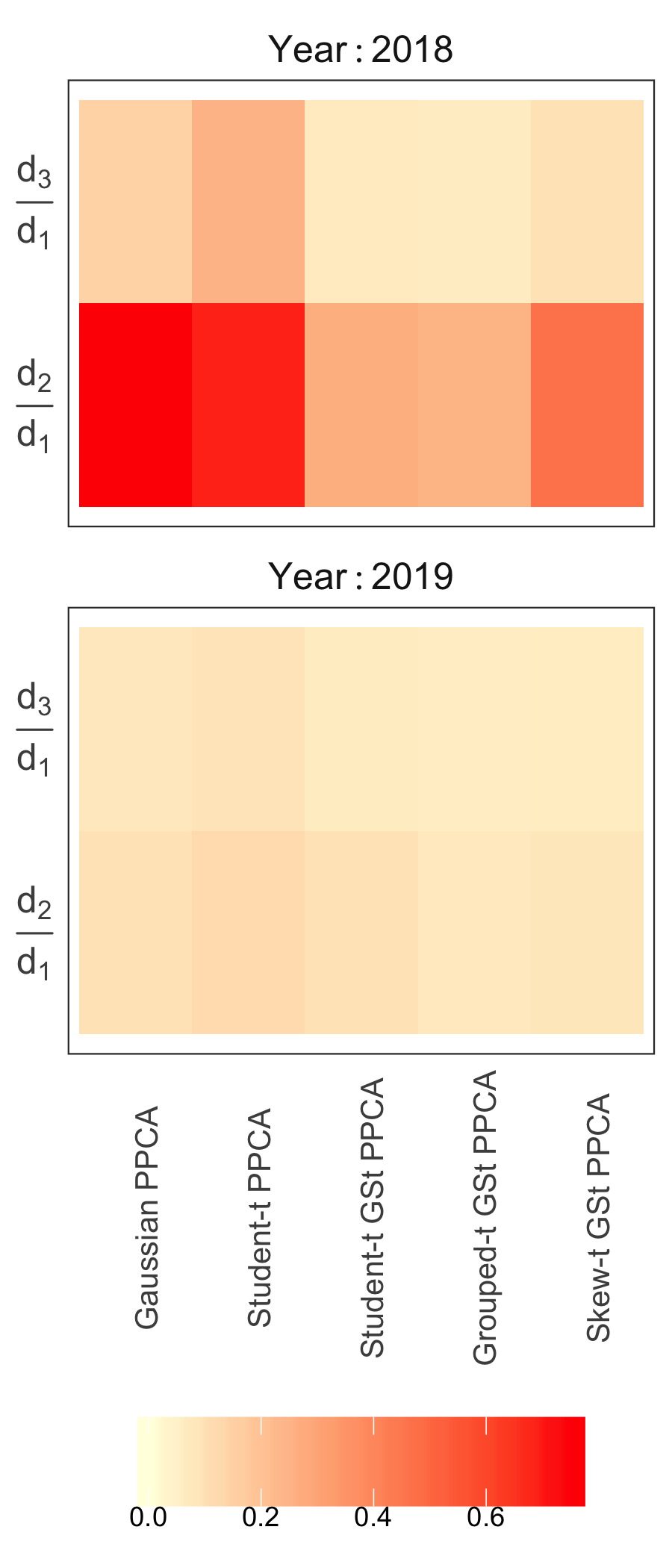}
	\caption{}
     \end{subfigure}   
       
 \caption{The yearly estimates of eigenvectors of the covariance matrix of $\mathbf{Y}_t$ (a) and the proportions between the estimates of the first eigenvalue ($d_1$) and the remaining ones($d_2, \ d_3$) obtained via $5$ PPCA algorithms (x-axis): Gaussian PPCA, Student-t PPCA, Student-t GSt PPCA and Grouped-t GSt PPCA and Skew-t GSt PPCA. The column-wise order corresponds to the year of the sample set and the standardisation method applied to returns.The observation process  consists of $20$ cryptocurrencies listed in Table \ref{tab:crypto_list} with category-specific assumption on heavy tails if possible. }\label{fig:params_evecl_cat}
\end{figure}
\FloatBarrier
\section{Conclusions}\label{sec:conclusions}
The research presented in this work constitutes important and novel contributions towards probabilistic feature extraction methods and their application to statistical modelling.  We focused on developing a dimensionality reduction methodology, which addresses a difficult but not uncommon situation when the underlying observation data is not fully observed; that is, it both contains missing information and is corrupted by noise. We develop a framework of dimensionality reduction which adapts PPCA and extends the standard assumptions on the distribution of the observation data to heavy-tailed and skewed distributions.  The method addresses a common situation, in which the elements of an observation vector have non-trivial dependency structures. It is especially relevant to large data problems when subsets of the observation vector represent different, complex information, often combined from various data sets. Therefore, we introduce a method which assumes many sources of corruption which are no longer identical across the dimensionality of the observation vector. Crucially, the model improves the ability to capture tail dependence patterns in the multivariate data analysis. The adaptation of the skew-t distribution in the PPCA framework adds flexibility to account for asymmetric distributions if they are relevant to an observation set. We derive efficient Expectation-Maximisation algorithm for the estimation of the parameters in the new model.

We assessed the robustness of the new class of PPCA methods by examining their estimation accuracy and various measures of sensitivity to corruption defined by influence function on simulation studies. The developed family of methods is characterized by the highest robustness in comparison to standard PPCA methods such as Gaussian PPCA or Student-t PPCA.  If the sample data reveals characteristics captured by the GSt PPCA family, such as separation of the tail effect or grouped multiple-degree-of-freedom structures that define patterns of marginal heavy-tail distributions, we show the significant loss of accuracy and robustness for simpler approaches and high robustness of GSt PPCA family. The class of GSt PPCA methods is also most robust when the data follow simpler distributions.

We illustrated the applicability and performance of the new class of methods on a real study where we examine linear interactions and covariance decompositions on the dataset of $20$ cryptocurrencies.  We commented on the practical aspect of the exercise such as benefits of diversification as the study identifies that the majority of the assets are driven by a common factor that is highly correlated to Bitcoin and has the highest contribution to the overall variance.   This outcome can aid portfolio selection and management.  Also, the remaining components of the decomposition reveal uncorrelated assets that can be used for risk hedging purposes.

\bibliographystyle{unsrt}  
\bibliography{bib}  
\newpage
\appendix

\section{Proofs of the EM Algorithm for Generalized Skew-t Probabilistic Principal Component Analysis} \label{appendix:proofs_GStS}

\begin{lemma}\label{lemma:Gaussian_Conditional}
Let $\mathbf{X}$ be a $d$-dimensional random vectors such that $\mathbf{X} = \big[ \mathbf{X}_1, \mathbf{X}_2\big]$ for $d_1$-dimensional subvector $\mathbf{X}_1$ and $d_2$-dimensional subvector $\mathbf{X}_2$ , $d_1 + d_2 = d$. If  $\mathbf{X}$ follows multivariate Gaussian distribution, that is
\begin{equation*}
\mathbf{X} = \Big[ \mathbf{X}_1, \mathbf{X}_2\Big] \sim \mathcal{N} \Bigg( \big[\bm{\mu}_1, \bm{\mu}_2 \big], \begin{bmatrix}
\bm{\Sigma}_{11} & \bm{\Sigma}_{12} \\
\bm{\Sigma}_{21} & \bm{\Sigma}_{22}
\end{bmatrix}\Bigg),
\end{equation*}
then for $i,j = 1,2$ and $i \neq j$, we have the following 
\begin{align*}
& \mathbf{X}_i | \mathbf{X}_j \sim \mathcal{N} \Big( \bm{\mu}_i + \big[ \mathbf{X}_j - \bm{\mu}_j \big] \bm{\Sigma}_{jj}^{-1} \bm{\Sigma}_{ji},  \bm{\Sigma}_{ii} -  \bm{\Sigma}_{ij}  \bm{\Sigma}_{jj}^{-1}  \bm{\Sigma}_{ji} \Big) \text{ for } \mathbf{X}_i \sim \mathcal{N} \big( \bm{\mu}_i,  \bm{\Sigma}_{ii} \big) , 
\end{align*}
where $\bm{\mu}_1, \bm{\mu}_2$ are $d_1$- and $d_2$-dimensional real valued vectors, respectively, $\bm{\Sigma}_{11}$ and $ \bm{\Sigma}_{11}$ are $d_1 \times d_1$ and $d_2 \times d_2$ symmetric, positive-definite real valued matrices, and $ \bm{\Sigma}_{12} =  \bm{\Sigma}_{21}^T$ is a $d_1 \times d_2$ real valued matrix. 
\end{lemma}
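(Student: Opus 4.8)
The plan is to prove the conditional statement for $i=1$, $j=2$; the case $i=2$, $j=1$ then follows verbatim by interchanging the two blocks. The idea I would use is the classical one: build out of $\mathbf{X}_1$ and $\mathbf{X}_2$ a new Gaussian vector that is \emph{independent} of $\mathbf{X}_2$, so that conditioning on $\mathbf{X}_2$ reduces to a deterministic shift. Since $\bm{\Sigma}_{22}$ is positive-definite, hence invertible, I would set $\mathbf{G} := \mathbf{X}_1 - \mathbf{X}_2\bm{\Sigma}_{22}^{-1}\bm{\Sigma}_{21}$, a $d_1$-dimensional row vector. The stacked vector $\big[\mathbf{G},\mathbf{X}_2\big]$ is an affine transformation of $\mathbf{X}$, hence jointly Gaussian, so it is fully determined by its first two moments.

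The first step is the moment computation. A short calculation gives $\mathbb{E}[\mathbf{G}] = \bm{\mu}_1 - \bm{\mu}_2\bm{\Sigma}_{22}^{-1}\bm{\Sigma}_{21}$ and, using $\bm{\Sigma}_{21}^{T}=\bm{\Sigma}_{12}$ together with the symmetry of $\bm{\Sigma}_{22}$,
\begin{equation*}
\mathbf{Cov}\big[\mathbf{G},\mathbf{X}_2\big] = \bm{\Sigma}_{12} - \bm{\Sigma}_{12}\bm{\Sigma}_{22}^{-1}\bm{\Sigma}_{22} = \mathbf{0},
\qquad
\mathbf{Cov}[\mathbf{G}] = \bm{\Sigma}_{11} - \bm{\Sigma}_{12}\bm{\Sigma}_{22}^{-1}\bm{\Sigma}_{21}.
\end{equation*}
Because $\mathbf{G}$ and $\mathbf{X}_2$ are jointly Gaussian with vanishing cross-covariance, they are independent; and $\mathbf{Cov}[\mathbf{G}]$ is a bona fide covariance matrix (indeed positive-definite, being the Schur complement of the positive-definite $\bm{\Sigma}$).

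Next I would write $\mathbf{X}_1 = \mathbf{G} + \mathbf{X}_2\bm{\Sigma}_{22}^{-1}\bm{\Sigma}_{21}$ and condition on $\mathbf{X}_2$: the second summand is then a constant, and since $\mathbf{G}$ is independent of $\mathbf{X}_2$ its conditional law coincides with its marginal law. Hence $\mathbf{X}_1\mid\mathbf{X}_2$ is Gaussian with mean $\bm{\mu}_1 + (\mathbf{X}_2-\bm{\mu}_2)\bm{\Sigma}_{22}^{-1}\bm{\Sigma}_{21}$ and covariance $\bm{\Sigma}_{11} - \bm{\Sigma}_{12}\bm{\Sigma}_{22}^{-1}\bm{\Sigma}_{21}$, exactly as claimed. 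The marginal assertion $\mathbf{X}_i\sim\mathcal{N}(\bm{\mu}_i,\bm{\Sigma}_{ii})$ I would simply read off from the fact that a sub-vector of a Gaussian vector is a linear image of it, with the correspondingly restricted mean vector and covariance block.

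I do not expect a genuine obstacle here — this is a standard identity. The only point needing care is keeping the row-vector / transpose conventions of the paper consistent throughout the moment computations, so that, for instance, $\bm{\Sigma}_{22}^{-1}\bm{\Sigma}_{21}$ sits on the correct side of $\mathbf{X}_2$. An alternative, more computational route would bypass the ``uncorrelated $\Rightarrow$ independent'' step: substitute the block-inverse formula for $\bm{\Sigma}^{-1}$, whose $(1,1)$ block is $\big(\bm{\Sigma}_{11} - \bm{\Sigma}_{12}\bm{\Sigma}_{22}^{-1}\bm{\Sigma}_{21}\big)^{-1}$, into the joint density, complete the square in $\mathbf{x}_1$, and use the determinant factorisation $|\bm{\Sigma}| = |\bm{\Sigma}_{22}|\,\big|\bm{\Sigma}_{11} - \bm{\Sigma}_{12}\bm{\Sigma}_{22}^{-1}\bm{\Sigma}_{21}\big|$ to handle the normalising constant; this is longer but entirely mechanical.
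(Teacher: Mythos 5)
Your proof is correct, but note that the paper does not actually prove this lemma at all: its ``proof'' is a one-line citation to Theorem 2.3.12 of Gupta and Nagar, so there is no in-paper argument to compare against. What you supply is the classical self-contained derivation via decorrelation: setting $\mathbf{G} = \mathbf{X}_1 - \mathbf{X}_2\bm{\Sigma}_{22}^{-1}\bm{\Sigma}_{21}$, checking $\mathbf{Cov}[\mathbf{G},\mathbf{X}_2]=\mathbf{0}$, invoking ``jointly Gaussian and uncorrelated implies independent,'' and then reading off the conditional law of $\mathbf{X}_1 = \mathbf{G} + \mathbf{X}_2\bm{\Sigma}_{22}^{-1}\bm{\Sigma}_{21}$ given $\mathbf{X}_2$. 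Your moment computations are consistent with the paper's row-vector convention (the regression matrix $\bm{\Sigma}_{jj}^{-1}\bm{\Sigma}_{ji}$ correctly multiplies $\mathbf{X}_j - \bm{\mu}_j$ on the right), and the marginal claim for $\mathbf{X}_i$ follows as you say from linearity. The only cosmetic point is that you justify positive-definiteness of the conditional covariance by appealing to positive-definiteness of the full $\bm{\Sigma}$, which the lemma statement does not explicitly assume (it only asserts it for the diagonal blocks); this is harmless since non-degeneracy of the joint Gaussian is implicit, but you could instead note that positive semi-definiteness of the Schur complement already suffices for $\mathbf{G}$ to be a well-defined Gaussian vector. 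Either your constructive route or the density-completion-of-squares alternative you sketch would serve as a full replacement for the paper's external citation.
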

\begin{proof}
Please refer to Theorem 2.3.12 in \cite{Gupta2003a}.
\end{proof}

\begin{lemma}\label{lemma:tildePi_Xt_Yt_ind}
Let a $d$ dimensional observation vector $\mathbf{Y}_t$ be modelled as in \eqref{eq:ppca_model} under the Generalized Skew-t PPCA model defined in Section~\ref{sec:GStSPPCA} with the latent processes $\mathbf{X}_t, \ \bm{\epsilon}_t, \ \mathbf{U}_t$ and $\mathbf{V}_t$ following the assumptions defined in (5) and (6). The join probability function of the variables $\mathbf{Y}_t, \mathbf{X}_t, \mathbf{U}_t$ and $\mathbf{V}_t$ can be decomposed into the product of three functions
\begin{equation*}
\pi_{\mathbf{Y}_t, \mathbf{X}_t, \mathbf{U}_t, \mathbf{V}_t| \Psi}(\mathbf{y}_t, \mathbf{x}_t, \mathbf{u}_t,\mathbf{v}_t)   = \pi_{\mathbf{X}_t|\mathbf{Y}_t,\mathbf{U}_t,\mathbf{V}_t,\Psi}(\mathbf{x}_t)   \pi_{\mathbf{Y}_t|\mathbf{U}_t,\mathbf{V}_t,\Psi}(\mathbf{y}_t)  C( \mathbf{u}_t,\mathbf{v}_t;\Psi \big) ,
\end{equation*}
where $\pi_{\mathbf{X}_t|\mathbf{Y}_t,\mathbf{U}_t,\mathbf{V}_t,\Psi}(\mathbf{x}_t)$ and $\pi_{\mathbf{Y}_t|\mathbf{U}_t\mathbf{V}_t,\Psi}(\mathbf{x}_t)$ are the conditional probability function of the $k$ dimensional random vector $\mathbf{X}_t|\mathbf{Y}_t,\mathbf{U}_t,\mathbf{V}_t \sim \mathcal{N} \big(\bm{\mu}_{x,t}, \bm{\Sigma}_{x,t} \big)$ and $d$ dimensional random vector $\mathbf{Y}_t|\mathbf{U}_t,\mathbf{V}_t \sim \mathcal{N} \big(\bm{\mu}_{y,t}, \bm{\Sigma}_{y,t} \big)$, respectively, for the first and second central moments of the distributions given by
\begin{align*}
& \bm{\mu}_{x,t} = \Big( \big(\mathbf{y}_t - \bm{\mu} - \bm{\delta}_\epsilon \mathbf{D}_{\epsilon,t}^{-1}\big)\mathbf{D}_{\epsilon,t} \mathbf{W} + \sigma^2 \bm{\delta}_x \Big) \mathbf{M}_t^{-1} \text{ and } \bm{\Sigma}_{x,t} = \sigma^2 \mathbf{M}_t^{-1}, \\
&\bm{\mu}_{y,t} =   \bm{\mu} + \bm{\delta}_\epsilon \mathbf{D}_{\epsilon,t}^{-1} + \sigma^2 \bm{\delta}_x \mathbf{M}_t^{-1}\mathbf{W}^T  \mathbf{D}_{\epsilon,t}  \mathbf{N}_t^{-1} \text{ and } \bm{\Sigma}_{y,t} = \sigma^2  \mathbf{N}_t^{-1} 
\end{align*}
for $\mathbf{M}_t = \sigma^2 \mathbf{D}_{x,t} + \mathbf{W}^T \mathbf{D}_{\epsilon,t} \mathbf{W}$ and $ \mathbf{N}_t =  \mathbf{D}_{\epsilon,t}  -  \mathbf{D}_{\epsilon,t} \mathbf{W}\mathbf{M}_t^{-1} \mathbf{W}^T \mathbf{D}_{\epsilon,t}$. The function $C:\mathbb{R}_+^d \times \mathbb{R}_+^k \longrightarrow \mathbb{R}$ is given by
\begin{align*}
&C(\mathbf{u}_t, \mathbf{v}_t; \Psi) =  \pi_{\mathbf{U}_t | \Psi} ( \mathbf{u}_t)   \pi_{\mathbf{V}_t | \Psi} ( \mathbf{v}_t) \big(\sigma^2\big)^{\frac{k}{2}} \Big| \mathbf{D}_{\epsilon,t}\Big|^{ \frac{1}{2}}  \Big| \mathbf{D}_{x,t}\Big|^{\frac{1}{2}} \Big| \mathbf{M}_t\Big|^{- \frac{1}{2}} \Big|\mathbf{N}_t\Big|^{-\frac{1}{2}}  \\\notag
&\hspace{0.3cm} \times \exp \bigg\{ - \frac{1}{2} \bm{\delta}_x \Big( \mathbf{D}_{x,t}^{-1} - \sigma^2  \mathbf{M}_t^{-1}\mathbf{W}^T \mathbf{D}_{\epsilon,t} \mathbf{N}_t^{-1}\mathbf{D}_{\epsilon,t} \mathbf{W}\mathbf{M}_t^{-1}  - \sigma^2  \mathbf{M}_t^{-1} \Big)\bm{\delta}_x^T \bigg\}.
\end{align*}
\end{lemma}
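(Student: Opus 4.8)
I would start from the model's conditional-independence factorisation, which for a single time $t$ (cf.\ \eqref{eq:joinProbability_ind}) reads $\pi_{\mathbf{Y}_t,\mathbf{X}_t,\mathbf{U}_t,\mathbf{V}_t|\Psi}(\mathbf{y}_t,\mathbf{x}_t,\mathbf{u}_t,\mathbf{v}_t)=\pi_{\mathbf{Y}_t|\mathbf{X}_t,\mathbf{U}_t,\mathbf{V}_t,\Psi}(\mathbf{y}_t)\,\pi_{\mathbf{X}_t|\mathbf{U}_t,\mathbf{V}_t,\Psi}(\mathbf{x}_t)\,\pi_{\mathbf{U}_t|\Psi}(\mathbf{u}_t)\,\pi_{\mathbf{V}_t|\Psi}(\mathbf{v}_t)$, and substitute the conditionally Gaussian laws recalled in Section~\ref{ssec:PPCA_GStS_ind}, namely $\mathbf{X}_t|\mathbf{U}_t,\mathbf{V}_t\sim\mathcal{N}(\bm{\delta}_x\mathbf{D}_{x,t}^{-1},\mathbf{D}_{x,t}^{-1})$ and $\mathbf{Y}_t|\mathbf{X}_t,\mathbf{U}_t,\mathbf{V}_t\sim\mathcal{N}(\bm{\mu}+\bm{\delta}_\epsilon\mathbf{D}_{\epsilon,t}^{-1}+\mathbf{X}_t\mathbf{W}^T,\sigma^2\mathbf{D}_{\epsilon,t}^{-1})$. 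Since $\mathbf{D}_{\epsilon,t}$ and $\mathbf{D}_{x,t}$ are deterministic once $\mathbf{U}_t,\mathbf{V}_t$ are fixed, $(\mathbf{X}_t,\mathbf{Y}_t)\mid\mathbf{U}_t,\mathbf{V}_t$ is a linear-Gaussian model, so $\pi_{\mathbf{Y}_t|\mathbf{X}_t,\mathbf{U}_t,\mathbf{V}_t}\,\pi_{\mathbf{X}_t|\mathbf{U}_t,\mathbf{V}_t}$ will re-factor as $\pi_{\mathbf{X}_t|\mathbf{Y}_t,\mathbf{U}_t,\mathbf{V}_t}\,\pi_{\mathbf{Y}_t|\mathbf{U}_t,\mathbf{V}_t}$ times a factor free of $\mathbf{x}_t,\mathbf{y}_t$; collecting that factor with $\pi_{\mathbf{U}_t|\Psi}\pi_{\mathbf{V}_t|\Psi}$ is what will yield $C(\mathbf{u}_t,\mathbf{v}_t;\Psi)$. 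I would carry out this re-factoring by completing the square twice in the joint exponent; as a cross-check, the moments of $\mathbf{X}_t\mid\mathbf{Y}_t$ can also be obtained from Lemma~\ref{lemma:Gaussian_Conditional} applied to the joint Gaussian $(\mathbf{X}_t,\mathbf{Y}_t)\mid\mathbf{U}_t,\mathbf{V}_t$.

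First I would add the two exponents and collect the terms quadratic and linear in $\mathbf{x}_t$. The quadratic coefficient is $\frac{1}{\sigma^2}\mathbf{W}^T\mathbf{D}_{\epsilon,t}\mathbf{W}+\mathbf{D}_{x,t}=\frac{1}{\sigma^2}\mathbf{M}_t$, which identifies $\bm{\Sigma}_{x,t}=\sigma^2\mathbf{M}_t^{-1}$, and matching the linear coefficient gives $\bm{\mu}_{x,t}=\big((\mathbf{y}_t-\bm{\mu}-\bm{\delta}_\epsilon\mathbf{D}_{\epsilon,t}^{-1})\mathbf{D}_{\epsilon,t}\mathbf{W}+\sigma^2\bm{\delta}_x\big)\mathbf{M}_t^{-1}$, as asserted. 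The residual left after extracting $\pi_{\mathbf{X}_t|\mathbf{Y}_t,\mathbf{U}_t,\mathbf{V}_t}$ is a quadratic form in the shifted variable $\widetilde{\mathbf{y}}_t:=\mathbf{y}_t-\bm{\mu}-\bm{\delta}_\epsilon\mathbf{D}_{\epsilon,t}^{-1}$, with quadratic coefficient the Schur-complement matrix $\frac{1}{\sigma^2}\big(\mathbf{D}_{\epsilon,t}-\mathbf{D}_{\epsilon,t}\mathbf{W}\mathbf{M}_t^{-1}\mathbf{W}^T\mathbf{D}_{\epsilon,t}\big)=\frac{1}{\sigma^2}\mathbf{N}_t$; a second completion of the square in $\widetilde{\mathbf{y}}_t$ then delivers $\bm{\Sigma}_{y,t}=\sigma^2\mathbf{N}_t^{-1}$ and the centring $\sigma^2\bm{\delta}_x\mathbf{M}_t^{-1}\mathbf{W}^T\mathbf{D}_{\epsilon,t}\mathbf{N}_t^{-1}$, i.e.\ $\bm{\mu}_{y,t}=\bm{\mu}+\bm{\delta}_\epsilon\mathbf{D}_{\epsilon,t}^{-1}+\sigma^2\bm{\delta}_x\mathbf{M}_t^{-1}\mathbf{W}^T\mathbf{D}_{\epsilon,t}\mathbf{N}_t^{-1}$. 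Because all $\bm{\mu}$- and $\bm{\delta}_\epsilon$-dependence is absorbed into $\bm{\mu}_{x,t}$ and $\bm{\mu}_{y,t}$ via the shift, the only constant surviving both completions is quadratic in $\bm{\delta}_x$.

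Finally I would assemble $C$. The surviving quadratic constant comes out exactly as $\bm{\delta}_x\big(\mathbf{D}_{x,t}^{-1}-\sigma^2\mathbf{M}_t^{-1}\mathbf{W}^T\mathbf{D}_{\epsilon,t}\mathbf{N}_t^{-1}\mathbf{D}_{\epsilon,t}\mathbf{W}\mathbf{M}_t^{-1}-\sigma^2\mathbf{M}_t^{-1}\big)\bm{\delta}_x^T$, giving the displayed exponential; the normalising-constant mismatch is the ratio of the two original normalisers $|\sigma^2\mathbf{D}_{\epsilon,t}^{-1}|^{-1/2}$, $|\mathbf{D}_{x,t}^{-1}|^{-1/2}$ to the normalisers $|\sigma^2\mathbf{M}_t^{-1}|^{-1/2}$, $|\sigma^2\mathbf{N}_t^{-1}|^{-1/2}$ of the two extracted Gaussians, which via $|c\mathbf{A}|=c^{n}|\mathbf{A}|$ alone collapses to $(\sigma^2)^{k/2}|\mathbf{D}_{\epsilon,t}|^{1/2}|\mathbf{D}_{x,t}|^{1/2}|\mathbf{M}_t|^{-1/2}|\mathbf{N}_t|^{-1/2}$; multiplying by $\pi_{\mathbf{U}_t|\Psi}(\mathbf{u}_t)\pi_{\mathbf{V}_t|\Psi}(\mathbf{v}_t)$ gives the stated $C(\mathbf{u}_t,\mathbf{v}_t;\Psi)$. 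The main obstacle is the algebra inside the double completion of the square: one must verify that the $\mathbf{x}_t$- and $\widetilde{\mathbf{y}}_t$-precision matrices are precisely $\frac{1}{\sigma^2}\mathbf{M}_t$ and $\frac{1}{\sigma^2}\mathbf{N}_t$ and that the cross terms recombine cleanly, for which the relation $\mathbf{W}^T\mathbf{D}_{\epsilon,t}\mathbf{W}=\mathbf{M}_t-\sigma^2\mathbf{D}_{x,t}$ and (to reconcile $\bm{\Sigma}_{x,t}=\sigma^2\mathbf{M}_t^{-1}$ with the Schur-complement form produced by Lemma~\ref{lemma:Gaussian_Conditional}) the Sherman--Morrison--Woodbury identity are the key tools; the remaining steps are routine.
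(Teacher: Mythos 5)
Your proposal is correct and follows essentially the same route as the paper's proof: the chain-rule factorisation of the joint density into $\pi_{\mathbf{Y}_t|\mathbf{X}_t,\mathbf{U}_t,\mathbf{V}_t,\Psi}\,\pi_{\mathbf{X}_t|\mathbf{U}_t,\mathbf{V}_t,\Psi}\,\pi_{\mathbf{U}_t|\Psi}\,\pi_{\mathbf{V}_t|\Psi}$, followed by two successive completions of the square (first in $\mathbf{x}_t$, identifying the precision $\frac{1}{\sigma^2}\mathbf{M}_t$, then in the shifted observation, identifying the Schur complement $\frac{1}{\sigma^2}\mathbf{N}_t$), with the leftover $\bm{\delta}_x$-quadratic and the ratio of normalising constants collected into $C(\mathbf{u}_t,\mathbf{v}_t;\Psi)$. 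Your bookkeeping of the surviving exponent and of the determinant factors agrees with the paper's derivation, so no gap remains.
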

\begin{proof}
Using the Chain Rule of probabilities we obtain the following decomposition of the likelihood function
\begin{align}\label{eq:tildePi_Xt_deriv_ind}\notag
& \pi_{\mathbf{Y}_t, \mathbf{X}_t, \mathbf{U}_t, \mathbf{V}_t| \Psi}(\mathbf{y}_t, \mathbf{x}_t, \mathbf{u}_t,\mathbf{v}_t)  = \pi_{\mathbf{Y}_t|\mathbf{X}_t, \mathbf{U}_t, \mathbf{V}_t,\Psi} ( \mathbf{y}_t) \cdot \pi_{\mathbf{X}_t| \mathbf{U}_t, \mathbf{V}_t,\Psi} (\mathbf{x}_t) \cdot \pi_{\mathbf{U}_t | \Psi} ( \mathbf{u}_t) \cdot \pi_{\mathbf{V}_t | \Psi} ( \mathbf{v}_t)  \\\notag
& = \pi_{\mathbf{U}_t | \Psi} ( \mathbf{u}_t) \cdot \pi_{\mathbf{V}_t | \Psi} ( \mathbf{v}_t) \  \big(2\pi\big)^{-\frac{d+k}{2}}\big(\sigma^2\big)^{-\frac{d}{2}} \Big| \mathbf{D}_{\epsilon,t} \Big|^{ \frac{1}{2}} \Big| \mathbf{D}_{x,t} \Big|^{\frac{1}{2}} \exp \Bigg\{ - \frac{1}{2} \big( \mathbf{x}_t - \bm{\mu}_{x,t} \big)\bm{\Sigma}_{x,t}^{-1} \big(\mathbf{x}_t - \bm{\mu}_{x,t}  \big)^T \Bigg\} \\
& \hspace{0.3cm} \times \exp \Bigg\{ - \frac{1}{2} \bigg( \bm{\delta}_x \mathbf{D}_{x,t}^{-1}\bm{\delta}_x^T  + \frac{1}{\sigma^2}\big(\mathbf{y}_t - \bm{\mu} - \bm{\delta}_\epsilon \mathbf{D}_{\epsilon,t}^{-1}\big)\mathbf{D}_{\epsilon,t} \big(\mathbf{y}_t - \bm{\mu} - \bm{\delta}_\epsilon \mathbf{D}_{\epsilon,t}^{-1} \big)^T -\bm{\mu}_{x,t} \bm{\Sigma}_{x,t}^{-1} \bm{\mu}_{x,t}^T \bigg)  \Bigg\},
\end{align}
where $\bm{\Sigma}_{x,t}  = \sigma^2 \mathbf{M}_t^{-1}, \ \bm{\mu}_{x,t}  = \Big( \big(\mathbf{y}_t - \bm{\mu} - \bm{\delta}_\epsilon \mathbf{D}_{\epsilon,t}^{-1}\big)\mathbf{D}_{\epsilon,t} \mathbf{W} + \sigma^2 \bm{\delta}_x \Big) \mathbf{M}_t^{-1}$ and $\mathbf{M}_t = \sigma^2 \mathbf{D}_{x,t} + \mathbf{W}^T \mathbf{D}_{\epsilon,t} \mathbf{W}$. Let us denote 
\begin{equation*}
\pi_{\mathbf{X}_t|\mathbf{Y}_t,\mathbf{U}_t,\mathbf{V}_t,\Psi}(\mathbf{x}_t)  = \big( 2 \pi\big)^{- \frac{k}{2}} \Big|\bm{\Sigma}_{x,t} \Big|^{-\frac{1}{2}} \exp \bigg\{ - \frac{1}{2} \big( \mathbf{x}_t - \bm{\mu}_{x,t} \big)\bm{\Sigma}_{x,t}^{-1} \big( \mathbf{x}_t - \bm{\mu}_{x,t} \big)^T \bigg\} ,
\end{equation*}
and remark that it is a probability function of $k$ dimensional Gaussian random variable with the mean vector $\bm{\mu}_{x,t}$ and the covariance matrix $\bm{\Sigma}_{x,t}$. The probability function $\pi_{\mathbf{X}_t|\mathbf{Y}_t,\mathbf{U}_t,\mathbf{V}_t,\Psi}(\mathbf{x}_t)$ contains all expressions with the vector $\mathbf{x}_t$ present in the probability function $\pi_{\mathbf{Y}_t, \mathbf{X}_t, \mathbf{U}_t, \mathbf{V}_t| \Psi}(\mathbf{y}_t, \mathbf{x}_t, \mathbf{u}_t,\mathbf{v}_t)$. In the next step we show how to obtain the formula for the probability function $\pi_{\mathbf{Y}_t|\mathbf{U}_t,\mathbf{V}_t,\Psi}(\mathbf{x}_t)$, which jointly with the function $\pi_{\mathbf{X}_t|\mathbf{Y}_t,\mathbf{U}_t,\mathbf{V}_t,\Psi}(\mathbf{x}_t)$,  contains all expressions with the vector $\mathbf{y}_t$. Using  \eqref{eq:tildePi_Xt_deriv_ind}, we obtain the following formulations
\begin{align*}
&\pi_{\mathbf{Y}_t, \mathbf{X}_t, \mathbf{U}_t, \mathbf{V}_t| \Psi}(\mathbf{y}_t, \mathbf{x}_t, \mathbf{u}_t,\mathbf{v}_t) =   \pi_{\mathbf{X}_t|\mathbf{Y}_t,\mathbf{U}_t,\mathbf{V}_t,\Psi}(\mathbf{x}_t) \ \pi_{\mathbf{U}_t | \Psi} ( \mathbf{u}_t) \  \pi_{\mathbf{V}_t | \Psi} ( \mathbf{v}_t) \  \big(2\pi\big)^{-\frac{d}{2}} \big(\sigma^2\big)^{-\frac{d-k}{2}}  \Big|\mathbf{D}_{\epsilon,t} \Big|^{\frac{1}{2}} \Big| \mathbf{D}_{x,t} \Big|^{ \frac{1}{2}} \Big| \mathbf{M}_t\Big|^{- \frac{1}{2}}  \\\notag
&\hspace{0.3cm} \times   \exp \bigg\{ - \frac{1}{2} \bm{\delta}_x \mathbf{D}_{x,t}^{-1}\bm{\delta}_x^T -\frac{1}{2}\big( \underbrace{\mathbf{y}_t  - \bm{\mu} - \bm{\delta}_\epsilon \mathbf{D}_{\epsilon,t}^{-1} - \sigma^2 \bm{\delta}_x \mathbf{M}_t^{-1}\mathbf{W}^T \mathbf{D}_{\epsilon,t} \mathbf{N}_t^{-1}}_{\mathbf{y}_t  - \bm{\mu}_{y,t}}\big)\bm{\Sigma}_{y,t}^{-1}\big( \underbrace{\mathbf{y}_t - \bm{\mu} - \bm{\delta}_\epsilon \mathbf{D}_{\epsilon,t}^{-1} - \sigma^2 \bm{\delta}_x \mathbf{M}_t^{-1}\mathbf{W}^T\mathbf{D}_{\epsilon,t} \mathbf{N}_t^{-1}}_{\mathbf{y}_t  - \bm{\mu}_{y,t}}\big)^T\Bigg\} \\\notag
&\hspace{0.3cm}  \times  \exp \Bigg\{ -\frac{1}{2} \bigg( -\sigma^2 \bm{\delta}_x \mathbf{M}_t^{-1}\mathbf{W}^T \mathbf{D}_{\epsilon,t} \mathbf{N}_t^{-1}\mathbf{D}_{\epsilon,t} \mathbf{W}\mathbf{M}_t^{-1}\bm{\delta}_x^T  -  \sigma^2 \bm{\delta}_x \mathbf{M}_t^{-1} \bm{\delta}_x^T\bigg)\Bigg\},
\end{align*}
where $\bm{\Sigma}_{y,t} = \sigma^2 \mathbf{N}_t^{-1} \text{ for } \mathbf{N}_t = \mathbf{D}_{\epsilon,t} - \mathbf{D}_{\epsilon,t}\mathbf{W}\mathbf{M}_t^{-1} \mathbf{W}^T \mathbf{D}_{\epsilon,t}$. Let us denote $\bm{\mu}_{y,t} =   \bm{\mu} + \bm{\delta}_\epsilon \mathbf{D}_{\epsilon,t}^{-1} + \sigma^2 \bm{\delta}_x \mathbf{M}_t^{-1}\mathbf{W}^T \mathbf{D}_{\epsilon,t} \mathbf{N}_t^{-1}$ and define 
\begin{equation*}
\pi_{\mathbf{Y}_t|\mathbf{U}_t,\mathbf{V}_t,\Psi}(\mathbf{y}_t)  = \big( 2 \pi\big)^{- \frac{d}{2}} \Big|\bm{\Sigma}_{y,t} \Big|^{-\frac{1}{2}} \exp \Bigg\{ - \frac{1}{2} \big( \mathbf{y}_t - \bm{\mu}_{y,t} \big)\bm{\Sigma}_{y,t}^{-1} \big( \mathbf{y}_t - \bm{\mu}_{y,t} \big)^T \Bigg\}.
\end{equation*}
The probability function $\pi_{\mathbf{Y}_t|\mathbf{U}_t,\mathbf{V}_t,\Psi}(\mathbf{y}_t) $ is a density of the $d$-dimensional random variable which follows the Gaussian distribution with the mean vector $\bm{\mu}_{y,t}$ and the covariance matrix $\bm{\Sigma}_{y,t}$. Using the definition of the density function we obtain that
\begin{align*}
&\pi_{\mathbf{Y}_t, \mathbf{X}_t, \mathbf{U}_t, \mathbf{V}_t| \Psi}(\mathbf{y}_t, \mathbf{x}_t, \mathbf{u}_t,\mathbf{v}_t) = 
 \pi_{\mathbf{X}_t|\mathbf{Y}_t,\mathbf{U}_t,\mathbf{V}_t,\Psi}(\mathbf{x}_t) \ \pi_{\mathbf{Y}_t |\mathbf{U}_t,\mathbf{V}_t,\Psi}(\mathbf{y}_t) \ \pi_{\mathbf{U}_t | \Psi} ( \mathbf{u}_t) \  \pi_{\mathbf{V}_t | \Psi} ( \mathbf{v}_t)  \\\notag
&\hspace{0.3cm}  \times \big(\sigma^2\big)^{-\frac{d-k}{2}}   \Big| \mathbf{D}_{\epsilon,t} \Big|^{ \frac{1}{2}} \mathbf{D}_{x,t} \Big|^{ \frac{1}{2}} \Big| \mathbf{M}_t\Big|^{- \frac{1}{2}} \Big| \sigma^2 \mathbf{N}_t^{-1}\Big|^{\frac{1}{2}}  \exp \Bigg\{ - \frac{1}{2} \bm{\delta}_x \Big( \mathbf{D}_{x,t}^{-1} - \sigma^2  \mathbf{M}_t^{-1}\mathbf{W}^T \mathbf{D}_{\epsilon,t} \mathbf{N}_t^{-1}\mathbf{D}_{\epsilon,t} \mathbf{W}\mathbf{M}_t^{-1}  - \sigma^2  \mathbf{M}_t^{-1} \Big)\bm{\delta}_x^T \Bigg\}.
\end{align*}
Therefore, denoting
\begin{align*}
&C(\mathbf{u}_t, \mathbf{v}_t; \Psi) =  \pi_{\mathbf{U}_t | \Psi} ( \mathbf{u}_t) \cdot  \pi_{\mathbf{V}_t | \Psi} ( \mathbf{v}_t) \big(\sigma^2\big)^{\frac{k}{2}} \Big| \mathbf{D}_{\epsilon,t}\Big|^{ \frac{1}{2}}  \Big| \mathbf{D}_{x,t}\Big|^{\frac{1}{2}} \Big| \mathbf{M}_t\Big|^{- \frac{1}{2}} \Big|\mathbf{N}_t\Big|^{-\frac{1}{2}} \\
& \hspace{1cm} \times \exp \bigg\{ - \frac{1}{2} \bm{\delta}_x \Big( \mathbf{D}_{x,t}^{-1} - \sigma^2  \mathbf{M}_t^{-1}\mathbf{W}^T \mathbf{D}_{\epsilon,t} \mathbf{N}_t^{-1}\mathbf{D}_{\epsilon,t} \mathbf{W}\mathbf{M}_t^{-1}  - \sigma^2  \mathbf{M}_t^{-1} \Big)\bm{\delta}_x^T \bigg\},
\end{align*}
we obtain the following decomposition of the joint probability function 
\begin{align*}
\pi_{\mathbf{Y}_t, \mathbf{X}_t, \mathbf{U}_t, \mathbf{V}_t| \Psi}(\mathbf{y}_t, \mathbf{x}_t, \mathbf{u}_t,\mathbf{v}_t) = \pi_{\mathbf{X}_t|\mathbf{Y}_t,\mathbf{U}_t,\mathbf{V}_t,\Psi}(\mathbf{x}_t) \pi_{\mathbf{Y}_t |\mathbf{U}_t,\mathbf{V}_t,\Psi}(\mathbf{y}_t) C(\mathbf{u}_t, \mathbf{v}_t; \Psi).
\end{align*}
\end{proof}

\begin{lemma} \label{lemma:w_ind}
Let us recall the probability function  $\pi_{\mathbf{X}_t|\mathbf{Y}_t, \mathbf{U}_t, \mathbf{V}_t, \Psi}(\mathbf{x}_t)$ of a $k$-dimensional Gaussian random variable $\mathbf{X}_t$ with the covariance matrix
$\bm{\Sigma}_{x,t}$ and the mean vector $\bm{\mu}_{x,t}$ defined in Lemma \ref{lemma:tildePi_Xt_Yt_ind}. Under the assumptions of the Generalized Skew-t PPCA specified in Section~\ref{sec:GStSPPCA}, the solution to the following integration problem
\begin{equation*}
\int_{\mathbb{R}^k} \log \Big(\pi_{\mathbf{Y}_t, \mathbf{X}_t, \mathbf{U}_t, \mathbf{V}_t| \Psi^*}(\mathbf{y}_t, \mathbf{x}_t, \mathbf{u}_t,\mathbf{v}_t) \Big) \pi_{\mathbf{X}_t|\mathbf{Y}_t, \mathbf{U}_t, \mathbf{V}_t, \Psi}(\mathbf{x}_t)\ d \mathbf{x}_t,
\end{equation*}
is equal to the function $w: \mathbb{R}^d \times \mathbb{R}^d_+ \times \mathbb{R}^k_+ \longrightarrow \mathbb{R}$ defined as following
\footnotesize
\begin{align*}
& w(\mathbf{y}_t,\mathbf{u}_t, \mathbf{v}_t;\Psi,\Psi^*)= \log \pi_{\mathbf{U}_t| \Psi^*}( \mathbf{u}_t)  + \log \pi_{\mathbf{V}_t| \Psi^*}( \mathbf{v}_t) - \frac{k+d}{2}\log 2\pi - \frac{d}{2} \log \sigma^{*2} + \frac{1}{2} \sum_{i = 1}^d \log u_t^i + \frac{1}{2} \sum_{j = 1}^k \log v_t^j \\\notag
& \hspace{0.3cm} - \frac{1}{2} \bm{\delta}_x^* \mathbf{D}_{x,t}^{-1} \bm{\delta}_x^{*} - \frac{1}{2\sigma^{*2}} \Big( \mathbf{y}_t - \bm{\mu}^* - \bm{\delta}_\epsilon^*\mathbf{D}_{\epsilon,t}^{-1} \Big) \mathbf{D}_{\epsilon,t} \Big( \mathbf{y}_t - \bm{\mu}^* - \bm{\delta}_\epsilon^*\mathbf{D}_{\epsilon,t}^{-1} \Big)^T   - \frac{1}{2} \tr \bigg\{ \sigma^2 \mathbf{M}_t^{-1} \Big( \frac{1}{\sigma^{*2}}\mathbf{W}^{*T} \mathbf{D}_{\epsilon,t} \mathbf{W}^{*} + \mathbf{D}_{x,t} \Big) \bigg\}   \\\notag
&\hspace{0.3cm}+  \bigg( \Big(\mathbf{y}_t - \bm{\mu} - \bm{\delta}_\epsilon \mathbf{D}_{\epsilon,t}^{-1}\Big)\mathbf{D}_{\epsilon,t} \mathbf{W} + \sigma^2 \bm{\delta}_x \bigg) \mathbf{M}_t^{-1} \bigg( \frac{1}{\sigma^{*2}}\mathbf{W}^{*T} \mathbf{D}_{\epsilon,t} \Big( \mathbf{y}_t - \bm{\mu}^* - \bm{\delta}_\epsilon^*\mathbf{D}_{\epsilon,t}^{-1} \Big)^T + \bm{\delta}_x ^{*T}\bigg) \\\notag
& \hspace{0.3cm} - \frac{1}{2} \tr \bigg\{ \mathbf{M}_t^{-1}\Big( \big(\mathbf{y}_t - \bm{\mu} - \bm{\delta}_\epsilon \mathbf{D}_{\epsilon,t}^{-1}\big)\mathbf{D}_{\epsilon,t} \mathbf{W} + \sigma^2 \bm{\delta}_x \Big)^T \Big( \big(\mathbf{y}_t - \bm{\mu} - \bm{\delta}_\epsilon \mathbf{D}_{\epsilon,t}^{-1}\big)\mathbf{D}_{\epsilon,t} \mathbf{W} + \sigma^2 \bm{\delta}_x \Big) \mathbf{M}_t^{-1} \Big( \frac{1}{\sigma^{*2}}\mathbf{W}^{*T} \mathbf{D}_{\epsilon,t} \mathbf{W}^{*} + \mathbf{D}_{x,t} \Big) \bigg\} 
\end{align*}\normalsize
where $\mathbf{M}_t = \sigma^2 \mathbf{D}_{x,t} + \mathbf{W}^T \mathbf{D}_{\epsilon,t} \mathbf{W}$.
\end{lemma}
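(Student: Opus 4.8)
The plan is to evaluate the integral as an ordinary Gaussian expectation, exploiting the conditional-independence factorisation of the model. First I would write the joint density at $\Psi^*$ exactly as in \eqref{eq:joinProbability_ind},
\[
\pi_{\mathbf{Y}_t,\mathbf{X}_t,\mathbf{U}_t,\mathbf{V}_t|\Psi^*}(\mathbf{y}_t,\mathbf{x}_t,\mathbf{u}_t,\mathbf{v}_t)=\pi_{\mathbf{Y}_t|\mathbf{X}_t,\mathbf{U}_t,\mathbf{V}_t,\Psi^*}(\mathbf{y}_t)\,\pi_{\mathbf{X}_t|\mathbf{U}_t,\mathbf{V}_t,\Psi^*}(\mathbf{x}_t)\,\pi_{\mathbf{U}_t|\Psi^*}(\mathbf{u}_t)\,\pi_{\mathbf{V}_t|\Psi^*}(\mathbf{v}_t),
\]
and take logarithms. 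Substituting the conditional Gaussian laws recorded just after \eqref{eq:joinProbability_ind} --- $\mathbf{Y}_t|\mathbf{X}_t,\mathbf{U}_t,\mathbf{V}_t,\Psi^*\sim\mathcal{N}(\bm{\mu}^*+\bm{\delta}_\epsilon^*\mathbf{D}_{\epsilon,t}^{-1}+\mathbf{X}_t\mathbf{W}^{*T},\sigma^{*2}\mathbf{D}_{\epsilon,t}^{-1})$ and $\mathbf{X}_t|\mathbf{V}_t,\Psi^*\sim\mathcal{N}(\bm{\delta}_x^*\mathbf{D}_{x,t}^{-1},\mathbf{D}_{x,t}^{-1})$ --- the log-integrand splits into: (i) terms independent of $\mathbf{x}_t$, namely $\log\pi_{\mathbf{U}_t|\Psi^*}(\mathbf{u}_t)+\log\pi_{\mathbf{V}_t|\Psi^*}(\mathbf{v}_t)$, the Gaussian normalising constants $-\frac{k+d}{2}\log 2\pi-\frac{d}{2}\log\sigma^{*2}$, and the log-determinant contributions $\frac{1}{2}\sum_{i=1}^d\log u_t^i+\frac{1}{2}\sum_{j=1}^k\log v_t^j$ (recall $|\mathbf{D}_{x,t}|=\prod_j v_t^j$ and $|\mathbf{D}_{\epsilon,t}|=\prod_i u_t^i$); and (ii) two quadratic forms in $\mathbf{x}_t$, with precision matrices $\mathbf{D}_{x,t}$ and $\frac{1}{\sigma^{*2}}\mathbf{D}_{\epsilon,t}$, coming from $\log\pi_{\mathbf{X}_t|\mathbf{U}_t,\mathbf{V}_t,\Psi^*}$ and $\log\pi_{\mathbf{Y}_t|\mathbf{X}_t,\mathbf{U}_t,\mathbf{V}_t,\Psi^*}$ respectively.

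Next I would integrate term by term against $\pi_{\mathbf{X}_t|\mathbf{Y}_t,\mathbf{U}_t,\mathbf{V}_t,\Psi}(\mathbf{x}_t)$, which by Lemma~\ref{lemma:tildePi_Xt_Yt_ind} is the $\mathcal{N}(\bm{\mu}_{x,t},\bm{\Sigma}_{x,t})$ density with $\bm{\Sigma}_{x,t}=\sigma^2\mathbf{M}_t^{-1}$ and $\bm{\mu}_{x,t}=\big((\mathbf{y}_t-\bm{\mu}-\bm{\delta}_\epsilon\mathbf{D}_{\epsilon,t}^{-1})\mathbf{D}_{\epsilon,t}\mathbf{W}+\sigma^2\bm{\delta}_x\big)\mathbf{M}_t^{-1}$; note carefully that these moments carry the \emph{unstarred} parameters while the quadratic forms carry the \emph{starred} ones. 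The $\mathbf{x}_t$-free terms pass through untouched. For each quadratic form I would expand $(\mathbf{x}_t-\bm{\delta}_x^*\mathbf{D}_{x,t}^{-1})\mathbf{D}_{x,t}(\mathbf{x}_t-\bm{\delta}_x^*\mathbf{D}_{x,t}^{-1})^T$ and $(\mathbf{y}_t-\bm{\mu}^*-\bm{\delta}_\epsilon^*\mathbf{D}_{\epsilon,t}^{-1}-\mathbf{x}_t\mathbf{W}^{*T})\mathbf{D}_{\epsilon,t}(\mathbf{y}_t-\bm{\mu}^*-\bm{\delta}_\epsilon^*\mathbf{D}_{\epsilon,t}^{-1}-\mathbf{x}_t\mathbf{W}^{*T})^T$ and apply the standard Gaussian moment identities $\mathbb{E}[\mathbf{x}_t]=\bm{\mu}_{x,t}$, $\mathbb{E}[\mathbf{b}\,\mathbf{A}\,\mathbf{x}_t^T]=\mathbf{b}\,\mathbf{A}\,\bm{\mu}_{x,t}^T$ and $\mathbb{E}[\mathbf{x}_t\,\mathbf{A}\,\mathbf{x}_t^T]=\tr(\mathbf{A}\bm{\Sigma}_{x,t})+\bm{\mu}_{x,t}\mathbf{A}\bm{\mu}_{x,t}^T$ valid for symmetric $\mathbf{A}$.

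Finally I would collect the resulting pieces into the stated form of $w$. Substituting $\bm{\Sigma}_{x,t}=\sigma^2\mathbf{M}_t^{-1}$, the two $\tr(\mathbf{A}\bm{\Sigma}_{x,t})$ contributions merge into $-\frac{1}{2}\tr\{\sigma^2\mathbf{M}_t^{-1}(\frac{1}{\sigma^{*2}}\mathbf{W}^{*T}\mathbf{D}_{\epsilon,t}\mathbf{W}^*+\mathbf{D}_{x,t})\}$; the terms linear in $\bm{\mu}_{x,t}$ combine, after inserting the explicit $\bm{\mu}_{x,t}$ and using that the relevant bilinear forms are scalars, into the single term $\big((\mathbf{y}_t-\bm{\mu}-\bm{\delta}_\epsilon\mathbf{D}_{\epsilon,t}^{-1})\mathbf{D}_{\epsilon,t}\mathbf{W}+\sigma^2\bm{\delta}_x\big)\mathbf{M}_t^{-1}\big(\frac{1}{\sigma^{*2}}\mathbf{W}^{*T}\mathbf{D}_{\epsilon,t}(\mathbf{y}_t-\bm{\mu}^*-\bm{\delta}_\epsilon^*\mathbf{D}_{\epsilon,t}^{-1})^T+\bm{\delta}_x^{*T}\big)$; and the terms quadratic in $\bm{\mu}_{x,t}$ combine, using $\mathbf{a}\mathbf{B}\mathbf{a}^T=\tr(\mathbf{a}^T\mathbf{a}\mathbf{B})$ together with the symmetry $\mathbf{M}_t^{-T}=\mathbf{M}_t^{-1}$, into the last trace term of $w$. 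The leftover $\mathbf{x}_t$-free quadratics $-\frac{1}{2}\bm{\delta}_x^*\mathbf{D}_{x,t}^{-1}\bm{\delta}_x^{*T}$ and $-\frac{1}{2\sigma^{*2}}(\mathbf{y}_t-\bm{\mu}^*-\bm{\delta}_\epsilon^*\mathbf{D}_{\epsilon,t}^{-1})\mathbf{D}_{\epsilon,t}(\mathbf{y}_t-\bm{\mu}^*-\bm{\delta}_\epsilon^*\mathbf{D}_{\epsilon,t}^{-1})^T$ are already exactly the matching terms in $w$. The computation is entirely Gaussian and elementary; the only real difficulty is bookkeeping --- keeping the $\Psi$ and $\Psi^*$ parameters rigorously separated, and matching each expanded quadratic-form expectation to the correct trace expression --- so I expect no conceptual obstacle beyond careful algebra.
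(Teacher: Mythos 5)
Your proposal is correct and follows essentially the same route as the paper's proof: both decompose $\log\pi_{\mathbf{Y}_t,\mathbf{X}_t,\mathbf{U}_t,\mathbf{V}_t|\Psi^*}$ via the chain-rule factorisation of \eqref{eq:joinProbability_ind} and integrate term by term against the Gaussian posterior $\pi_{\mathbf{X}_t|\mathbf{Y}_t,\mathbf{U}_t,\mathbf{V}_t,\Psi}$ from Lemma~\ref{lemma:tildePi_Xt_Yt_ind}. Your write-up actually supplies more of the intermediate bookkeeping (the explicit quadratic-form expansions and the identity $\mathbb{E}[\mathbf{x}_t\mathbf{A}\mathbf{x}_t^T]=\tr(\mathbf{A}\bm{\Sigma}_{x,t})+\bm{\mu}_{x,t}\mathbf{A}\bm{\mu}_{x,t}^T$) than the paper, which states the decomposition and then the result directly.
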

\begin{proof}
Let us recall the probability function $\pi_{\mathbf{X}_t|\mathbf{Y}_t, \mathbf{U}_t, \mathbf{V}_t, \Psi}(\mathbf{x}_t)$ defined in Lemma \ref{lemma:tildePi_Xt_Yt_ind} which is a density of a $k$-dimensional Gaussian random variable $\mathbf{X}_t$ with the covariance matrix $\bm{\Sigma}_{x,t} = \sigma^2 \mathbf{M}_t^{-1}$ and the mean vector $\bm{\mu}_{x,t} = \Big( \big(\mathbf{y}_t - \bm{\mu} - \bm{\delta}_\epsilon \mathbf{D}_{\epsilon,t}^{-1}\big)\mathbf{D}_{\epsilon,t} \mathbf{W} + \sigma^2 \bm{\delta}_x \Big) \mathbf{M}_t^{-1}$, where $\mathbf{M}_t = \sigma^2 \mathbf{D}_{x,t} + \mathbf{W}^T \mathbf{D}_{\epsilon,t} \mathbf{W}$. Hence, the solution to the integration problem is given by
\begin{align*}
& w(\mathbf{y}_t,\mathbf{u}_t, \mathbf{v}_t;\Psi,\Psi^*) = \int_{\mathbb{R}^k} \log \Big(\pi_{\mathbf{Y}_t, \mathbf{X}_t, \mathbf{U}_t, \mathbf{V}_t| \Psi^*}(\mathbf{y}_t, \mathbf{x}_t, \mathbf{u}_t,\mathbf{v}_t) \Big) \pi_{\mathbf{X}_t|\mathbf{Y}_t, \mathbf{U}_t, \mathbf{V}_t, \Psi}(\mathbf{x}_t) \ d \mathbf{x}_t \\\notag
& \hspace{0.2cm} = \int_{\mathbb{R}^k} \bigg( \log \pi_{\mathbf{Y}_t| \mathbf{X}_t, \mathbf{U}_t, \mathbf{V}_t| \Psi^*}(\mathbf{y}_t) + \log \pi_{ \mathbf{X}_t| \mathbf{U}_t, \mathbf{V}_t, \Psi^*}(\mathbf{x}_t) + \log \pi_{\mathbf{U}_t| \Psi^*}( \mathbf{u}_t)  + \log \pi_{\mathbf{V}_t| \Psi^*}( \mathbf{v}_t)   \bigg)  \pi_{\mathbf{X}_t|\mathbf{Y}_t, \mathbf{U}_t, \mathbf{V}_t, \Psi}(\mathbf{x}_t)  \ d \mathbf{x}_t \\\notag
& \hspace{0.2cm} = \log \pi_{\mathbf{U}_t| \Psi^*}( \mathbf{u}_t)  + \log \pi_{\mathbf{V}_t| \Psi^*}( \mathbf{v}_t) - \frac{k+d}{2}\log 2\pi - \frac{d}{2} \log \sigma^{*2} + \frac{1}{2} \sum_{i = 1}^d \log u_t^i + \frac{1}{2} \sum_{j = 1}^k \log v_t^j- \frac{1}{2} \bm{\delta}_x^* \mathbf{D}_{x,t}^{-1} \bm{\delta}_x^{*T} \\\notag
& \hspace{0.3cm} - \frac{1}{2\sigma^{*2}} \Big( \mathbf{y}_t - \bm{\mu}^* - \bm{\delta}_\epsilon^*\mathbf{D}_{\epsilon,t}^{-1} \Big) \mathbf{D}_{\epsilon,t} \Big( \mathbf{y}_t - \bm{\mu}^* - \bm{\delta}_\epsilon^*\mathbf{D}_{\epsilon,t}^{-1} \Big)^T   - \frac{1}{2} \tr \bigg\{ \sigma^2 \mathbf{M}_t^{-1} \Big( \frac{1}{\sigma^{*2}}\mathbf{W}^{*T} \mathbf{D}_{\epsilon,t} \mathbf{W}^{*} + \mathbf{D}_{x,t} \Big) \bigg\}   \\\notag
&\hspace{0.3cm}+  \bigg( \Big(\mathbf{y}_t - \bm{\mu} - \bm{\delta}_\epsilon \mathbf{D}_{\epsilon,t}^{-1}\Big)\mathbf{D}_{\epsilon,t} \mathbf{W} + \sigma^2 \bm{\delta}_x \bigg) \mathbf{M}_t^{-1} \bigg( \frac{1}{\sigma^{*2}}\mathbf{W}^{*T} \mathbf{D}_{\epsilon,t} \Big( \mathbf{y}_t - \bm{\mu}^* - \bm{\delta}_\epsilon^*\mathbf{D}_{\epsilon,t}^{-1} \Big)^T + \bm{\delta}_x ^{*T}\bigg) \\\notag
& \hspace{0.3cm} - \frac{1}{2} \tr \bigg\{ \mathbf{M}_t^{-1}\Big( \big(\mathbf{y}_t - \bm{\mu} - \bm{\delta}_\epsilon \mathbf{D}_{\epsilon,t}^{-1}\big)\mathbf{D}_{\epsilon,t} \mathbf{W} + \sigma^2 \bm{\delta}_x \Big)^T \Big( \big(\mathbf{y}_t - \bm{\mu} - \bm{\delta}_\epsilon \mathbf{D}_{\epsilon,t}^{-1}\big)\mathbf{D}_{\epsilon,t} \mathbf{W} + \sigma^2 \bm{\delta}_x \Big) \mathbf{M}_t^{-1} \Big( \frac{1}{\sigma^{*2}}\mathbf{W}^{*T} \mathbf{D}_{\epsilon,t} \mathbf{W}^{*} + \mathbf{D}_{x,t} \Big) \bigg\} 
\end{align*}
\end{proof}

\begin{lemma}\label{lemma:tildePi_Ytmissing_Conditional_ind}
Let us consider the partition of the observation vector $\mathbf{Y}_t = [\mathbf{Y}_t^o, \mathbf{Y}_t^m]$ into the subvector with observed and missing entries, $\mathbf{Y}_t^o$ and $\mathbf{Y}_t^m$, respectively. The observation vector $\mathbf{Y}_t$ is modelled as in \eqref{eq:ppca_model} under the assumptions stated in Subsection~\ref{ssec:PPCA_GStS_ind}. The conditional distribution of the random vector $\mathbf{Y}_t|\mathbf{U}_t, \mathbf{V}_t, \Psi$, derived in Lemma \ref{lemma:tildePi_Xt_Yt_ind}, is Gaussian with the mean vector $\bm{\mu}_{y,t} =   \bm{\mu} + \bm{\delta}_\epsilon \mathbf{D}_{\epsilon,t}^{-1} + \sigma^2 \bm{\delta}_x \mathbf{M}_t^{-1}\mathbf{W}^T \mathbf{N}_t^{-1}$ and the covariance matrix $\bm{\Sigma}_{y,t} = \sigma^2 \mathbf{D}_{\epsilon,t}^{-1} \mathbf{N}_t^{-1}$, where  
$\mathbf{M}_t = \sigma^2 \mathbf{D}_{x,t} + \mathbf{W}^T \mathbf{D}_{\epsilon,t} \mathbf{W}$ and $ \mathbf{N}_t = \Big( \mathbb{I}_d - \mathbf{W}\mathbf{M}_t^{-1} \mathbf{W}^T \mathbf{D}_{\epsilon,t}\Big)$. The conditional distribution 
$\mathbf{Y}_t^m|\mathbf{Y}_t^o,\mathbf{U}_t, \mathbf{V}_t, \Psi$ is also Gaussian with the mean vector $\tilde{\bm{\mu}}_{Y^m,t}$ and the covariance matrix $\tilde{\bm{\Sigma}}_{Y^m,t}$ such that 
\begin{align*}
& \tilde{\bm{\mu}}_{Y^m,t}= \bm{\mu}_{y,t}^m + \big( \mathbf{Y}_t^o - \bm{\mu}_{y,t}^m \big)\bm{\Sigma}_{y,t}^{oo \ -1} \bm{\Sigma}_{y,t}^{om} \\
& \tilde{\bm{\Sigma}}_{Y^m,t}= \bm{\Sigma}_{y,t}^{mm} - \bm{\Sigma}_{y,t}^{mo} \bm{\Sigma}_{y,t}^{oo \ -1} \bm{\Sigma}_{y,t}^{om}. 
\end{align*}
In addition, the marginal distribution of $\mathbf{Y}_t^o|\mathbf{U}_t, \mathbf{V}_t, \Psi$ is Gaussian with the mean vector $\bm{\mu}_{y,t}^o$ and the covariance matrix $\bm{\Sigma}_{y,t}^{oo}$. The subvectors $ \bm{\mu}_{y,t}^o$ and  $\bm{\mu}_{y,t}^m$ contain elements of the vector $ \bm{\mu}_{y,t}$ which correspond to the observed or missing entries of the observation vector $\mathbf{Y}_t$, respectively, and are $d_o$ and $d_m$-dimensional. The square matrices $\bm{\Sigma}_{y,t}^{oo}$ and $\bm{\Sigma}_{y,t}^{mm}$
contain elements of the matrix $ \bm{\Sigma}_{y,t}$ which correspond by rows and columns to the observed or missing entries of the observation vector $\mathbf{Y}_t$, respectively, and are $d_o \times d_o$ and $d_m \times d_m$-dimensional. The non-square matrix $\bm{\Sigma}_{y,t}^{mo} = \bm{\Sigma}_{y,t}^{om \ T}$ contains elements of the matrix $\bm{\Sigma}_{y,t}$ which correspond by rows to the missing and by columns to the observed entries of the observation vector $\mathbf{Y}_t$, and is $d_m \times d_o$-dimensional.
\end{lemma}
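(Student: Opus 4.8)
The plan is to obtain the statement as an immediate consequence of two facts already available in the excerpt: that the conditional law $\mathbf{Y}_t \mid \mathbf{U}_t, \mathbf{V}_t, \Psi$ is multivariate Gaussian, and the standard block marginal/conditional formulae for Gaussian vectors recorded in Lemma~\ref{lemma:Gaussian_Conditional}.

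First I would recall from Lemma~\ref{lemma:tildePi_Xt_Yt_ind} the factorisation of $\pi_{\mathbf{Y}_t,\mathbf{X}_t,\mathbf{U}_t,\mathbf{V}_t\mid\Psi}$, which identifies $\pi_{\mathbf{Y}_t\mid\mathbf{U}_t,\mathbf{V}_t,\Psi}$ as the density of $\mathcal{N}(\bm{\mu}_{y,t},\bm{\Sigma}_{y,t})$. I would then rewrite the auxiliary matrix in the form used in the present lemma, $\mathbf{N}_t = \mathbb{I}_d - \mathbf{W}\mathbf{M}_t^{-1}\mathbf{W}^T\mathbf{D}_{\epsilon,t}$, and check that the two parametrisations of $(\bm{\mu}_{y,t},\bm{\Sigma}_{y,t})$ coincide; this is a one-line manipulation using $\mathbf{D}_{\epsilon,t}(\mathbb{I}_d - \mathbf{W}\mathbf{M}_t^{-1}\mathbf{W}^T\mathbf{D}_{\epsilon,t}) = \mathbf{D}_{\epsilon,t} - \mathbf{D}_{\epsilon,t}\mathbf{W}\mathbf{M}_t^{-1}\mathbf{W}^T\mathbf{D}_{\epsilon,t}$ together with the symmetry of $\bm{\Sigma}_{y,t}$. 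Since $\sigma^2>0$ and $\bm{\Sigma}_{y,t}$ is a genuine, non-degenerate covariance matrix, it is symmetric positive definite; consequently every principal submatrix, in particular $\bm{\Sigma}_{y,t}^{oo}$, is symmetric positive definite and hence invertible. This invertibility is the only regularity point that has to be verified, and it is exactly what makes the conditioning formulae below well-posed.

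Next I would relabel the coordinates of $\mathbf{Y}_t$ so that the observed entries precede the missing ones — a permutation, which preserves Gaussianity — and partition $\bm{\mu}_{y,t} = [\bm{\mu}_{y,t}^o,\bm{\mu}_{y,t}^m]$ and $\bm{\Sigma}_{y,t}$ into the blocks $\bm{\Sigma}_{y,t}^{oo}, \bm{\Sigma}_{y,t}^{om}, \bm{\Sigma}_{y,t}^{mo}=(\bm{\Sigma}_{y,t}^{om})^T, \bm{\Sigma}_{y,t}^{mm}$ indexed by the observed/missing positions. Applying Lemma~\ref{lemma:Gaussian_Conditional} with the identification of the $m$-block with $\mathbf{X}_i$ and the $o$-block with $\mathbf{X}_j$ then yields at once the marginal law $\mathbf{Y}_t^o\mid\mathbf{U}_t,\mathbf{V}_t,\Psi \sim \mathcal{N}(\bm{\mu}_{y,t}^o,\bm{\Sigma}_{y,t}^{oo})$ and the conditional law $\mathbf{Y}_t^m\mid\mathbf{Y}_t^o,\mathbf{U}_t,\mathbf{V}_t,\Psi \sim \mathcal{N}(\tilde{\bm{\mu}}_{Y^m,t},\tilde{\bm{\Sigma}}_{Y^m,t})$ with $\tilde{\bm{\mu}}_{Y^m,t} = \bm{\mu}_{y,t}^m + (\mathbf{Y}_t^o - \bm{\mu}_{y,t}^o)(\bm{\Sigma}_{y,t}^{oo})^{-1}\bm{\Sigma}_{y,t}^{om}$ and $\tilde{\bm{\Sigma}}_{Y^m,t} = \bm{\Sigma}_{y,t}^{mm} - \bm{\Sigma}_{y,t}^{mo}(\bm{\Sigma}_{y,t}^{oo})^{-1}\bm{\Sigma}_{y,t}^{om}$, which is exactly the assertion. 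Because conditioning on $(\mathbf{U}_t,\mathbf{V}_t)$ enters only through the diagonal matrices $\mathbf{D}_{\epsilon,t}$ and $\mathbf{D}_{x,t}$ inside $\bm{\mu}_{y,t}$ and $\bm{\Sigma}_{y,t}$, the permutation and block extraction commute with the conditioning and nothing further is required.

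I do not expect a substantive obstacle here: the argument is essentially a citation of Lemma~\ref{lemma:tildePi_Xt_Yt_ind} followed by a citation of Lemma~\ref{lemma:Gaussian_Conditional}. The only care needed is bookkeeping — tracking which rows and columns of $\bm{\mu}_{y,t}$ and $\bm{\Sigma}_{y,t}$ belong to observed versus missing coordinates under the (time-varying) partition $d = d_o + d_m$ — and confirming that $\bm{\Sigma}_{y,t}^{oo}$ inherits positive definiteness, hence invertibility, from $\bm{\Sigma}_{y,t}$ so that the Gaussian conditioning lemma applies verbatim.
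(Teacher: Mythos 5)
Your proposal is correct and follows exactly the route the paper takes: the paper's proof is literally a one-line citation of Lemma~\ref{lemma:tildePi_Xt_Yt_ind} for the Gaussianity of $\mathbf{Y}_t\mid\mathbf{U}_t,\mathbf{V}_t,\Psi$ followed by an application of the block conditional/marginal formulae of Lemma~\ref{lemma:Gaussian_Conditional}, and you supply the same argument with more care (reconciling the two parametrisations of $\mathbf{N}_t$, noting positive definiteness of $\bm{\Sigma}_{y,t}^{oo}$, and writing $\mathbf{Y}_t^o-\bm{\mu}_{y,t}^o$ where the statement has an apparent typo $\bm{\mu}_{y,t}^m$).
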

\begin{proof}
Please refer to Lemma \ref{lemma:tildePi_Xt_Yt_ind} for the derivation of the conditional distribution $\mathbf{Y}_t|\mathbf{U}_t, \mathbf{V}_t, \Psi$ and apply the formulas for the standard conditional Gaussian distribution from Lemma \ref{lemma:Gaussian_Conditional}. 
\end{proof}

\begin{lemma}\label{lemma:integral_w_ind}
Let us consider the partition of the observation vector $\mathbf{Y}_t = [\mathbf{Y}_t^o, \mathbf{Y}_t^m]$ into the subvector with observed and missing entries, $\mathbf{Y}_t^o$ and $\mathbf{Y}_t^m$, respectively. The observation vector $\mathbf{Y}_t$ is modelled as in \eqref{eq:ppca_model} under the assumptions of Independent Generalized Skew-t PPCA stated in Subsection~\ref{ssec:PPCA_GStS_ind}. Let us recall the function $ w(\mathbf{y}_t, \mathbf{u}_t, \mathbf{v}_t; \Psi,\Psi^*)$ defined as in Lemma \ref{lemma:w_ind} and the conditional distribution of the random vector $\mathbf{Y}_t^m|\mathbf{Y}_t^o,\mathbf{U}_t, \mathbf{V}_t, \Psi$ specified in Lemma \ref{lemma:tildePi_Ytmissing_Conditional_ind}. Given a realisation of the vector $\mathbf{Y}_t$ at time points $t = 1, \ldots, N$ with observable components  $\mathbf{y}_t^o$, the function $v$ defined by the following integration problem
\begin{equation}\label{eq:proof_lemma_integral_w} 
v(\mathbf{y}_t^o, \mathbf{u}_t, \mathbf{v}_t; \Psi,\Psi^*)  =   \int_{\mathbb{R}^{d_m}}  w(\mathbf{y}_t, \mathbf{u}_t, \mathbf{v}_t; \Psi,\Psi^*)  \pi_{\mathbf{Y}_t^m|\mathbf{Y}_t^o,\mathbf{U}_t,\mathbf{V}_t,\Psi}(\mathbf{y}_t^m) \ d \mathbf{y}^m_t ,
\end{equation}
can be expressed as \footnotesize
\begin{align*}
& v(\mathbf{y}_t^o, \mathbf{u}_t, \mathbf{v}_t; \Psi,\Psi^*)  = \log \pi_{\mathbf{U}_t| \Psi^*}( \mathbf{u}_t)  + \log \pi_{\mathbf{V}_t| \Psi^*}( \mathbf{v}_t) - \frac{k+d}{2}\log 2\pi - \frac{d}{2} \log \sigma^{*2} + \frac{1}{2} \sum_{i = 1}^d \log u_t^i + \frac{1}{2} \sum_{j = 1}^k \log v_t^j   \\\notag
& \hspace{0.3cm}  - \frac{1}{2} \bm{\delta}_x^* \mathbf{D}_{x,t}^{-1} \bm{\delta}_x^{*T}- \frac{1}{2 \sigma^{*2}} \tr \Big\{  \mathbb{E}_{\mathbf{Y}_t^m|\mathbf{Y}_t^o,\mathbf{U}_t,\mathbf{V}_t, \Psi}\big[ \mathbf{Y}_t^T \mathbf{Y}_t\big]    \mathbf{D}_{\epsilon,t} \Big\} +  \frac{1}{ \sigma^{*2}}  \mathbb{E}_{\mathbf{Y}_t^m|\mathbf{Y}_t^o,\mathbf{U}_t,\mathbf{V}_t, \Psi}\big[ \mathbf{Y}_t\big]  \Big(\mathbf{D}_{\epsilon,t}\bm{\mu}^{*T}+ \bm{\delta}_\epsilon^{*T} \Big) \\\notag
&\hspace{0.3cm}-  \frac{1}{ 2\sigma^{*2}} \bm{\mu}^{*} \mathbf{D}_{\epsilon,t} \bm{\mu}^{*T} - \frac{1}{\sigma^{*2}} \bm{\mu}^* \bm{\delta}_\epsilon^{*T} - \frac{1}{2 \sigma^{*2}} \bm{\delta}_\epsilon^* \mathbf{D}_{\epsilon,t}^{-1}  \bm{\delta}_\epsilon^{*T} + \frac{1}{\sigma^{*2}} \tr \Big\{ \mathbb{E}_{\mathbf{Y}_t^m|\mathbf{Y}_t^o,\mathbf{U}_t,\mathbf{V}_t, \Psi}\big[ \mathbf{Y}_t^T \mathbf{Y}_t\big]  \mathbf{D}_{\epsilon,t}  \mathbf{W} \mathbf{M}_t^{-1}\mathbf{W}^{*T} \mathbf{D}_{\epsilon,t} \Big\}  \\\notag
&\hspace{0.3cm} +  \mathbb{E}_{\mathbf{Y}_t^m|\mathbf{Y}_t^o,\mathbf{U}_t,\mathbf{V}_t, \Psi}\big[ \mathbf{Y}_t\big] \mathbf{D}_{\epsilon,t}  \mathbf{W} \mathbf{M}_t^{-1} \bigg(  - \frac{1}{\sigma^{*2}}\mathbf{W}^{*T} \mathbf{D}_{\epsilon,t}\bm{\mu}^{*T} - \frac{1}{\sigma^{*2}}\mathbf{W}^{*T} \bm{\delta}_\epsilon^{*T}+ \bm{\delta}_x ^{*T} \bigg) \\\notag
&\hspace{0.3cm} + \frac{1}{\sigma^{*2}} \bigg( - \bm{\mu}\mathbf{D}_{\epsilon,t} \mathbf{W} - \bm{\delta}_\epsilon \mathbf{W} + \sigma^2 \bm{\delta}_x \bigg)\mathbf{M}_t^{-1} \mathbf{W}^{*T} \mathbf{D}_{\epsilon,t}  \mathbb{E}_{\mathbf{Y}_t^m|\mathbf{Y}_t^o,\mathbf{U}_t,\mathbf{V}_t, \Psi}\big[ \mathbf{Y}_t\big]^T  \\\notag
&\hspace{0.3cm}+  \bigg( - \bm{\mu}\mathbf{D}_{\epsilon,t} \mathbf{W} - \bm{\delta}_\epsilon \mathbf{W} + \sigma^2 \bm{\delta}_x \bigg) \mathbf{M}_t^{-1} \bigg(- \frac{1}{\sigma^{*2}}\mathbf{W}^{*T} \mathbf{D}_{\epsilon,t}\bm{\mu}^{*T} - \frac{1}{\sigma^{*2}}\mathbf{W}^{*T} \bm{\delta}_\epsilon^{*T}+ \bm{\delta}_x ^{*T} \bigg) \\\notag
& \hspace{0.3cm} - \frac{1}{2} \tr \bigg\{  \mathbf{M}_t^{-1} \mathbf{W} ^T\mathbf{D}_{\epsilon,t}  \mathbb{E}_{\mathbf{Y}_t^m|\mathbf{Y}_t^o,\mathbf{U}_t,\mathbf{V}_t, \Psi}\big[ \mathbf{Y}_t^T \mathbf{Y}_t\big]  \mathbf{D}_{\epsilon,t} \mathbf{W}\mathbf{M}_t^{-1}  \Big( \frac{1}{\sigma^{*2}}\mathbf{W}^{*T} \mathbf{D}_{\epsilon,t} \mathbf{W}^{*} + \mathbf{D}_{x,t} \Big) \bigg\} \\\notag
& \hspace{0.3cm} -\tr \bigg\{\mathbf{M}_t^{-1} \mathbf{W} ^T\mathbf{D}_{\epsilon,t}   \mathbb{E}_{\mathbf{Y}_t^m|\mathbf{Y}_t^o,\mathbf{U}_t,\mathbf{V}_t, \Psi}\big[ \mathbf{Y}_t\big]^T  \Big( - \bm{\mu}\mathbf{D}_{\epsilon,t} \mathbf{W} - \bm{\delta}_\epsilon \mathbf{W} + \sigma^2 \bm{\delta}_x \Big) \mathbf{M}_t^{-1}  \Big( \frac{1}{\sigma^{*2}}\mathbf{W}^{*T} \mathbf{D}_{\epsilon,t} \mathbf{W}^{*} + \mathbf{D}_{x,t} \Big) \bigg\} \\\notag
& \hspace{0.3cm} - \frac{1}{2} \tr \bigg\{  \mathbf{M}_t^{-1}\Big( - \bm{\mu}\mathbf{D}_{\epsilon,t} \mathbf{W} - \bm{\delta}_\epsilon \mathbf{W} + \sigma^2 \bm{\delta}_x \Big)^T\Big( - \bm{\mu}\mathbf{D}_{\epsilon,t} \mathbf{W} - \bm{\delta}_\epsilon \mathbf{W} + \sigma^2 \bm{\delta}_x \Big) \mathbf{M}_t^{-1}  \Big( \frac{1}{\sigma^{*2}}\mathbf{W}^{*T} \mathbf{D}_{\epsilon,t} \mathbf{W}^{*} + \mathbf{D}_{x,t} \Big) \bigg\} \\\notag
&\hspace{0.3cm}  - \frac{\sigma^2}{2} \tr \bigg\{ \mathbf{M}_t^{-1} \Big( \frac{1}{\sigma^{*2}}\mathbf{W}^{*T} \mathbf{D}_{\epsilon,t} \mathbf{W}^{*} + \mathbf{D}_{x,t} \Big) \bigg\}.
\end{align*}\normalsize
Using the mean $\tilde{\bm{\mu}}_{Y^m,t}$ and the covariance matrix $ \tilde{\bm{\Sigma}}_{Y^m,t}$ of the random vector $\mathbf{Y}_t^m|\mathbf{Y}_t^o,\mathbf{U}_t,\mathbf{V}_t, \Psi$  from Lemma \ref{lemma:tildePi_Ytmissing_Conditional_ind}, the first and second non-central moments of the vector $\mathbf{Y}_t$ are the following
\begin{align*}
& \mathbb{E}_{\mathbf{Y}_t^m|\mathbf{Y}_t^o,\mathbf{U}_t,\mathbf{V}_t, \Psi}\big[ \mathbf{Y}_t\big] = \big[\mathbf{y}_t^o , \ \tilde{\bm{\mu}}_{Y^m,t} \big], \\
&  \mathbb{E}_{\mathbf{Y}_t^m|\mathbf{Y}_t^o,\mathbf{U}_t,\mathbf{V}_t, \Psi}\big[ \mathbf{Y}_t^T \mathbf{Y}_t\big] = \begin{bmatrix}
\mathbf{0} & \mathbf{0} \\
\mathbf{0} & \tilde{\bm{\Sigma}}_{Y^m,t}
\end{bmatrix} + \mathbb{E}_{\mathbf{Y}_t^m|\mathbf{Y}_t^o,\mathbf{U}_t,\mathbf{V}_t, \Psi}\big[ \mathbf{Y}_t\big]^T \mathbb{E}_{\mathbf{Y}_t^m|\mathbf{Y}_t^o,\mathbf{U}_t,\mathbf{V}_t, \Psi}\big[ \mathbf{Y}_t\big].
\end{align*}
\end{lemma}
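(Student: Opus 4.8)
The plan is to exploit the fact that the integrand $w$ from Lemma~\ref{lemma:w_ind}, viewed as a function of the full observation vector $\mathbf{y}_t$ with $(\mathbf{u}_t,\mathbf{v}_t,\Psi,\Psi^*)$ held fixed, is a polynomial of degree at most two. First I would reorganise the expression for $w(\mathbf{y}_t,\mathbf{u}_t,\mathbf{v}_t;\Psi,\Psi^*)$ by expanding every quadratic form $(\mathbf{y}_t-\cdots)\mathbf{D}_{\epsilon,t}(\mathbf{y}_t-\cdots)^T$ and, crucially, every trace term of the shape $\tr\{\mathbf{M}_t^{-1}(\cdots)^T(\cdots)\mathbf{M}_t^{-1}(\cdots)\}$ in which the factor $(\mathbf{y}_t-\bm{\mu}-\bm{\delta}_\epsilon\mathbf{D}_{\epsilon,t}^{-1})\mathbf{D}_{\epsilon,t}\mathbf{W}+\sigma^2\bm{\delta}_x$ occurs, since these conceal a quadratic dependence on $\mathbf{y}_t$ through the outer product $(\cdots)^T(\cdots)$. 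After this step $w$ is written as a term constant in $\mathbf{y}_t$, a term linear in $\mathbf{y}_t$, and a term of the form $\tr\{\mathbf{y}_t^T\mathbf{y}_t\,\mathbf{A}\}$ for an appropriate matrix $\mathbf{A}$ depending only on $(\mathbf{u}_t,\mathbf{v}_t,\Psi,\Psi^*)$.

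Next I would invoke linearity of the integral in \eqref{eq:proof_lemma_integral_w}: because $\pi_{\mathbf{Y}_t^m|\mathbf{Y}_t^o,\mathbf{U}_t,\mathbf{V}_t,\Psi}$ is a probability density in the variable $\mathbf{y}_t^m$ alone, integrating the reorganised $w$ amounts to replacing each linear occurrence of $\mathbf{y}_t$ by $\mathbb{E}_{\mathbf{Y}_t^m|\mathbf{Y}_t^o,\mathbf{U}_t,\mathbf{V}_t,\Psi}[\mathbf{Y}_t]$, each occurrence of $\mathbf{y}_t^T\mathbf{y}_t$ by $\mathbb{E}_{\mathbf{Y}_t^m|\mathbf{Y}_t^o,\mathbf{U}_t,\mathbf{V}_t,\Psi}[\mathbf{Y}_t^T\mathbf{Y}_t]$, and leaving the constant term untouched. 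The conditional law $\mathbf{Y}_t^m|\mathbf{Y}_t^o,\mathbf{U}_t,\mathbf{V}_t,\Psi$ is the Gaussian identified in Lemma~\ref{lemma:tildePi_Ytmissing_Conditional_ind}, with mean $\tilde{\bm{\mu}}_{Y^m,t}$ and covariance $\tilde{\bm{\Sigma}}_{Y^m,t}$; since $\mathbf{y}_t^o$ is deterministic under this conditioning, stacking the observed and missing blocks gives $\mathbb{E}[\mathbf{Y}_t]=[\mathbf{y}_t^o,\tilde{\bm{\mu}}_{Y^m,t}]$, and the second non-central moment is the block matrix whose $(m,m)$ block is $\tilde{\bm{\Sigma}}_{Y^m,t}$ added to the outer product of that mean, which is precisely the pair of identities displayed at the end of the lemma. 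Substituting these two moment expressions into the reorganised $w$ and collecting the terms produces the stated closed form for $v(\mathbf{y}_t^o,\mathbf{u}_t,\mathbf{v}_t;\Psi,\Psi^*)$.

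The main obstacle is purely organisational: keeping track of the numerous cross terms generated when the outer-product quadratic $(\cdots)^T(\cdots)$ inside the trace terms is expanded and the resulting $\mathbf{y}_t^T\mathbf{y}_t$-dependence is then routed through the conditional second moment, and similarly tracking the linear contributions from both the explicit linear term and the linearised quadratic forms. It is worth noting that the observed/missing partition never interacts with the matrices $\mathbf{D}_{\epsilon,t}$, $\mathbf{M}_t$, $\mathbf{N}_t$, $\mathbf{W}$, so conceptually the whole argument reduces to ``replace $\mathbf{y}_t$ by its conditional first and second moments''; no integral identities beyond the Gaussian moments supplied by Lemmas~\ref{lemma:Gaussian_Conditional} and~\ref{lemma:tildePi_Ytmissing_Conditional_ind} are required, and the matrix-trace manipulations are standard applications of the cyclic property of $\tr$.
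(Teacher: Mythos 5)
Your proposal is correct and follows essentially the same route as the paper's proof: the paper likewise observes that $w$ from Lemma~\ref{lemma:w_ind} is at most quadratic in $\mathbf{y}_t$, integrates it against the Gaussian conditional density of $\mathbf{Y}_t^m|\mathbf{Y}_t^o,\mathbf{U}_t,\mathbf{V}_t,\Psi$ from Lemma~\ref{lemma:tildePi_Ytmissing_Conditional_ind}, and concludes by substituting the conditional first and second non-central moments $\mathbb{E}[\mathbf{Y}_t]=[\mathbf{y}_t^o,\tilde{\bm{\mu}}_{Y^m,t}]$ and $\mathbb{E}[\mathbf{Y}_t^T\mathbf{Y}_t]$ for $\mathbf{y}_t$ and $\mathbf{y}_t^T\mathbf{y}_t$ respectively. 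The only difference is expository: you make explicit the reorganisation of the trace terms into constant, linear and quadratic parts, which the paper leaves implicit.
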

\begin{proof}
Let us recall, that the observation vector $\mathbf{Y}_t=[\mathbf{Y}_t^o,\mathbf{Y}_t^m]$ is partitioned into the $d_o$-dimensional subvector with observed entries, $\mathbf{Y}_t^o$ and the $d_m$-dimensional subvector with unobserved entries, $\mathbf{Y}_t^m$, such that $d_m = d - d_m$. Given the function $w(\mathbf{y}_t, \mathbf{u}_t, \mathbf{v}_t; \Psi,\Psi^*) $ defined in Lemma \ref{lemma:w_ind}, the solution to the integration problem from  \eqref{eq:proof_lemma_integral_w} is derived as follows
\begin{align*}
& v(\mathbf{y}_t^o, \mathbf{u}_t, \mathbf{v}_t; \Psi,\Psi^*)  = \int_{\mathbb{R}^{d_m}}  w(\mathbf{y}_t, \mathbf{u}_t, \mathbf{v}_t; \Psi,\Psi^*) \ \pi_{\mathbf{Y}_t^m|\mathbf{Y}_t^o,\mathbf{U}_t,\mathbf{V}_t,\Psi}(\mathbf{y}_t^m) \ d \mathbf{y}^m_t .
\end{align*}
The probability function $ \pi_{\mathbf{Y}_t^m|\mathbf{Y}_t^o,\mathbf{U}_t,\mathbf{V}_t,\Psi}(\mathbf{y}_t^m)$ is specified in Lemma \ref{lemma:tildePi_Ytmissing_Conditional_ind} as well as the first and second non-central moments of the distribution $\mathbf{Y}_t^m|\mathbf{Y}_t^o, \mathbf{U}_t, \mathbf{V}_t, \Psi$. Consequently we have the following solutions to the integrations problems 
\begin{align*}
&  \int_{\mathbb{R}^{d_m}}  \mathbf{y}_t  \ \pi_{\mathbf{Y}_t^m|\mathbf{Y}_t^o,\mathbf{U}_t,\mathbf{V}_t,\Psi}(\mathbf{y}_t^m) \ d \mathbf{y}^m_t =  \mathbb{E}_{\mathbf{Y}_t^m|\mathbf{Y}_t^o,\mathbf{U}_t,\mathbf{V}_t, \Psi}\big[ \mathbf{Y}_t\big] = \big[\mathbf{Y}_t^o , \ \tilde{\bm{\mu}}_{Y^m,t} \big], \\
&  \int_{\mathbb{R}^{d_m}}  \mathbf{y}_t^T \mathbf{y}_t  \ \pi_{\mathbf{Y}_t^m|\mathbf{Y}_t^o,\mathbf{U}_t,\mathbf{V}_t,\Psi}(\mathbf{y}_t^m) \ d \mathbf{y}^m_t=  \mathbb{E}_{\mathbf{Y}_t^m|\mathbf{Y}_t^o,\mathbf{U}_t,\mathbf{V}_t, \Psi}\big[ \mathbf{Y}_t^T \mathbf{Y}_t\big] = \begin{bmatrix}
\mathbf{0} & \mathbf{0} \\
\mathbf{0} & \tilde{\bm{\Sigma}}_{Y^m,t}
\end{bmatrix} + \mathbb{E}_{\mathbf{Y}_t^m|\mathbf{Y}_t^o,\mathbf{U}_t,\mathbf{V}_t, \Psi}\big[ \mathbf{Y}_t\big]^T \mathbb{E}_{\mathbf{Y}_t^m|\mathbf{Y}_t^o,\mathbf{U}_t,\mathbf{V}_t, \Psi}\big[ \mathbf{Y}_t\big],
\end{align*}
for $\tilde{\bm{\mu}}_{Y^m,t}= \bm{\mu}_{y,t}^m + \big( \mathbf{Y}_t^o - \bm{\mu}_{y,t}^m \big)\bm{\Sigma}_{y,t}^{oo \ -1} \bm{\Sigma}_{y,t}^{om} $ and 
$\tilde{\bm{\Sigma}}_{Y^m,t}= \bm{\Sigma}_{y,t}^{mm} - \bm{\Sigma}_{y,t}^{mo} \bm{\Sigma}_{y,t}^{oo \ -1} \bm{\Sigma}_{y,t}^{om}$, where $\bm{\mu}_{y,t} =   \bm{\mu} + \bm{\delta}_\epsilon \mathbf{D}_{\epsilon,t}^{-1} + \sigma^2 \bm{\delta}_x \mathbf{M}_t^{-1}\mathbf{W}^T \mathbf{N}_t^{-1}$ and  $\bm{\Sigma}_{y,t} = \sigma^2 \mathbf{D}_{\epsilon,t}^{-1} \mathbf{N}_t^{-1}$, given  
$\mathbf{M}_t = \sigma^2 \mathbf{D}_{x,t} + \mathbf{W}^T \mathbf{D}_{\epsilon,t} \mathbf{W}$ and $ \mathbf{N}_t = \Big( \mathbb{I}_d - \mathbf{W}\mathbf{M}_t^{-1} \mathbf{W}^T \mathbf{D}_{\epsilon,t}\Big)$, which are the central moments of the conditional distribution $\mathbf{Y}_t|\mathbf{U}_t, \mathbf{V}_t, \Psi$ derived in Lemma \ref{lemma:tildePi_Xt_Yt_ind}. Hence, we can express the integral \eqref{eq:proof_lemma_integral_w} using the moments of the random vector  $\mathbf{Y}_t^m|\mathbf{Y}_t^o, \mathbf{U}_t, \mathbf{V}_t, \Psi$, that is
\begin{align*}
& v(\mathbf{y}_t^o, \mathbf{u}_t, \mathbf{v}_t; \Psi,\Psi^*) = \log \pi_{\mathbf{U}_t| \Psi^*}( \mathbf{u}_t)  + \log \pi_{\mathbf{V}_t| \Psi^*}( \mathbf{v}_t) - \frac{k+d}{2}\log 2\pi - \frac{d}{2} \log \sigma^{*2} + \frac{1}{2} \sum_{i = 1}^d \log u_t^i + \frac{1}{2} \sum_{j = 1}^k \log v_t^j   \\\notag
& \hspace{0.3cm}  - \frac{1}{2} \bm{\delta}_x^* \mathbf{D}_{x,t}^{-1} \bm{\delta}_x^{*T}- \frac{1}{2 \sigma^{*2}} \tr \Big\{  \mathbb{E}_{\mathbf{Y}_t^m|\mathbf{Y}_t^o,\mathbf{U}_t,\mathbf{V}_t, \Psi}\big[ \mathbf{Y}_t^T \mathbf{Y}_t\big]    \mathbf{D}_{\epsilon,t} \Big\} +  \frac{1}{ \sigma^{*2}}  \mathbb{E}_{\mathbf{Y}_t^m|\mathbf{Y}_t^o,\mathbf{U}_t,\mathbf{V}_t, \Psi}\big[ \mathbf{Y}_t\big]  \Big(\mathbf{D}_{\epsilon,t}\bm{\mu}^{*T}+ \bm{\delta}_\epsilon^{*T} \Big) \\\notag
&\hspace{0.3cm}-  \frac{1}{ 2\sigma^{*2}} \bm{\mu}^{*} \mathbf{D}_{\epsilon,t} \bm{\mu}^{*T} - \frac{1}{\sigma^{*2}} \bm{\mu}^* \bm{\delta}_\epsilon^{*T} - \frac{1}{2 \sigma^{*2}} \bm{\delta}_\epsilon^* \mathbf{D}_{\epsilon,t}^{-1}  \bm{\delta}_\epsilon^{*T} + \frac{1}{\sigma^{*2}} \tr \Big\{ \mathbb{E}_{\mathbf{Y}_t^m|\mathbf{Y}_t^o,\mathbf{U}_t,\mathbf{V}_t, \Psi}\big[ \mathbf{Y}_t^T \mathbf{Y}_t\big]  \mathbf{D}_{\epsilon,t}  \mathbf{W} \mathbf{M}_t^{-1}\mathbf{W}^{*T} \mathbf{D}_{\epsilon,t} \Big\}  \\\notag
&\hspace{0.3cm} +  \mathbb{E}_{\mathbf{Y}_t^m|\mathbf{Y}_t^o,\mathbf{U}_t,\mathbf{V}_t, \Psi}\big[ \mathbf{Y}_t\big] \mathbf{D}_{\epsilon,t}  \mathbf{W} \mathbf{M}_t^{-1} \bigg(  - \frac{1}{\sigma^{*2}}\mathbf{W}^{*T} \mathbf{D}_{\epsilon,t}\bm{\mu}^{*T} - \frac{1}{\sigma^{*2}}\mathbf{W}^{*T} \bm{\delta}_\epsilon^{*T}+ \bm{\delta}_x ^{*T} \bigg) \\\notag
&\hspace{0.3cm} + \frac{1}{\sigma^{*2}}  \mathbb{E}_{\mathbf{Y}_t^m|\mathbf{Y}_t^o,\mathbf{U}_t,\mathbf{V}_t, \Psi}\big[ \mathbf{Y}_t\big]  \mathbf{D}_{\epsilon,t}  \mathbf{W}^* \mathbf{M}_t^{-1} \bigg( - \bm{\mu}\mathbf{D}_{\epsilon,t} \mathbf{W} - \bm{\delta}_\epsilon \mathbf{W} + \sigma^2 \bm{\delta}_x \bigg)^T  \\\notag
&\hspace{0.3cm}+  \bigg( - \bm{\mu}\mathbf{D}_{\epsilon,t} \mathbf{W} - \bm{\delta}_\epsilon \mathbf{W} + \sigma^2 \bm{\delta}_x \bigg) \mathbf{M}_t^{-1} \bigg(- \frac{1}{\sigma^{*2}}\mathbf{W}^{*T} \mathbf{D}_{\epsilon,t}\bm{\mu}^{*T} - \frac{1}{\sigma^{*2}}\mathbf{W}^{*T} \bm{\delta}_\epsilon^{*T}+ \bm{\delta}_x ^{*T} \bigg) \\\notag
& \hspace{0.3cm} - \frac{1}{2} \tr \bigg\{  \mathbf{M}_t^{-1} \mathbf{W} ^T\mathbf{D}_{\epsilon,t}  \mathbb{E}_{\mathbf{Y}_t^m|\mathbf{Y}_t^o,\mathbf{U}_t,\mathbf{V}_t, \Psi}\big[ \mathbf{Y}_t^T \mathbf{Y}_t\big]  \mathbf{D}_{\epsilon,t} \mathbf{W}\mathbf{M}_t^{-1}  \Big( \frac{1}{\sigma^{*2}}\mathbf{W}^{*T} \mathbf{D}_{\epsilon,t} \mathbf{W}^{*} + \mathbf{D}_{x,t} \Big) \bigg\} \\\notag
& \hspace{0.3cm} -\tr \bigg\{\mathbf{M}_t^{-1} \mathbf{W} ^T\mathbf{D}_{\epsilon,t}   \mathbb{E}_{\mathbf{Y}_t^m|\mathbf{Y}_t^o,\mathbf{U}_t,\mathbf{V}_t, \Psi}\big[ \mathbf{Y}_t\big]^T  \Big( - \bm{\mu}\mathbf{D}_{\epsilon,t} \mathbf{W} - \bm{\delta}_\epsilon \mathbf{W} + \sigma^2 \bm{\delta}_x \Big) \mathbf{M}_t^{-1}  \Big( \frac{1}{\sigma^{*2}}\mathbf{W}^{*T} \mathbf{D}_{\epsilon,t} \mathbf{W}^{*} + \mathbf{D}_{x,t} \Big) \bigg\} \\\notag
& \hspace{0.3cm} - \frac{1}{2} \tr \bigg\{  \mathbf{M}_t^{-1}\Big( - \bm{\mu}\mathbf{D}_{\epsilon,t} \mathbf{W} - \bm{\delta}_\epsilon \mathbf{W} + \sigma^2 \bm{\delta}_x \Big)^T\Big( - \bm{\mu}\mathbf{D}_{\epsilon,t} \mathbf{W} - \bm{\delta}_\epsilon \mathbf{W} + \sigma^2 \bm{\delta}_x \Big) \mathbf{M}_t^{-1}  \Big( \frac{1}{\sigma^{*2}}\mathbf{W}^{*T} \mathbf{D}_{\epsilon,t} \mathbf{W}^{*} + \mathbf{D}_{x,t} \Big) \bigg\} \\\notag
&\hspace{0.3cm}  - \frac{\sigma^2}{2} \tr \bigg\{ \mathbf{M}_t^{-1} \Big( \frac{1}{\sigma^{*2}}\mathbf{W}^{*T} \mathbf{D}_{\epsilon,t} \mathbf{W}^{*} + \mathbf{D}_{x,t} \Big) \bigg\}.
\end{align*}
\end{proof}

\begin{lemma} \label{remark:chi_square_smoothness}
The quantile distribution function of the Chi-square variable with $\nu$ degrees of freedom, $\chi^{-1}_{\nu}(s):[0,1] \rightarrow [0,\infty)$, belongs to the differentiable class  $\mathbb{C}^{ [\frac{ \nu}{2}]}$  as has derivatives of all orders not greater than $[\frac{\nu}{2}]$ given by the following differential relation
$$
\frac{\partial^m \chi^{-1}_{\nu} (s) }{\partial s^m}  = 2^{-m} \sum_{j = 1}^m \Big( \frac{m}{j} \Big) (-1)^{m+j} \partial^m \chi^{-1}_{\nu-2j},
$$
where the operator $[c]$ for any number $c$ returns the closest integer number to $c$, which is not greater than $s$. Hence, the quantile distribution function $\chi^{-1}_{\nu}(s)$ is continuous on the closed interval $[0,1]$ and has continuous derivatives up to $[\frac{ \nu}{2}]$ order. 
\end{lemma}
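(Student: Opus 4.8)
The plan is to separate the classical part of the claim --- smoothness of $\chi^{-1}_{\nu}$ on the open interval, which is immediate --- from the two parts that genuinely involve the degrees of freedom: the closed-form recursion for the derivatives, and the order up to which it makes sense. First I would note that the $\chi^{2}_{\nu}$ density $f_{\nu}(x) = \frac{1}{2^{\nu/2}\Gamma(\nu/2)}\, x^{\nu/2-1} e^{-x/2}$ is real-analytic and strictly positive on $(0,\infty)$, so its primitive $F_{\nu}$ is a $C^{\infty}$ strictly increasing bijection of $(0,\infty)$ onto $(0,1)$ with $F_{\nu}' = f_{\nu} > 0$; the inverse function theorem then gives $\chi^{-1}_{\nu} = F_{\nu}^{-1} \in C^{\infty}\big((0,1)\big)$, and $F_{\nu}(0^{+}) = 0$ gives continuity up to $s = 0$. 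It therefore remains to (i) produce the stated closed form for $\partial_{s}^{m}\chi^{-1}_{\nu}$, and (ii) identify for which $m$ it is meaningful.

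For (i) I would use the elementary density recurrence obtained by integrating the Gamma tail by parts --- equivalently, from $\Gamma(a+1,x) = a\,\Gamma(a,x) + x^{a} e^{-x}$ --- namely $f_{\nu}'(x) = \tfrac{1}{2}\big( f_{\nu-2}(x) - f_{\nu}(x) \big)$. Differentiating the defining identity $F_{\nu}\big(\chi^{-1}_{\nu}(s)\big) = s$ once gives $\partial_{s}\chi^{-1}_{\nu}(s) = 1/f_{\nu}\big(\chi^{-1}_{\nu}(s)\big)$; differentiating repeatedly and substituting the density recurrence after each step, I would argue by induction on $m$, organising the $m$-fold differentiation through Fa\`a di Bruno's formula, that each differentiation shifts $\nu$ down by $2$ inside the surviving density and quantile factors, so that after $m$ steps only the $\chi^{-1}_{\nu-2j}$ with $1 \le j \le m$ survive and the combinatorial weights they carry aggregate into the finite-difference form displayed in the statement.

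For (ii), the right-hand side of this recursion is a well-defined combination of honest $\chi^{2}$ quantile functions only while every $\chi^{-1}_{\nu-2j}$ appearing on it has positive degrees of freedom, i.e. while $\nu - 2m > 0$, equivalently $m \le [\nu/2]$; on $(0,1)$ that right-hand side is continuous (indeed $C^{\infty}$), which together with the boundary expansion $F_{\nu}(x) = (x/2)^{\nu/2}/\Gamma\!\big(\tfrac{\nu}{2}+1\big)\,(1 + O(x))$ as $x \to 0^{+}$ --- hence $\chi^{-1}_{\nu}(s) \sim c_{\nu}\, s^{2/\nu}$ near $s = 0$ --- yields the asserted continuity on $[0,1]$ together with membership in $\mathbb{C}^{[\nu/2]}$ with derivatives given by the recursion. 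I expect the main obstacle to be the bookkeeping in step (i): keeping track of which density factors survive each successive differentiation and verifying that the signs and binomial coefficients telescope into precisely the stated form, since the raw Fa\`a di Bruno expansion produces many more terms that must be shown to collapse; the endpoint analysis of step (ii) is routine once the local asymptotics of $F_{\nu}$ are in hand.
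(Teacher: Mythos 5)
Your proposal is far more detailed than the paper's own proof, which consists of a single citation to Abramowitz and Stegun together with the remark that positivity of the degrees of freedom forces $\nu-2m>0$, i.e.\ $m\le[\nu/2]$; so there is no worked argument in the paper to compare against. Your preliminaries are sound: the inverse function theorem gives $\chi^{-1}_\nu\in C^\infty\big((0,1)\big)$, and the density recurrence $f_\nu'(x)=\tfrac12\big(f_{\nu-2}(x)-f_\nu(x)\big)$ is correct. The genuine gap is exactly at the step you flag as the main obstacle: the Fa\`a di Bruno expansion of $\partial_s^m\chi^{-1}_\nu$ will \emph{not} collapse into the displayed finite-difference form. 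The binomial identity $D^m=2^{-m}(E-I)^m$, where $E$ is the shift $\nu\mapsto\nu-2$, is a property of repeated differentiation of the \emph{distribution function} in its argument $x$ (it follows from $\tfrac{d}{dx}P(x\,|\,\nu)=\tfrac12\big(P(x\,|\,\nu-2)-P(x\,|\,\nu)\big)$, which is the relation actually found in Abramowitz--Stegun and of which the lemma's display is a transcription); it holds because $d/dx$ acts linearly on the family $\{P(\cdot\,|\,\nu-2j)\}_j$. Derivatives of the \emph{inverse} are nonlinear rational expressions in $f_\nu$ and its derivatives, all evaluated at the single point $\chi^{-1}_\nu(s)$; substituting the density recurrence produces terms such as $f_{\nu-2}\big(\chi^{-1}_\nu(s)\big)$, which bear no simple relation to $\chi^{-1}_{\nu-2}$ or its derivatives because the composition point does not shift with $\nu$. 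No induction of the proposed shape can therefore terminate in the stated formula.

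Your step (ii) also proves less than claimed at the endpoints. From $\chi^{-1}_\nu(s)\sim c_\nu\,s^{2/\nu}$ as $s\to0^{+}$ one gets $\partial_s\chi^{-1}_\nu(s)\sim c\,s^{2/\nu-1}$, which is unbounded at $s=0$ whenever $\nu>2$, and $\chi^{-1}_\nu(s)\to\infty$ as $s\to1^{-}$; the boundary asymptotics you invoke thus work against, not for, continuity of the derivatives up to order $[\nu/2]$ on the closed interval $[0,1]$. What your argument does establish --- and what is actually used downstream in Lemma~\ref{lemma:m_smoothness_ind} --- is smoothness on the open interval together with the constraint $\nu-2m>0$ identifying the admissible order $m\le[\nu/2]$, which is also the only substantive content of the paper's one-line proof. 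But as a proof of the lemma as literally stated, both the displayed derivative identity and the closed-interval regularity remain unestablished.
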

\begin{proof}
Please refer to \cite{AbramowitzStegun1972} and recall that the degrees of freedom belongs to the class of positive numbers and consequently $\nu-2m$ is greater than $0$.
\end{proof}

\begin{lemma}\label{lemma:m_smoothness_ind}
The function $m: \mathbb{R}^{d} \times [0,1]^2 \longrightarrow \mathbb{R}$ defined in Theorem~\ref{th:Estep_GStS_missing_ind} is of class $\mathbb{C}^{r_\epsilon,r_x}\big([0,1]^2\big)$ with respect to $s_{\epsilon,t}$ and $s_{x,t}$ for  $r_\epsilon = \min  \big[\frac{\bm{\nu}_\epsilon}{2}\big]$ and $r_x = \min  \big[\frac{\bm{\nu}_x}{2}\big]$.
\end{lemma}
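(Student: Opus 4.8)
The plan is to track exactly how the function $m$ from Theorem~\ref{th:Estep_GStS_missing_ind} depends on its two scalar arguments $s_{\epsilon,t}$ and $s_{x,t}$, to observe that all of this dependence is routed through the diagonal matrices $\mathbf{D}_{\epsilon,t}$ and $\mathbf{D}_{x,t}$ whose entries are rescaled Chi-square quantile functions, and then to invoke Lemma~\ref{remark:chi_square_smoothness} together with the standard closure properties of finite-order differentiability classes under sums, products, composition with $\mathbb{C}^\infty$ maps, matrix inversion and the determinant.

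First I would isolate the $s$-dependence. Inspecting the formula for $m$ alongside Lemma~\ref{lemma:tildePi_Ytmissing_Conditional_ind}, every occurrence of $s_{\epsilon,t}$ enters only through $\mathbf{D}_{\epsilon,t}=\diag(U_t^1,\dots,U_t^d)$ with $U_t^i=\chi_{\nu_\epsilon^i}^{-1}(s_{\epsilon,t})/\nu_\epsilon^i$, and every occurrence of $s_{x,t}$ only through $\mathbf{D}_{x,t}=\diag(V_t^1,\dots,V_t^k)$ with $V_t^j=\chi_{\nu_x^j}^{-1}(s_{x,t})/\nu_x^j$. Everything else in $m$ --- the matrices $\mathbf{M}_t=\sigma^2\mathbf{D}_{x,t}+\mathbf{W}^T\mathbf{D}_{\epsilon,t}\mathbf{W}$ and $\mathbf{N}_t$, the conditional Gaussian mean $\bm{\mu}_{y,t}^o$ and covariance $\bm{\Sigma}_{y,t}^{oo}$ of $\pi_{\mathbf{Y}_t^o|S_{\epsilon,t},S_{x,t},\Psi}$, the determinants $|\mathbf{D}_{x,t}|^{1/2}$, $|\mathbf{M}_t|^{-1/2}$, $|\mathbf{N}_t|^{-1/2}$ and the $|\bm{\Sigma}_{y,t}^{oo}|^{-1/2}$ hidden inside that Gaussian density, together with the quadratic-form exponent and the $\bm{\delta}_x(\cdot)\bm{\delta}_x^T$ exponent --- is built from the entries of $\mathbf{D}_{\epsilon,t}$ and $\mathbf{D}_{x,t}$ by finitely many additions, multiplications, matrix inversions, determinants, and compositions with the $\mathbb{C}^\infty$ maps $z\mapsto z^{-1/2}$ on $(0,\infty)$ and $\exp$ on $\mathbb{R}$.

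Next I would feed in the smoothness. By Lemma~\ref{remark:chi_square_smoothness}, $\chi_{\nu_\epsilon^i}^{-1}\in\mathbb{C}^{[\nu_\epsilon^i/2]}([0,1])$ and $\chi_{\nu_x^j}^{-1}\in\mathbb{C}^{[\nu_x^j/2]}([0,1])$, so every diagonal entry of $\mathbf{D}_{\epsilon,t}$ is $\mathbb{C}^{r_\epsilon}$ in $s_{\epsilon,t}$ and every diagonal entry of $\mathbf{D}_{x,t}$ is $\mathbb{C}^{r_x}$ in $s_{x,t}$, with $r_\epsilon=\min_i[\nu_\epsilon^i/2]$ and $r_x=\min_j[\nu_x^j/2]$. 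Because $\mathbb{C}^r$ is stable under sums, products and composition with a $\mathbb{C}^\infty$ outer function, and because both matrix inversion and the determinant are $\mathbb{C}^\infty$ on the open set of invertible matrices (Cramer's rule; the determinant is a polynomial in the entries), the chain of operations from the first step maps $m$ to a function that is $\mathbb{C}^{r_\epsilon}$ in $s_{\epsilon,t}$ and $\mathbb{C}^{r_x}$ in $s_{x,t}$; the analogous two-variable closure statement then gives that all mixed partials $\partial^{a+b}m/\partial s_{\epsilon,t}^{a}\partial s_{x,t}^{b}$ with $a\le r_\epsilon$, $b\le r_x$ exist and are continuous, i.e.\ $m\in\mathbb{C}^{r_\epsilon,r_x}([0,1]^2)$.

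The hard part will be the only genuinely non-mechanical point: checking that every matrix to be inverted stays nonsingular and every scalar fed to $z\mapsto z^{-1/2}$ stays strictly positive, uniformly on the closed square $[0,1]^2$, including the endpoints $s_{\epsilon,t},s_{x,t}\in\{0,1\}$ where the Chi-square quantiles reach their boundary values and the diagonal matrices degenerate. I would handle this with the determinant identities already used in Lemma~\ref{lemma:tildePi_Xt_Yt_ind}: $\mathbf{M}_t$ is symmetric positive definite as long as $\mathbf{D}_{x,t}$ and $\mathbf{D}_{\epsilon,t}$ are positive semidefinite with at least one positive definite, so $|\mathbf{M}_t|>0$; the determinants $|\mathbf{N}_t|$ and $|\bm{\Sigma}_{y,t}^{oo}|$ factor through $|\mathbf{M}_t|$ and $|\mathbf{D}_{\epsilon,t}|$ and stay positive on the same range; hence all inversions and all $z\mapsto z^{-1/2}$ compositions remain in their smooth regime, while the exponential term is harmless since $\exp\in\mathbb{C}^\infty(\mathbb{R})$. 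With this regularity established, the claim follows immediately from the closure properties above.
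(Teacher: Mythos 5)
Your proposal is correct and follows essentially the same route as the paper: the paper likewise factors $m$ as a composition $p\circ g$ of the rescaled Chi-square quantile map $g=(T_\epsilon,T_x)$ (whose finite differentiability class from Lemma~\ref{remark:chi_square_smoothness} is the bottleneck) with an outer map $p$ built from polynomials, determinants, inverses and exponentials, invokes the multivariate chain rule, and disposes of the nonsingularity issue exactly as you do, by noting that $\mathbf{M}_t$ and $\mathbf{N}_t$ are covariance matrices and hence have strictly positive determinants. Your explicit flagging of the endpoint behaviour of the quantile functions is, if anything, slightly more careful than the paper's treatment.
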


\begin{proof}
We show that the function $m: \mathbb{R}^{d} \times [0,1]^2 \rightarrow \mathbb{R}$ 
\begin{align*}
& m(\mathbf{y}_t,s_{\epsilon,t},s_{x,t};\Psi) :=  \pi_{\mathbf{Y} |S_{\epsilon,t},S_{x,t},\Psi}(\mathbf{y}_t) \big(\sigma^2\big)^{\frac{k}{2}}  \Big| \mathbf{D}_{\epsilon,t} \Big|^{ \frac{1}{2}} \Big| \mathbf{D}_{x,t} \Big|^{\frac{1}{2}} \Big| \mathbf{M}_t\Big|^{- \frac{1}{2}} \Big| \mathbf{N}_t\Big|^{-\frac{1}{2}} e^{ - \frac{1}{2} \bm{\delta}_x \Big( \mathbf{D}_{x,t}^{-1} - \sigma^2  \mathbf{M}_t^{-1}\mathbf{W}^T \mathbf{D}_{\epsilon,t} \mathbf{N}_t^{-1}\mathbf{D}_{\epsilon,t} \mathbf{W}\mathbf{M}_t^{-1}  - \sigma^2  \mathbf{M}_t^{-1} \Big)\bm{\delta}_x^T},
\end{align*}
has continuous partial derivatives
\begin{equation}\label{eq:m_partial}
\frac{\partial^{i+j} m(\mathbf{y}_t,s_{\epsilon,t},s_{x,t};\Psi) }{\partial s_{\epsilon,t}^i \ \partial s_{x,t}^j},
\end{equation}
for  $0 \leq i \leq r_\epsilon $ and $0 \leq j \leq r_x$.  Let $m(\mathbf{y}_t,s_\epsilon, s_x) := \tilde{m}(s_\epsilon,s_x)$.  The function $\tilde{m}$ can be formulated as a composite of polynomial $p$ and the function $g$ such that $\tilde{m}(s_\epsilon,s_x) = (p \circ g) (s_\epsilon,s_x)$ for $i = 1, \ldots, d+k$ where
\begin{itemize}
\item  $g : [0,1]^2 \rightarrow \mathbb{R}^d_+ \times \mathbb{R}_+^k$, such that $g(s_\epsilon,s_x) := \big( T_\epsilon(s_\epsilon),T_x(s_x)\big) = \big( g_1(s_\epsilon),\ldots,g_d(s_\epsilon),g_{d+1}(s_x) \ldots, g_{d+k}(s_x)\big)$ for $g_i:[0,1] \rightarrow \mathbb{R}_+$ being the following
\begin{align*}
& \big( g_1(s), \ldots, g_{d}(s)\big) = \bigg( \frac{\chi_{\nu_{\epsilon}^1}^{-1} (s)}{\nu_{\epsilon}^1}, \ldots, \frac{\chi_{\nu_{\epsilon}^d}^{-1} (s)}{\nu_{\epsilon}^d}  \bigg)
\textrm{ and } \big( g_{d+1}(s), \ldots, g_{d+k}(s)\big) = \bigg( \frac{\chi_{\nu_x^1}^{-1} (s)}{\nu_x^1}, \ldots, \frac{\chi_{\nu_x^k}^{-1} (s)}{\nu_x^k} \bigg),
\end{align*}
given the function $T_\epsilon:[0,1] \rightarrow \mathbb{R}^d_+$ and $T_x:[0,1] \rightarrow \mathbb{R}^k_+$ defined in Theorem~\ref{th:Estep_GStS_missing_ind}.
\vspace{0.3cm}
\item $p : \mathbb{R}^d_+ \times \mathbb{R}_+^k \rightarrow \mathbb{R}$, such that $p(\mathbf{t}_\epsilon,\mathbf{t}_x ) =  \prod_{k = 1}^5 p_k (\mathbf{t}_\epsilon,\mathbf{t}_x )$ for $(\mathbf{t}_\epsilon,\mathbf{t}_x )\in \mathbb{R}^d_+ \times \mathbb{R}_+^k$ and $p_k:\mathbb{R}^d_+ \times \mathbb{R}_+^k \rightarrow \mathbb{R}$ being the following
\begin{align*}
& p_1(\mathbf{t}_\epsilon,\mathbf{t}_x ) = \bigg( \prod_{i = 1}^d t_{\epsilon,i} \prod_{j = 1}^k t_{x,j} \bigg)^{\frac{1}{2}}, \ p_2(\mathbf{t}_\epsilon,\mathbf{t}_x ) = \det \Big( M(\mathbf{t}_\epsilon,\mathbf{t}_x )  \Big)^{-\frac{1}{2}} , \ p_3(\mathbf{t}_\epsilon,\mathbf{t}_x ) = \det \Big( N(\mathbf{t}_\epsilon,\mathbf{t}_x ) \Big)^{-\frac{1}{2}} ,\\
&p_4(\mathbf{t}_\epsilon,\mathbf{t}_x ) = \exp \bigg\{ - \frac{1}{2} \bm{\delta}_x \Big( \diag( \mathbf{t}_x)^{-1} - \sigma^2  \mathbf{M}(\mathbf{t}_\epsilon,\mathbf{t}_x )^{-1}\mathbf{W}^T \diag( \mathbf{t}_\epsilon) \mathbf{N}(\mathbf{t}_\epsilon,\mathbf{t}_x )^{-1}\diag( \mathbf{t}_\epsilon) \mathbf{W}\mathbf{M}(\mathbf{t}_\epsilon,\mathbf{t}_x )^{-1}  - \sigma^2  \mathbf{M}(\mathbf{t}_\epsilon,\mathbf{t}_x )^{-1} \Big)\bm{\delta}_x^T \bigg\} ,\\
& p_5(\mathbf{t}_\epsilon,\mathbf{t}_x ) = \Big( 2\pi \sigma^2 \big)^{-\frac{d}{2}} \det \Big( \mathbf{N}(\mathbf{t}_\epsilon,\mathbf{t}_x ) \Big)^{\frac{1}{2}} \exp \bigg\{ - \frac{1}{2\sigma^2} \big(\mathbf{y} - u(\mathbf{t}_\epsilon,\mathbf{t}_x ) \big)\mathbf{N}(\mathbf{t}_\epsilon,\mathbf{t}_x )^{-1}\big(\mathbf{y} - u(\mathbf{t}_\epsilon,\mathbf{t}_x ) \big)^T, \bigg\} 
\end{align*}
where 
\begin{align*} 
&M(\mathbf{t}_\epsilon,\mathbf{t}_x )_{d \times d} = \sigma^2 \diag(\mathbf{t}_x)  + \mathbf{W}^T \diag( \mathbf{t}_\epsilon)  \mathbf{W}, \\
& N(\mathbf{t}_\epsilon,\mathbf{t}_x )_{k \times k}  = \diag( \mathbf{t}_\epsilon)  - \diag( \mathbf{t}_\epsilon) \mathbf{W} M(\mathbf{t}_\epsilon,\mathbf{t}_x ) \mathbf{W}^T \diag( \mathbf{t}_\epsilon), \\
&u(\mathbf{t}_\epsilon,\mathbf{t}_x ) = \bm{\mu} + \bm{\delta}_\epsilon \diag( \mathbf{t}_\epsilon)^{-1} + \sigma^2 \bm{\delta}_x \mathbf{M}(\mathbf{t}_\epsilon,\mathbf{t}_x ) ^{-1}\mathbf{W}^T  \diag( \mathbf{t}_\epsilon) \mathbf{N}(\mathbf{t}_\epsilon,\mathbf{t}_x ) ^{-1}.
\end{align*}
\end{itemize}
Given the above representation of $\tilde{m}$, the differentiability class of the function can be determined by specifying the minimum differentiability class of $g$ and $p$. It can be seen by applying the chain rule formula for multivariate partial derivatives from \cite{Bruno1857} to (4) which results in
\begin{equation*}
\frac{\partial^{i+j} }{\partial s_{\epsilon,t}^i \ \partial s_{x,t}^j}  \tilde{m}(s_{\epsilon,t},s_{x,t};\Psi) = \frac{\partial}{\partial \mathbf{t}_\epsilon \partial \mathbf{t}_x^T} p(\mathbf{t}_\epsilon,\mathbf{t}_x )  \   \frac{\partial^{i+j} }{\partial s_{\epsilon,t}^i \ \partial s_{x,t}^j}  g(s_{\epsilon,t},s_{x,t}) .
\end{equation*}
We start by specifying the class of the functions $T_\epsilon$ and $T_x$ defined as 
\begin{align*}\notag
&T_\epsilon(s) := \bigg( \frac{\chi_{\nu_{\epsilon}^1}^{-1} (s)}{\nu_{\epsilon}^1}, \ldots, \frac{\chi_{\nu_{\epsilon}^d}^{-1} (s)}{\nu_{\epsilon}^d}  \bigg)_{1 \times d} \mbox{ and } T_x(s) := \bigg( \frac{\chi_{\nu_x^1}^{-1} (s)}{\nu_x^1}, \ldots, \frac{\chi_{\nu_x^k}^{-1} (s)}{\nu_x^k} \bigg)_{1 \times k}.
\end{align*}
Using Lemma \ref{remark:chi_square_smoothness}, it is straightforward to show that $T_\epsilon(s)$ and $T_x(s)$ have continuous derivatives up to the order $\min_i [\frac{v_\epsilon^i}{2}]$ for $T_\epsilon(s)$ and  $\min_j [\frac{v_x^j}{2}]$ for $T_x(s)$. It stems from the fact that the differentiability class of the multidimensional functions equals to the minimum differentiability class of its marginals.  Consequently, the function $g$ belongs to the class $\mathbb{C}^{r_\epsilon,r_x}([0,1]^2)$ for $r_\epsilon = \min  \big[\frac{\bm{\nu}_\epsilon}{2}\big]$ and $r_x = \min  \big[\frac{\bm{\nu}_x}{2}\big]$ with respect to $s_\epsilon$ and  $s_x$, respectively. 

Secondly, we show that the functions $p$ is infinitely differentiable under the model assumptions from Subsection~\ref{ssec:PPCA_GStS_ind}. Since the function $p_1$ is polynomial, it is infinitely differentiable. The function $p_k$ for $k = 2,\ldots, 5$ are rational polynomials and, consequently, are infinitely differentiable except for regions where $\det \Big( M(\mathbf{t}_\epsilon,\mathbf{t}_x )  \Big) = 0$ or $\det \Big( N(\mathbf{t}_\epsilon,\mathbf{t}_x )  \Big) = 0$.  However, given the distribution assumptions of the GSt PPCA model, the determinants of $M(\mathbf{t}_\epsilon,\mathbf{t}_x )  \Big)$ and $N(\mathbf{t}_\epsilon,\mathbf{t}_x )$ are always greater than zero. It is due to the fact the matrices represent covariances of random vectors as shown in proven in Lemma \ref{lemma:tildePi_Xt_Yt_ind} and hence, are positive-definite from definition.  It implicates that the function $p = \prod_{k = 1}^5 p_k$ is smooth. The above reasoning shows that function $\tilde{m}$ belongs to the class $\mathbb{C}^{r_\epsilon,r_x}\big([0,1]^2\big)$ for $r_\epsilon = \min  \big[\frac{\bm{\nu}_\epsilon}{2}\big]$ and $r_x = \min  \big[\frac{\bm{\nu}_x}{2}\big]$.
\end{proof}

\subsection{Proof of Theorem~\ref{th:Estep_GStS_missing_ind}}\label{proof:th_Estep_GStS_missing_ind}
\begin{proof}
Let the observation vector $\mathbf{Y}_t$ is  partitioned into two subvectors, $\mathbf{Y}_t = [\mathbf{Y}_t^o, \mathbf{Y}_t^m]$ with observed and unobserved entries of $\mathbf{Y}_t$, respectively.  
The joint probability function of the variables $\mathbf{Y}_t, \ \mathbf{X}_t, \mathbf{U}_t$ and $\mathbf{V}_t$ for Generalized Skew-t PPCA model is specified in (7). Given the independence of the random vectors $\mathbf{Y}_t, \ \mathbf{X}_t, \mathbf{U}_t$ and $\mathbf{V}_t$ over their realisations, the E-step of the corresponding EM algorithm can be written as 
\begin{align*}\notag
& Q(\Psi, \Psi^* )  = \mathbb{E}_{\mathbf{Y}^m_{1:N},\mathbf{X}_{1:N},\mathbf{U}_{1:N},\mathbf{V}_{1:N}| \mathbf{Y}^o_{1:N}, \Psi} \Big[\log \pi_{\mathbf{Y}_{1:N}, \mathbf{X}_{1:N}, \mathbf{U}_{1:N}, \mathbf{V}_{1:N}|\Psi^*}\left(\mathbf{Y}_{1:N}, \mathbf{X}_{1:N}, \mathbf{U}_{1:N},\mathbf{V}_{1:N}  \right) \Big]\\\notag
& \hspace{0.2cm}= c \int_{\mathbb{R}^{N \times d}_+} \int_{\mathbb{R}^{N \times k}_+} \int_{\mathbb{R}^{N \times d_m}} \sum_{t=1}^N \Bigg\{ \int_{\mathbb{R}^{N \times k}} \bigg[ \log \bigg(\pi_{\mathbf{Y}_t, \mathbf{X}_t, \mathbf{U}_t, \mathbf{V}_t| \Psi^*}(\mathbf{y}_t, \mathbf{x}_t, \mathbf{u}_t,\mathbf{v}_t) \bigg) \\
& \hspace{1cmcm} \times \prod_{s=1}^N \pi_{\mathbf{Y}_s, \mathbf{X}_s, \mathbf{U}_s, \mathbf{V}_s| \Psi}(\mathbf{y}_s, \mathbf{x}_s, \mathbf{u}_s,\mathbf{v}_s)\bigg] \ d \mathbf{x}_{1:N} \Bigg\} d \mathbf{y}^m_{1:N} \ d\mathbf{u}_{1:N} \ d\mathbf{v}_{1:N},
\end{align*}
for $c = \frac{1}{\pi_{\mathbf{Y}_{1:N}^o| \Psi}(\mathbf{y}_{1:N}^o)} \geq 0$. Using Lemma \ref{lemma:tildePi_Xt_Yt_ind}, we can decompose the probability function $ \pi_{\mathbf{Y}_t, \mathbf{X}_t, \mathbf{U}_t, \mathbf{V}_t| \Psi}(\mathbf{y}_t, \mathbf{x}_t, \mathbf{u}_t,\mathbf{v}_t)$ into the product of three functions
\begin{equation*}
\pi_{\mathbf{Y}_t, \mathbf{X}_t, \mathbf{U}_t, \mathbf{V}_t| \Psi}(\mathbf{y}_t, \mathbf{x}_t, \mathbf{u}_t,\mathbf{v}_t)   = \pi_{\mathbf{X}_t|\mathbf{Y}_t,\mathbf{U}_t,\mathbf{V}_t,\Psi}(\mathbf{x}_t) \  \pi_{\mathbf{Y}_t|\mathbf{U}_t,\mathbf{V}_t,\Psi}(\mathbf{y}_t) \ C( \mathbf{u}_t,\mathbf{v}_t;\Psi \big) . 
\end{equation*}
Please refer to the result derived in Lemma \ref{lemma:tildePi_Xt_Yt_ind} to specify probability functions and the function $C( \mathbf{u}_t,\mathbf{v}_t;\Psi \big):\mathbb{R}^d_+ \times  \mathbb{R}^k_+\longrightarrow \mathbb{R} $. 

\noindent Therefore, we can further simplify the formulation of the $Q$ function
\begin{align*}\notag
& Q(\Psi, \Psi^* )  =  c\int_{\mathbb{R}^{N \times d}_+} \int_{\mathbb{R}^{N \times k}_+} \int_{\mathbb{R}^{N \times d_m}} \Bigg( \sum_{t=1}^N w(\mathbf{y}_t, \mathbf{u}_t, \mathbf{v}_t; \Psi,\Psi^*) \Bigg) \Bigg( \prod_{s=1}^N  \pi_{\mathbf{Y}_s|\mathbf{U}_s,\mathbf{V}_s,\Psi}(\mathbf{y}_s) \Bigg) \ d \mathbf{y}^m_{1:N} \Bigg( \prod_{s=1}^N C( \mathbf{u}_t,\mathbf{v}_t;\Psi \big) \Bigg) \ d\mathbf{u}_{1:N} \ d\mathbf{v}_{1:N} ,
\end{align*}
for the function $w:\mathbb{R}^d \times \mathbb{R}^d_+ \times  \mathbb{R}^k_+\longrightarrow \mathbb{R} $ defined in Lemma \ref{lemma:w_ind}. We can follow the similar steps to simplify the integration over the vector $\mathbf{y}_{1:N}^m$ using the fact that $\mathbf{y}_t$ are mutually independent . The next step is to specify the solutions to the integration problems 
\begin{align*}
&\int_{\mathbb{R}^{d_m}} \pi_{\mathbf{Y}_t|\mathbf{U}_t,\mathbf{V}_t,\Psi}(\mathbf{y}_t)  \ d \mathbf{y}^m_t 
 \textrm{ and } 
\int_{\mathbb{R}^{d_m}}  w(\mathbf{y}_t, \mathbf{u}_t, \mathbf{v}_t; \Psi,\Psi^*) \pi_{\mathbf{Y}_t|\mathbf{U}_t,\mathbf{V}_t,\Psi}(\mathbf{y}_t) \ d \mathbf{y}^m_t, 
\end{align*}
which rely on $\pi_{\mathbf{Y}_t^m|\mathbf{Y}_t^o,\mathbf{U}_t,\mathbf{V}_t,\Psi}(\mathbf{y}_t^m)$. The probability function $\pi_{\mathbf{Y}_t|\mathbf{U}_t,\mathbf{V}_t,\Psi}(\mathbf{y}_t)$ is a  Gaussian and conditional distributions of Gaussian random vectors are broadly known. We derive the probability functions in Lemma \ref{lemma:tildePi_Ytmissing_Conditional_ind} such that
\begin{equation*}
\pi_{\mathbf{Y}_t|\mathbf{U}_t,\mathbf{V}_t,\Psi}(\mathbf{y}_t) = \pi_{\mathbf{Y}_t^m|\mathbf{Y}_t^o,\mathbf{U}_t,\mathbf{V}_t,\Psi}(\mathbf{y}_t^m) \ \pi_{\mathbf{Y}_t^o |\mathbf{U}_t,\mathbf{V}_t,\Psi}(\mathbf{y}_t^o),
\end{equation*}
and use them to obtain the formulation of the $Q$ function as following
\begin{align*} 
& Q(\Psi, \Psi^* ) = c\int_{\mathbb{R}^{N \times d}_+} \int_{\mathbb{R}^{N \times k}_+} \underbrace{ \sum_{t=1}^N \Bigg\{ \int_{\mathbb{R}^{d_m}}  w(\mathbf{y}_t, \mathbf{u}_t, \mathbf{v}_t; \Psi,\Psi^*)  \pi_{\mathbf{Y}_t^m|\mathbf{Y}_t^o,\mathbf{U}_t,\mathbf{V}_t,\Psi}(\mathbf{y}_t^m) \ d \mathbf{y}^m_t    \Bigg\}}_{v(\mathbf{y}_{1:N}^o,\mathbf{u}_{1:N},\mathbf{v}_{1:N};\Psi,\Psi^*)} \\
& \hspace{1cm} \times \underbrace{ \bigg( \prod_{s=1}^N  \pi_{\mathbf{Y}_t^o |\mathbf{U}_t,\mathbf{V}_t,\Psi}(\mathbf{y}_t^o)  C( \mathbf{u}_t,\mathbf{v}_t;\Psi \big) \bigg)}_{h(\mathbf{y}_{1:N}^o,\mathbf{u}_{1:N},\mathbf{v}_{1:N};\Psi)} \ d\mathbf{u}_{1:N} \ d\mathbf{v}_{1:N}.
\end{align*}
Given the solution to the integration problem $v(\mathbf{y}_{1:N}^o,\mathbf{u}_{1:N},\mathbf{v}_{1:N};\Psi,\Psi^*)$ in Lemma \ref{lemma:integral_w_ind} and denoting f
\begin{align*}
h(\mathbf{y}_{t}^o,\mathbf{u}_{t},\mathbf{v}_{t};\Psi) =&  \pi_{\mathbf{Y}_t^o |\mathbf{U}_t,\mathbf{V}_t,\Psi}(\mathbf{y}_t^o)  \pi_{\mathbf{U}_t | \Psi} ( \mathbf{u}_t)  \pi_{\mathbf{V}_t | \Psi} ( \mathbf{v}_t)  \big(\sigma^2\big)^{\frac{k}{2}}  \Big| \mathbf{D}_{x,t}^{-1} \Big|^{- \frac{1}{2}} \Big| \mathbf{M}_t\Big|^{- \frac{1}{2}} \Big| \mathbf{N}_t^{-1}\Big|^{\frac{1}{2}}  \\
& \times \exp \Bigg\{ - \frac{1}{2} \bm{\delta}_x \Big( \mathbf{D}_{x,t}^{-1} - \sigma^2  \mathbf{M}_t^{-1}\mathbf{W}^T \mathbf{N}_t^{-1}\mathbf{D}_{\epsilon,t} \mathbf{W}\mathbf{M}_t^{-1}  - \sigma^2  \mathbf{M}_t^{-1} \Big)\bm{\delta}_x^T \Bigg\},
\end{align*}
the E-step of the Generalized Skew-t PPCA is the following
\begin{align}\label{eq:proof_Estep_Q_GStS_missing_ind}
& Q(\Psi, \Psi^* )= \sum_{t=1}^N \Bigg\{  \int_{\mathbb{R}^{d}_+} \int_{\mathbb{R}^{k}_+} v(\mathbf{y}_{t}^o,\mathbf{u}_{t},\mathbf{v}_{t};\Psi,\Psi^*)  h(\mathbf{y}_{t}^o,\mathbf{u}_{t},\mathbf{v}_{t};\Psi) \ d\mathbf{u}_t \ d \mathbf{v}_t \ \prod_{s=1, s \neq t}^N \int_{\mathbb{R}^{d}_+} \int_{\mathbb{R}^{k}_+} h(\mathbf{y}_{s}^o,\mathbf{u}_{s},\mathbf{v}_{s};\Psi) \ d\mathbf{u}_{s} \ d\mathbf{v}_{s}\Bigg\}.
\end{align}
Let us recall the definition of the mutually independent multivariate Gamma random vectors $\mathbf{U}_t$ and $\mathbf{V}_t$ from (5) in Subsection~\ref{ssec:PPCA_GStS_ind} as transformations of the mutually independent univariate uniform random variables $S_{\epsilon,t}, \ S_{x,t} \in \mathcal{U}\left(0,1\right)$ by the quantile functions $T_\epsilon:[0,1] \rightarrow \mathbb{R}_+^d$ and $T_x:[0,1] \rightarrow \mathbb{R}_+^k$ such that
\begin{align*}
&T_\epsilon(s) := \bigg( \frac{\chi_{\nu_{\epsilon}^1}^{-1} (s)}{\nu_{\epsilon}^1}, \ldots, \frac{\chi_{\nu_{\epsilon}^d}^{-1} (s)}{\nu_{\epsilon}^d}  \bigg)_{1 \times d} \ \mbox{ and } \ T_x(s) := \bigg( \frac{\chi_{\nu_x^1}^{-1} (s)}{\nu_x^1}, \ldots, \frac{\chi_{\nu_x^k}^{-1} (s)}{\nu_x^k} \bigg)_{1 \times k} .
\end{align*}
where $\chi_{\nu}^{-1}$ is the quantile distribution function of a univariate Chi-square random variable with $\nu$ degrees of freedom. By the change of variables $S_{\epsilon,t} = T^{-1}_\epsilon (\mathbf{U}_t)$ and $S_{x,t} = T^{-1}_x (\mathbf{V}_t)$, we can reduce the $d \times k$ dimensional integration problems from  \eqref{eq:proof_Estep_Q_GStS_missing_ind} to the two-dimensional integrations on the unit hypercube $[0,1]^2$. 

Firstly, we remark that the corresponding density functions $\pi_{\mathbf{U}_t| \Psi} (\mathbf{u}_t)$ and $\pi_{\mathbf{V}_t| \Psi} (\mathbf{v}_t)$ can be expressed by the density function of the random variables $S_{\epsilon,t}$ and $S_{x,t}$ defined in Subsection~\ref{ssec:PPCA_GStS_ind}, that is
\begin{align*} \notag
& \pi_{\mathbf{U}_t| \Psi} (\mathbf{u}_t) = \pi_{S_{\epsilon,t}| \Psi} (s_{\epsilon,t}) \Big|\frac{\partial T_\epsilon( s_{\epsilon,t})}{\partial s_{\epsilon,t}}\Big|^{-1} = \mathbf{1}_{[0,1]} (s_{\epsilon,t})  \Big|\frac{\partial T_\epsilon( s_{\epsilon,t})}{\partial s_{\epsilon,t}}\Big|^{-1}, \\
& \pi_{\mathbf{V}_t| \Psi} (\mathbf{v}_t) = \pi_{S_{x,t}| \Psi} (s_{x,t}) \Big|\frac{\partial T_x( s_{x,t})}{\partial s_{x,t}}\Big|^{-1} = \mathbf{1}_{[0,1]} (s_{x,t})  \Big|\frac{\partial T_x( s_{x,t})}{\partial s_{x,t}}\Big|^{-1}.
\end{align*}
The distribution of $\mathbf{Y}_t^o$ conditioned on $\mathbf{U}_{t}$ and $\mathbf{V}_{t}$ can also by expressed by $S_{\epsilon,t}$ and $S_{x,t}$ as 
$$ 
\pi_{\mathbf{Y}_t^o |T_\epsilon (S_{\epsilon,t}),T_x (S_{x,t}),\Psi}(\mathbf{y}_t^o) = \pi_{\mathbf{Y}_t^o |S_{\epsilon,t},S_{x,t},\Psi}(\mathbf{y}_t^o).
$$
Recall that the definitions of matrices $\mathbf{D}_{\epsilon,t}$ and $\mathbf{D}_{x,t}$ under the transformation $\mathbf{U}_t = T_\epsilon (S_{\epsilon,t})$ and $\mathbf{V}_t = T_x (S_{x,t})$ become
\begin{equation*}
\mathbf{D}_{\epsilon,t} =  \begin{pmatrix}
 \frac{\chi_{\nu_{\epsilon}^1}^{-1} (S_{\epsilon,t})}{\nu_{\epsilon}^1} &0&0 \\
0& \ddots &0 \\
0& 0&  \frac{\chi_{\nu_{\epsilon}^d}^{-1} (S_{\epsilon})}{\nu_{\epsilon}^d}
\end{pmatrix}_{d \times d} , \
\mathbf{D}_{x,t} =  \begin{pmatrix}
 \frac{\chi_{\nu_{x}^1}^{-1} (S_{x,t})}{\nu_{x}^1} &0&0 \\
0& \ddots &0 \\
0&0 &  \frac{\chi_{\nu_{x}^k}^{-1} (S_{x,t})}{\nu_{x}^k}
\end{pmatrix}_{k \times k}.
\end{equation*}
Hence, the E-step of the EM algorithm for GSt PPCA from  \eqref{eq:proof_Estep_Q_GStS_missing_ind} can be expressed by the sum of two-dimensional integration problems
\begin{equation*}
Q(\Psi, \Psi^* )= \frac{1}{\pi_{\mathbf{Y}_{1:N}^o| \Psi}(\mathbf{y}_{1:N}^o)} \sum_{t=1}^N \bigg\{ I_1(\mathbf{y}_{t}^o;\Psi,\Psi^*) \prod_{s=1, s \neq t}^N  I_2(\mathbf{y}_{s}^o;\Psi) \bigg\} ,
\end{equation*}
for the functions $I_1, \ I_2:\mathbb{R}^{d_o} \rightarrow \mathbb{R}$ 
\begin{align*}\notag
& I_1(\mathbf{y}_{t}^o;\Psi,\Psi^*) :=  \int_0^1 \int_0^1 v(\mathbf{y}_{t}^o,T_\epsilon(s_{\epsilon,t}),T_x(s_{x,t});\Psi,\Psi^*) \  m(\mathbf{y}_{t}^o,s_{\epsilon,t},s_{x,t};\Psi) \ d s_{\epsilon,t} \ d s_{x,t}, \\
& I_2(\mathbf{y}_{t}^o;\Psi) :=  \int_0^1 \int_0^1    m(\mathbf{y}_{t}^o,s_{\epsilon,t},s_{x,t};\Psi) \ d s_{\epsilon,t} \ d s_{x,t} ,
\end{align*}
and the function $m: \mathbb{R}^{d_o} \times [0,1]^2 \longrightarrow \mathbb{R}$
\begin{equation*}
m(\mathbf{y}_{t}^o,s_{\epsilon,t},s_{x,t};\Psi) = \pi_{\mathbf{Y}_t^o |S_{\epsilon,t},S_{x,t},\Psi}(\mathbf{y}_t^o) \big(\sigma^2\big)^{\frac{k}{2}} \Big| \mathbf{D}_{x,t}^{-1} \Big|^{- \frac{1}{2}} \Big| \mathbf{M}_t\Big|^{- \frac{1}{2}} \Big| \mathbf{N}_t\Big|^{-\frac{1}{2}} e^{ - \frac{1}{2} \bm{\delta}_x \big( \mathbf{D}_{x,t}^{-1} - \sigma^2  \mathbf{M}_t^{-1}\mathbf{W}^T \mathbf{N}_t^{-1}\mathbf{D}_{\epsilon,t} \mathbf{W}\mathbf{M}_t^{-1}  - \sigma^2  \mathbf{M}_t^{-1} \big)\bm{\delta}_x^T}.
\end{equation*}
\end{proof}

\subsection{Proof of Theorem~\ref{th:Mstep_explicitFormulas_missing_ind}} \label{proof:cor_Mstep_explicitFormulas_missing_ind}

\begin{proof}
Let us define a function $H:\mathbb{R}^{(N-1) \times d_o} \longrightarrow \mathbb{R}$ which is a product of $N-1$ integration problems on $[0,1]^2$, represented by the function $I_2$ from Theorem~\ref{th:Estep_GStS_missing_ind}, given by
\begin{equation}\label{eq:func_H_Mstep_ind}
H \big(\mathbf{y}_{1:N/t}^o;\Psi \big) := \prod_{s=1, s \neq t}^N  I_2(\mathbf{y}_s^o;\Psi).
\end{equation} 
The set $\mathbf{y}_{1:N/t}^o$ is a $N-1$ sequence of $d_o$-dimensional vectors with observed entries of $\mathbf{y}_t$ without the vector $\mathbf{y}^o_t$, that is
$$
\mathbf{y}_{1:N/t}^o = \Big\{ \mathbf{y}_1^o,\ldots,\mathbf{y}_{t-1}^o,\mathbf{y}_{t+1}^o, \ldots, \mathbf{y}_N^o \Big\}.
$$ 
The maximizers of the function $Q$ with respect to 
$\bm{\mu}^* \ ,\sigma^{*2}, \ \bm{\delta}^*_\epsilon$ and $\bm{\delta}_x^*$ have closed form solutions defined by sequences of various two-dimensional integration problem as follows
\begin{align*}
& \frac{\partial}{\partial \bm{\mu}^*} Q(\Psi,\Psi^*) = \mathbf{0} \Longleftrightarrow \sum_{t=1}^N \bigg\{ H \Big(\mathbf{y}^o_{1:N/t};\Psi \Big) \ \int_0^1 \int_0^1   \frac{\partial}{\partial \bm{\mu}^*}  \tilde{v}\big(\mathbf{y}_t^o, s_{\epsilon,t}, s_{x,t}; \Psi,\Psi^*\big)  \  m(\mathbf{y}_{t}^o,s_{\epsilon,t},s_{x,t};\Psi) \ d s_{\epsilon,t} \ d s_{x,t}  = \mathbf{0} \\\notag
&\Longleftrightarrow \bm{\mu}^* = \bigg[  \mathbf{A}_6 (\mathbf{y}_{1:N}^o;\Psi) -  \mathbf{A}_{10} (\mathbf{y}_{1:N}^o;\Psi,\Psi^*)  - \bm{\delta}_\epsilon^*  A_0 (\mathbf{y}_{1:N}^o;\Psi) + \bm{\mu}  \mathbf{A}_{13} (\mathbf{y}_{1:N}^o;\Psi,\Psi^*) + \big(\bm{\delta}_\epsilon \mathbf{W}  -  \sigma^2 \bm{\delta}_x\big) \mathbf{A}_{12} (\mathbf{y}_{1:N}^o;\Psi,\Psi^*)^T \bigg]\mathbf{A}_1(\mathbf{y}_{1:N}^o;\Psi)^{-1} ,\\\notag 
& \\\notag
&\frac{\partial}{\partial \bm{\delta}_\epsilon^*} Q(\Psi,\Psi^*) = \mathbf{0} \Longleftrightarrow \sum_{t=1}^N \bigg\{ H \Big(\mathbf{y}^o_{1:N/t};\Psi \Big) \ \int_0^1 \int_0^1  \frac{\partial}{\partial \bm{\delta}_\epsilon^*}  \tilde{v}\big(\mathbf{y}_t^o, s_{\epsilon,t}, s_{x,t}; \Psi,\Psi^*\big)  \  m(\mathbf{y}_{t}^o,s_{\epsilon,t},s_{x,t};\Psi) \ d s_{\epsilon,t} \ d s_{x,t}  = \mathbf{0} \\\notag
&\Longleftrightarrow \bm{\delta}_\epsilon^* =  \bigg[ \mathbf{A}_5(\mathbf{y}_{1:N}^o;\Psi) - \bm{\mu}^*A_0(\mathbf{y}_{1:N}^o;\Psi) - \mathbf{A}_8(\mathbf{y}_{1:N}^o;\Psi)\mathbf{W}^{*T}
 +  \bm{\mu} \mathbf{A}_{11}(\mathbf{y}_{1:N}^o;\Psi)  \mathbf{W}^{*T}+ \big( \bm{\delta}_\epsilon \mathbf{W} -  \sigma^2 \bm{\delta}_x \big) \mathbf{A}_4(\mathbf{y}_{1:N}^o;\Psi) \mathbf{W}^{*T} \bigg] \mathbf{A}_2(\mathbf{y}^{o}_{1:N};\Psi)^{-1}, \\\notag
& \\\notag
&\frac{\partial}{\partial \bm{\delta}_x^*} Q(\Psi,\Psi^*) = \mathbf{0} \Longleftrightarrow \sum_{t=1}^N \bigg\{ H \Big(\mathbf{y}^o_{1:N/t};\Psi \Big) \ \int_0^1 \int_0^1   \frac{\partial}{\partial \bm{\delta}_x^*}  \tilde{v}\big(\mathbf{y}_t^o,  s_{\epsilon,t}, s_{x,t}; \Psi,\Psi^*\big)  \  m\big(\mathbf{y}_{t}^o,s_{\epsilon,t},s_{x,t};\Psi\big) \ d s_{\epsilon,t} \ d s_{x,t}  = \mathbf{0} \\\notag
&\Longleftrightarrow \bm{\delta}_x^* = \bigg[ \mathbf{A}_8(\mathbf{y}_{1:N}^o;\Psi)- \bm{\mu} \mathbf{A}_{11}(\mathbf{y}_{1:N}^o;\Psi) - \big( \bm{\delta}_\epsilon \mathbf{W} -  \sigma^2 \bm{\delta}_x \big) \mathbf{A}_4(\mathbf{y}_{1:N}^o;\Psi) \bigg] \mathbf{A}_3 (\mathbf{y}_{1:N},\Psi)^{-1} ,\\\notag
& \\\notag
&\frac{\partial}{\partial \sigma^{*2}} Q(\Psi,\Psi^*) = \mathbf{0} \Longleftrightarrow \sum_{t=1}^N \bigg\{ H \Big(\mathbf{y}^o_{1:N/t};\Psi \Big) \ \int_0^1 \int_0^1  \frac{\partial}{\partial \sigma^{*2}}  \tilde{v}\big(\mathbf{y}_t^o, s_{\epsilon,t}, s_{x,t}; \Psi,\Psi^*\big)  \  m(\mathbf{y}_{t}^o,s_{\epsilon,t},s_{x,t};\Psi) \ d s_{\epsilon,t} \ d s_{x,t}  = \mathbf{0} \\\notag
&\Longleftrightarrow \sigma^{*2} = \frac{1}{d A_0(\mathbf{y}_{1:N}^o;\Psi)} \bigg[  A_{20}(\mathbf{y}_{1:N}^o;\Psi)  + \bm{\mu}^* \mathbf{A}_{1}(\mathbf{y}_{1:N}^o;\Psi) \bm{\mu}^{*T} + \bm{\delta}_\epsilon^{*} \mathbf{A}_{2}(\mathbf{y}_{1:N}^o;\Psi)\bm{\delta}_\epsilon^{*T} + 2   \mathbf{A}_{9}(\mathbf{y}_{1:N}^o;\Psi,\Psi^*) \bm{\mu}^T   \\\notag
& \hspace{1cm} + 2 \Big( \mathbf{A}_{10}(\mathbf{y}_{1:N}^o;\Psi,\Psi^*) -  \mathbf{A}_{6}(\mathbf{y}_{1:N}^o;\Psi)- \bm{\mu} \mathbf{A}_{13}(\mathbf{y}_{1:N}^o;\Psi,\Psi^*)\Big) \bm{\mu}^{*T} - 2  A_{21}(\mathbf{y}_{1:N}^o;\Psi,\Psi^*) +   \sigma^2 \tr \Big\{\mathbf{A}_{12}(\mathbf{y}_{1:N}^o;\Psi)^T \mathbf{W}^* \Big\}  \\\notag
&\hspace{1cm}  +   \tr \Big\{\mathbf{A}_{17}(\mathbf{y}_{1:N}^o;\Psi)^T \mathbf{W}^* \Big\} - \tr \Big\{\mathbf{A}_{18.1}(\mathbf{y}_{1:N}^o;\Psi)^T \mathbf{W}^* \Big\} - \tr \Big\{\mathbf{A}_{18.2}(\mathbf{y}_{1:N}^o;\Psi)^T \mathbf{W}^* \Big\} + \tr \Big\{\mathbf{A}_{19}(\mathbf{y}_{1:N}^o;\Psi)^T \mathbf{W}^* \Big\}   \\\notag
&\hspace{1cm} + 2 \Big( \bm{\mu}^{*}  A_{0}(\mathbf{y}_{1:N}^o;\Psi)-  \mathbf{A}_{5}(\mathbf{y}_{1:N}^o;\Psi) + \mathbf{A}_{8}(\mathbf{y}_{1:N}^o;\Psi) \mathbf{W}^{*T} -  \bm{\mu}  \mathbf{A}_{11}(\mathbf{y}_{1:N}^o;\Psi) \mathbf{W}^{*T} \Big) \bm{\delta}_\epsilon^{*T} \\\notag
&\hspace{1cm}+ 2 \Big(  \mathbf{A}_{7}(\mathbf{y}_{1:N}^o;\Psi,\Psi^*)  -  \bm{\mu}^{*} \mathbf{A}_{12}(\mathbf{y}_{1:N}^o;\Psi,\Psi^*) -   \bm{\delta}_\epsilon^{*}  \mathbf{W}^{*} \mathbf{A}_{4}(\mathbf{y}_{1:N}^o;\Psi) \Big) \big( \bm{\delta}_\epsilon \mathbf{W} - \sigma^2 \bm{\delta}_x \big)^T  \bigg],
\end{align*}
where 
\begin{align*}
& A_0 (\mathbf{y}_{1:N}^o;\Psi)_{1 \times 1} := \sum_{t=1}^N \bigg\{  H \Big(\mathbf{y}^o_{1:N/t};\Psi \Big) \  \int_0^1 \int_0^1 \  m(\mathbf{y}_{t}^o,s_{\epsilon,t},s_{x,t};\Psi) \ d s_{\epsilon,t} \ d s_{x,t} \bigg\} , \\\notag
&  \mathbf{A}_1 (\mathbf{y}_{1:N}^o;\Psi)_{d \times d} := \sum_{t=1}^N \bigg\{ H \Big(\mathbf{y}^o_{1:N/t};\Psi \Big) \ \int_0^1 \int_0^1  \mathbf{D}_{\epsilon,t} \  m(\mathbf{y}_{t}^o,s_{\epsilon,t},s_{x,t};\Psi) \ d s_{\epsilon,t} \ d s_{x,t} \bigg\}, \\\notag
& \mathbf{A}_2 (\mathbf{y}_{1:N}^o;\Psi)_{d \times d}:= \sum_{t=1}^N \bigg\{ H \Big(\mathbf{y}^o_{1:N/t};\Psi \Big) \ \int_0^1 \int_0^1  \mathbf{D}_{\epsilon,t}^{-1} \  m(\mathbf{y}_{t}^o,s_{\epsilon,t},s_{x,t};\Psi) \ d s_{\epsilon,t} \ d s_{x,t} \bigg\} ,\\\notag
&  \mathbf{A}_3 (\mathbf{y}_{1:N}^o;\Psi)_{k \times k}:= \sum_{t=1}^N \bigg\{ H \Big(\mathbf{y}^o_{1:N/t};\Psi \Big) \ \int_0^1 \int_0^1 \mathbf{D}_{x,t}^{-1} \  m(\mathbf{y}_{t}^o,s_{\epsilon,t},s_{x,t};\Psi) \ d s_{\epsilon,t} \ d s_{x,t} \bigg\} ,\\\notag
&  \mathbf{A}_4 (\mathbf{y}_{1:N}^o;\Psi)_{k \times k}:=  \sum_{t=1}^N \bigg\{ H \Big(\mathbf{y}^o_{1:N/t};\Psi \Big) \ \int_0^1 \int_0^1  \mathbf{M}_t^{-1} \  m(\mathbf{y}_{t}^o,s_{\epsilon,t},s_{x,t};\Psi) \ d s_{\epsilon,t} \ d s_{x,t} \bigg\} ,\\\notag
&  \mathbf{A}_5 (\mathbf{y}_{1:N}^o;\Psi)_{1 \times d}:=  \sum_{t=1}^N \bigg\{ H \Big(\mathbf{y}^o_{1:N/t};\Psi \Big) \ \int_0^1 \int_0^1  \mathbb{E}_{\mathbf{Y}_t^m|\mathbf{Y}_t^o,S_{\epsilon,t}, S_{x,t}, \Psi}\big[ \mathbf{Y}_t\big] \  m(\mathbf{y}_{t}^o,s_{\epsilon,t},s_{x,t};\Psi) \ d s_{\epsilon,t} \ d s_{x,t} \bigg\} ,\\\notag
&  \mathbf{A}_6 (\mathbf{y}_{1:N}^o;\Psi)_{1 \times d}:=  \sum_{t=1}^N \bigg\{ H \Big(\mathbf{y}^o_{1:N/t};\Psi \Big) \ \int_0^1 \int_0^1  \mathbb{E}_{\mathbf{Y}_t^m|\mathbf{Y}_t^o,S_{\epsilon,t}, S_{x,t}, \Psi}\big[ \mathbf{Y}_t\big] \mathbf{D}_{\epsilon,t}\  m(\mathbf{y}_{t}^o,s_{\epsilon,t},s_{x,t};\Psi) \ d s_{\epsilon,t} \ d s_{x,t} \bigg\} ,\\\notag
&  \mathbf{A}_7 (\mathbf{y}_{1:N}^o;\Psi,\Psi^*)_{1 \times k}:=  \sum_{t=1}^N \bigg\{ H \Big(\mathbf{y}^o_{1:N/t};\Psi \Big) \ \int_0^1 \int_0^1 \mathbb{E}_{\mathbf{Y}_t^m|\mathbf{Y}_t^o,S_{\epsilon,t}, S_{x,t}, \Psi}\big[ \mathbf{Y}_t\big]  \mathbf{D}_{\epsilon,t}  \mathbf{W}^* \mathbf{M}_t^{-1} \  m(\mathbf{y}_{t}^o,s_{\epsilon,t},s_{x,t};\Psi) \ d s_{\epsilon,t} \ d s_{x,t} \bigg\},  \\\notag
&  \mathbf{A}_8 (\mathbf{y}_{1:N}^o;\Psi)_{1 \times k}:=  \sum_{t=1}^N \bigg\{ H \Big(\mathbf{y}^o_{1:N/t};\Psi \Big) \ \int_0^1 \int_0^1  \mathbb{E}_{\mathbf{Y}_t^m|\mathbf{Y}_t^o,S_{\epsilon,t}, S_{x,t}, \Psi}\big[ \mathbf{Y}_t\big] \mathbf{D}_{\epsilon,t}  \mathbf{W} \mathbf{M}_t^{-1} \  m(\mathbf{y}_{t}^o,s_{\epsilon,t},s_{x,t};\Psi) \ d s_{\epsilon,t} \ d s_{x,t} \bigg\} ,\\\notag
& \mathbf{A}_9 (\mathbf{y}_{1:N}^o;\Psi,\Psi^*)_{1 \times d}:= \sum_{t=1}^N \bigg\{ H \Big(\mathbf{y}^o_{1:N/t};\Psi \Big) \ \int_0^1 \int_0^1 \mathbb{E}_{\mathbf{Y}_t^m|\mathbf{Y}_t^o,S_{\epsilon,t}, S_{x,t}, \Psi}\big[ \mathbf{Y}_t\big]  \mathbf{D}_{\epsilon,t}  \mathbf{W}^* \mathbf{M}_t^{-1}   \mathbf{W}^T \mathbf{D}_{\epsilon,t} \  m(\mathbf{y}_{t}^o,s_{\epsilon,t},s_{x,t};\Psi) \ d s_{\epsilon,t} \ d s_{x,t} \bigg\} ,\\\notag
& \mathbf{A}_{10} (\mathbf{y}_{1:N}^o;\Psi,\Psi^*)_{1 \times d}:= \sum_{t=1}^N \bigg\{ H \Big(\mathbf{y}^o_{1:N/t};\Psi \Big) \ \int_0^1 \int_0^1 \mathbb{E}_{\mathbf{Y}_t^m|\mathbf{Y}_t^o,S_{\epsilon,t}, S_{x,t}, \Psi}\big[ \mathbf{Y}_t\big] \mathbf{D}_{\epsilon,t}  \mathbf{W} \mathbf{M}_t^{-1} \mathbf{W}^{*T} \mathbf{D}_{\epsilon,t} \  m(\mathbf{y}_{t}^o,s_{\epsilon,t},s_{x,t};\Psi) \ d s_{\epsilon,t} \ d s_{x,t} \bigg\},  \\\notag
& \mathbf{A}_{11} (\mathbf{y}_{1:N}^o;\Psi)_{d \times k}:= \sum_{t=1}^N \bigg\{ H \Big(\mathbf{y}^o_{1:N/t};\Psi \Big) \ \int_0^1 \int_0^1 \mathbf{D}_{\epsilon,t} \mathbf{W} \mathbf{M}_t^{-1} \  m(\mathbf{y}_{t}^o,s_{\epsilon,t},s_{x,t};\Psi) \ d s_{\epsilon,t} \ d s_{x,t} \bigg\}, \\\notag
& \mathbf{A}_{12} (\mathbf{y}_{1:N}^o;\Psi,\Psi^*)_{d \times k}:=  \sum_{t=1}^N \bigg\{ H \Big(\mathbf{y}^o_{1:N/t};\Psi \Big) \ \int_0^1 \int_0^1 \mathbf{D}_{\epsilon,t} \mathbf{W}^{*} \mathbf{M}_t^{-1}  \  m(\mathbf{y}_{t}^o,s_{\epsilon,t},s_{x,t};\Psi) \ d s_{\epsilon,t} \ d s_{x,t} \bigg\}, \\\notag
& \mathbf{A}_{13} (\mathbf{y}_{1:N}^o;\Psi,\Psi^*)_{d \times d}:= \sum_{t=1}^N \bigg\{ H \Big(\mathbf{y}^o_{1:N/t};\Psi \Big) \ \int_0^1 \int_0^1 \mathbf{D}_{\epsilon,t} \mathbf{W} \mathbf{M}_t^{-1} \mathbf{W}^{*T} \mathbf{D}_{\epsilon,t} \  m(\mathbf{y}_{t}^o,s_{\epsilon,t},s_{x,t};\Psi) \ d s_{\epsilon,t} \ d s_{x,t} \bigg\} ,\\\notag
& \mathbf{A}_{14} (\mathbf{y}_{1:N}^o;\Psi)_{d \times k}:=  \sum_{t=1}^N \bigg\{  H \Big(\mathbf{y}^o_{1:N/t};\Psi \Big) \  \int_0^1 \int_0^1    \mathbf{D}_{\epsilon,t}\mathbb{E}_{\mathbf{Y}_t^m|\mathbf{Y}_t^o,S_{\epsilon,t}, S_{x,t}, \Psi}\big[ \mathbf{Y}_t^T \mathbf{Y}_t\big]  \mathbf{D}_{\epsilon,t}  \mathbf{W} \mathbf{M}_t^{-1} \  m(\mathbf{y}_{t}^o,s_{\epsilon,t},s_{x,t};\Psi) \ d s_{\epsilon,t} \ d s_{x,t} \bigg\} , \\\notag
& \mathbf{A}_{15} (\mathbf{y}_{1:N}^o;\Psi,\Psi^*)_{d \times k}:= \sum_{t=1}^N \bigg\{  H \Big(\mathbf{y}^o_{1:N/t};\Psi \Big) \  \int_0^1 \int_0^1 \mathbf{D}_{\epsilon,t}\bm{\mu}^{*T} \mathbb{E}_{\mathbf{Y}_t^m|\mathbf{Y}_t^o,S_{\epsilon,t}, S_{x,t}, \Psi}\big[ \mathbf{Y}_t\big] \mathbf{D}_{\epsilon,t}  \mathbf{W} \mathbf{M}_t^{-1} \  m(\mathbf{y}_{t}^o,s_{\epsilon,t},s_{x,t};\Psi) \ d s_{\epsilon,t} \ d s_{x,t} \bigg\} , \\\notag
& \mathbf{A}_{16} (\mathbf{y}_{1:N}^o;\Psi,\Psi^*)_{d \times k}:=\sum_{t=1}^N \bigg\{  H \Big(\mathbf{y}^o_{1:N/t};\Psi \Big) \  \int_0^1 \int_0^1 \mathbf{D}_{\epsilon,t}\Big( \mathbb{E}_{\mathbf{Y}_t^m|\mathbf{Y}_t^o,S_{\epsilon,t}, S_{x,t}, \Psi}\big[ \mathbf{Y}_t\big] - \bm{\mu}^{*}\Big)^T\Big( \bm{\mu}\mathbf{D}_{\epsilon,t}\mathbf{W} + \bm{\delta}_\epsilon\mathbf{W} - \sigma^2 \bm{\delta}_x\Big) \mathbf{M}_t^{-1} \\
&\hspace{0.3cm} \times  m(\mathbf{y}_{t}^o,s_{\epsilon,t},s_{x,t};\Psi) \ d s_{\epsilon,t} \ d s_{x,t} \bigg\} ,\\\notag
& \mathbf{A}_{17} (\mathbf{y}_{1:N}^o;\Psi,\Psi^*)_{k \times d}:= \sum_{t=1}^N \bigg\{  H \Big(\mathbf{y}^o_{1:N/t};\Psi \Big) \  \int_0^1 \int_0^1   \mathbf{D}_{\epsilon,t} \mathbf{W}^{*} \mathbf{M}_t^{-1} \mathbf{W} ^T\mathbf{D}_{\epsilon,t}  \mathbb{E}_{\mathbf{Y}_t^m|\mathbf{Y}_t^o,S_{\epsilon,t}, S_{x,t}, \Psi}\big[ \mathbf{Y}_t^T \mathbf{Y}_t\big]  \mathbf{D}_{\epsilon,t} \mathbf{W}\mathbf{M}_t^{-1} \\
&\hspace{0.3cm} \times  m(\mathbf{y}_{t}^o,s_{\epsilon,t},s_{x,t};\Psi) \ d s_{\epsilon,t} \ d s_{x,t} \bigg\}, \\\notag
&  \mathbf{A}_{18.1} (\mathbf{y}_{1:N}^o;\Psi,\Psi^*)_{k \times d}:= \sum_{t=1}^N \bigg\{  H \Big(\mathbf{y}^o_{1:N/t};\Psi \Big) \  \int_0^1 \int_0^1\mathbf{D}_{\epsilon,t}\mathbf{W}^{*}  \mathbf{M}_t^{-1} \mathbf{W} ^T\mathbf{D}_{\epsilon,t}   \mathbb{E}_{\mathbf{Y}_t^m|\mathbf{Y}_t^o,S_{\epsilon,t}, S_{x,t}, \Psi}\big[ \mathbf{Y}_t\big] ^T \Big(\bm{\mu}\mathbf{D}_{\epsilon,t}\mathbf{W}+ \bm{\delta}_\epsilon\mathbf{W} - \sigma^2 \bm{\delta}_x  \Big) \mathbf{M}_t^{-1} \\
&\hspace{0.3cm} \times  m(\mathbf{y}_{t}^o,s_{\epsilon,t},s_{x,t};\Psi) \ d s_{\epsilon,t} \ d s_{x,t} \bigg\},\\\notag
&  \mathbf{A}_{18.2} (\mathbf{y}_{1:N}^o;\Psi,\Psi^*)_{k \times d}:= \sum_{t=1}^N \bigg\{  H \Big(\mathbf{y}^o_{1:N/t};\Psi \Big) \  \int_0^1 \int_0^1\mathbf{D}_{\epsilon,t}\mathbf{W}^{*}  \mathbf{M}_t^{-1} \Big(\bm{\mu}\mathbf{D}_{\epsilon,t}\mathbf{W}+ \bm{\delta}_\epsilon\mathbf{W} - \sigma^2 \bm{\delta}_x  \Big)^T \mathbb{E}_{\mathbf{Y}_t^m|\mathbf{Y}_t^o,S_{\epsilon,t}, S_{x,t}, \Psi}\big[ \mathbf{Y}_t\big] \mathbf{D}_{\epsilon,t} \mathbf{W}\mathbf{M}_t^{-1} \\
&\hspace{0.3cm}\times  m(\mathbf{y}_{t}^o,s_{\epsilon,t},s_{x,t};\Psi) \ d s_{\epsilon,t} \ d s_{x,t} \bigg\}, \\\notag
& \mathbf{A}_{19} (\mathbf{y}_{1:N}^o;\Psi,\Psi^*)_{k \times d}:= \sum_{t=1}^N \bigg\{  H \Big(\mathbf{y}^o_{1:N/t};\Psi \Big) \  \int_0^1 \int_0^1 \mathbf{D}_{\epsilon,t}\mathbf{W}^{*} \mathbf{M}_t^{-1}\Big( \bm{\mu}\mathbf{D}_{\epsilon,t} \mathbf{W} + \bm{\delta}_\epsilon \mathbf{W} - \sigma^2 \bm{\delta}_x \Big)^T \Big( \bm{\mu}\mathbf{D}_{\epsilon,t} \mathbf{W} + \bm{\delta}_\epsilon \mathbf{W} - \sigma^2 \bm{\delta}_x \Big) \mathbf{M}_t^{-1} \\
&\hspace{0.3cm}\times  m(\mathbf{y}_{t}^o,s_{\epsilon,t},s_{x,t};\Psi) \ d s_{\epsilon,t} \ d s_{x,t} \bigg\} ,\\\notag
& A_{20} (\mathbf{y}_{1:N}^o;\Psi)_{1 \times 1}:=\sum_{t=1}^N \bigg\{ H \Big(\mathbf{y}^o_{1:N/t};\Psi \Big) \ \int_0^1 \int_0^1 \tr \Big\{  \mathbb{E}_{\mathbf{Y}_t^m|\mathbf{Y}_t^o,S_{\epsilon,t}, S_{x,t}, \Psi}\big[ \mathbf{Y}_t^T \mathbf{Y}_t\big]  \mathbf{D}_{\epsilon,t} \Big\} \  m(\mathbf{y}_{t}^o,s_{\epsilon,t},s_{x,t};\Psi) \ d s_{\epsilon,t} \ d s_{x,t} \bigg\},  \\\notag
& A_{21} (\mathbf{y}_{1:N}^o;\Psi,\Psi^*)_{1 \times 1}:= \sum_{t=1}^N \bigg\{ H \Big(\mathbf{y}^o_{1:N/t};\Psi \Big) \ \int_0^1 \int_0^1  \tr \Big\{ \mathbb{E}_{\mathbf{Y}_t^m|\mathbf{Y}_t^o,S_{\epsilon,t}, S_{x,t}, \Psi}\big[ \mathbf{Y}_t^T \mathbf{Y}_t\big]  \mathbf{D}_{\epsilon,t}  \mathbf{W} \mathbf{M}_t^{-1}\mathbf{W}^{*T} \mathbf{D}_{\epsilon,t} \Big\} \  m(\mathbf{y}_{t}^o,s_{\epsilon,t},s_{x,t};\Psi) \ d s_{\epsilon,t} \ d s_{x,t} \bigg\} .
\end{align*}
The maximizer that corresponds to the parameter $\mathbf{W}^*$ is more difficult to obtain since the partial derivative of the function $Q$ with respect to $\mathbf{W}^*$ requires integrating the matrix products which contain $\mathbf{W}^*$ and cannot be further simplified, that is
\begin{align*}
& \frac{\partial}{\partial \mathbf{W}^*} Q(\Psi,\Psi^*) = \mathbf{0} \Longleftrightarrow \sum_{t=1}^N \bigg\{ H \Big(\mathbf{y}^o_{1:N/t};\Psi \Big) \ \int_0^1 \int_0^1   \frac{\partial}{\partial \mathbf{W}^*}  \tilde{v}\big(\mathbf{y}_t^o, s_{\epsilon,t}, s_{x,t}; \Psi,\Psi^*\big)  \  m\big(\mathbf{y}_{t}^o,s_{\epsilon,t},s_{x,t};\Psi\big) \ d s_{\epsilon,t} \ d s_{x,t}  = \mathbf{0} \\\notag
& \Longleftrightarrow \mathbf{A}_{14} (\mathbf{y}_{1:N}^o;\Psi) - \mathbf{A}_{15} (\mathbf{y}_{1:N}^o;\Psi,\Psi^*) - \bm{\delta}_\epsilon^{*T} \mathbf{A}_{8} (\mathbf{y}_{1:N}^o;\Psi)- \mathbf{A}_{16} (\mathbf{y}_{1:N}^o;\Psi,\Psi^*)  + \bm{\delta}_\epsilon^{*T}\bm{\mu} \mathbf{A}_{11} (\mathbf{y}_{1:N}^o;\Psi) \\\notag
&\hspace{1cm}+ \bm{\delta}_\epsilon^{*T} \big(\bm{\delta} \mathbf{W} - \sigma^2 \mathbf{W} \big) \mathbf{A}_{4} (\mathbf{y}_{1:N}^o;\Psi)- \mathbf{A}_{17} (\mathbf{y}_{1:N}^o;\Psi,\Psi^*) +  \mathbf{A}_{18.1} (\mathbf{y}_{1:N}^o;\Psi,\Psi^*) + \mathbf{A}_{18.2} (\mathbf{y}_{1:N}^o;\Psi,\Psi^*) \\\notag
&\hspace{1cm}- \mathbf{A}_{19} (\mathbf{y}_{1:N}^o;\Psi,\Psi^*) - \sigma^2 \mathbf{A}_{12} (\mathbf{y}_{1:N}^o;\Psi,\Psi^*)= \mathbf{0}
\end{align*}
The maximizer with respect to $\mathbf{W}^*$ requires solving a root finding problem $\frac{\partial Q(\Psi,\Psi^*)}{\partial \mathbf{W}^*} = \mathbf{0}$. Recall that $\frac{\partial Q(\Psi,\Psi^*)}{\partial \mathbf{W}^*} $ contains elements of $\bm{\mu}^*$ and $\bm{\delta}_\epsilon^*$.
By denoting $f: \mathbb{R}^{d \times k} \rightarrow  \mathbb{R}^{d \times k} $ such that $f(\mathbf{W}^*;\Psi,\Psi^*): = \frac{\partial Q(\Psi,\Psi^*)}{\partial \mathbf{W}^*}$ is given by 
\begin{align*}
f(\mathbf{W}^*;\Psi,\Psi^*) &= \mathbf{A}_{14} (\mathbf{y}_{1:N}^o;\Psi) - \mathbf{A}_{15} (\mathbf{y}_{1:N}^o;\Psi,\Psi^*) - \bm{\delta}_\epsilon^{*T} \mathbf{A}_{8} (\mathbf{y}_{1:N}^o;\Psi)- \mathbf{A}_{16} (\mathbf{y}_{1:N}^o;\Psi,\Psi^*)  + \bm{\delta}_\epsilon^{*T}\bm{\mu} \mathbf{A}_{11} (\mathbf{y}_{1:N}^o;\Psi) \\\notag
&+ \bm{\delta}_\epsilon^{*T} \big(\bm{\delta} \mathbf{W} - \sigma^2 \mathbf{W} \big) \mathbf{A}_{4} (\mathbf{y}_{1:N}^o;\Psi)- \mathbf{A}_{17} (\mathbf{y}_{1:N}^o;\Psi,\Psi^*) +  \mathbf{A}_{18.1} (\mathbf{y}_{1:N}^o;\Psi,\Psi^*) + \mathbf{A}_{18.2} (\mathbf{y}_{1:N}^o;\Psi,\Psi^*) \\\notag
&- \mathbf{A}_{19} (\mathbf{y}_{1:N}^o;\Psi,\Psi^*) - \sigma^2 \mathbf{A}_{12} (\mathbf{y}_{1:N}^o;\Psi,\Psi^*)^T.
\end{align*}
the system of equations $\nabla_{\Psi^*} Q(\Psi,\Psi^*) = \mathbf{0}$ reduces to 
\begin{align*}
\begin{cases}
& \bm{\mu}^* = A_1(\mathbf{y}_{1:N}^o;\Psi)^{-1}\Big[ \mathbf{A}_6(\mathbf{y}_{1:N}^o;\Psi) - A_0(\mathbf{y}_{1:N}^o;\Psi) \bm{\delta}_\epsilon^* - \mathbf{A}_{10} (\mathbf{y}_{1:N}^o;\Psi) \mathbf{W}^{*T} + \bm{\mu}\mathbf{W}\mathbf{A}_{13}(\mathbf{y}_{1:N}^o;\Psi)\mathbf{W}^{*T} + \big(\bm{\delta}_\epsilon \mathbf{W} + \sigma^2 \bm{\delta}_x \big) \mathbf{A}_{11}(\mathbf{y}_{1:N}^o;\Psi)\mathbf{W}^{*T} \Big] ,\\\notag
& \bm{\delta}_\epsilon^* =   A_2(\mathbf{y}_{1:N}^o;\Psi)^{-1} \Big[ \mathbf{A}_5(\mathbf{y}_{1:N};\Psi) - A_0(\mathbf{y}_{1:N};\Psi)  \bm{\mu}^* - \mathbf{A}_8(\mathbf{y}_{1:N}^o;\Psi) \mathbf{W}^{*T} + \bm{\mu} \mathbf{W} \mathbf{A}_{11} (\mathbf{y}_{1:N}^o;\Psi) \mathbf{W}^{*T} +   \big( \bm{\delta}_\epsilon \mathbf{W} - \sigma^2 \bm{\delta}_x \big) \mathbf{A}_4(\mathbf{y}_{1:N}^o;\Psi) \mathbf{W}^{*T} \Big] ,\\\notag
& \bm{\delta}_x^* =  A_3(\mathbf{y}_{1:N}^o;\Psi)^{-1} \Big( \mathbf{A}_{8}(\mathbf{y}_{1:N};\Psi) - \bm{\mu} \mathbf{W} \mathbf{A}_{11}(\mathbf{y}_{1:N}^o;\Psi) - \big( \bm{\delta}_\epsilon\mathbf{W} - \sigma^2 \bm{\delta}_x \big)\mathbf{A}_{4}(\mathbf{y}_{1:N}^o;\Psi\big) \Big), \\\notag
& f(\mathbf{W}^*;\Psi,\Psi^*)  = 0 ,  \\\notag
& \sigma^{*2} = \frac{1}{dA_0(\mathbf{y}^o_{1:N};\Psi)} \Bigg[  A_{20}(\mathbf{y}^o_{1:N};\Psi) -  2\mathbf{A}_{6}(\mathbf{y}^o_{1:N};\Psi)\bm{\mu}^{*T} -2 \mathbf{A}_{5}(\mathbf{y}^o_{1:N};\Psi) \bm{\delta}_\epsilon^{*T} +  A_{1}(\mathbf{y}^o_{1:N};\Psi) \bm{\mu}^{*} \bm{\mu}^{*T} + 2A_{0}(\mathbf{y}^o_{1:N};\Psi)\bm{\mu}^* \bm{\delta}_\epsilon^{*T}  \\\notag
&\hspace{0.3cm}+ A_{2}(\mathbf{y}^o_{1:N};\Psi) \bm{\delta}_\epsilon^* \bm{\delta}_\epsilon^{*T} -  2 \tr \Big\{ \mathbf{A}_{14}(\mathbf{y}^o_{1:N};\Psi) \mathbf{W}^{*T} \Big\}+ 2\mathbf{A}_{9}(\mathbf{y}^o_{1:N};\Psi,\Psi^*)\mathbf{W}^T \bm{\mu}^T + 2 \mathbf{A}_{7}(\mathbf{y}^o_{1:N};\Psi,\Psi)\mathbf{W}^T\bm{\delta}_\epsilon^T \\\notag
&\hspace{0.3cm}  - 2 \sigma^2 \mathbf{A}_{7}(\mathbf{y}^o_{1:N};\Psi,\Psi)\bm{\delta}_x^T  + 2 \Big( \mathbf{A}_{10}(\mathbf{y}^o_{1:N};\Psi)  - \bm{\mu} \mathbf{W}\mathbf{A}_{13}(\mathbf{y}^o_{1:N};\Psi) -  \big(\bm{\delta}_\epsilon \mathbf{W} - \sigma^2 \bm{\delta}_x\big)\mathbf{A}_{11}(\mathbf{y}^o_{1:N};\Psi)\Big)\mathbf{W}^{*T} \bm{\mu}^{*T} \\\notag
&\hspace{0.3cm}  + 2 \Big( \mathbf{A}_{8}(\mathbf{y}^o_{1:N};\Psi) -  \bm{\mu} \mathbf{W}\mathbf{A}_{11}(\mathbf{y}^o_{1:N};\Psi)  -  \big(\bm{\delta}_\epsilon \mathbf{W} - \sigma^2 \bm{\delta}_x\big)\mathbf{A}_{4}(\mathbf{y}^o_{1:N};\Psi)\Big)\mathbf{W}^{*T} \bm{\delta}_\epsilon^{*T} +  \tr \Big\{ \mathbf{A}_{17}(\mathbf{y}^o_{1:N};\Psi)   \mathbf{W}^{*T}\mathbf{W}^{*}  \Big\}  \\\notag
& \hspace{0.3cm} -  \tr \Big\{ \Big(\mathbf{A}_{18}(\mathbf{y}^o_{1:N};\Psi) + \mathbf{A}_{18}(\mathbf{y}^o_{1:N};\Psi)^T\Big)   \mathbf{W}^{*T}  \mathbf{W}^{*}\Big\} +\tr \Big\{  \mathbf{A}_{19}(\mathbf{y}^o_{1:N};\Psi) \mathbf{W}^{*T}  \mathbf{W}^{*}  \Big\}  + \sigma^2  \tr \Big\{ \mathbf{A}_{11}(\mathbf{y}^o_{1:N};\Psi) \mathbf{W}^{*T}  \mathbf{W}^{*}  \Big\}\Bigg] .
\end{cases}
\end{align*}
\end{proof}

\section{Simplified Skew-t GSt PPCA}\label{appendix:skewX_syn_case_studies}
We calculate the steps of the EM algorithm for a simplified model of the Skew-t GSt PPCA than introduced in Subsection~\ref{ssec:PPCA_GStS_ind}. The following algorithm is derived using the assumption that there is no missing data and the skewness is only present in the representation of $\mathbf{X}_t$, that is
\begin{align}\label{eq:appendix_skewX_model} 
& \mathbf{X}_{t, \ 1\times k} = \bm{\delta}_x V_t^{-1} +\sqrt{ V_t^{-1} }\mathbf{Z}_{x,t} , \ \bm{\epsilon}_{t \ 1\times d} = \sqrt{\sigma^2 U_t^{-1}} \mathbf{Z}_{\epsilon,t} , \  \mathbf{Y}_{t, \ 1\times d} = \bm{\mu } + \mathbf{X}_t \mathbf{W}^T + \bm{\epsilon}_t,
\end{align}
for
\begin{align*}
&U_t  := T_\epsilon(s) := \frac{\chi_{\nu_{\epsilon}^1}^{-1} (s)}{\nu_{\epsilon}}  \mbox{ and } \ V_t:=T_x(s) :=  \frac{\chi_{\nu_x^1}^{-1} (s)}{\nu_x}.
\end{align*}
In the follow-up publication to this work, we show that the Skew-t GSt PPCA is subject to identification problems that arise from the joint estimation of $\bm{\mu}$ and $\bm{\delta}_x$ via the EM algorithm. To address this problem, we propose to exclude the estimation of both parameters from the EM algorithm. The new set of normal equations that defines the iterative maximisation of $Q$ does not include the step that maximizes the objective function with respect to $\bm{\mu}$ or $\bm{\delta}_x$. 

We want to avoid making an assumption that the observation process is zero- mean.  Therefore, we argue that if there is no presence of missing values, we can introduce a simple correction that provides good accuracy of estimates for all parameters, $\bm{\mu},\mathbf{W},\sigma^2$ and $\bm{\delta}_x$, when parameters $\mathbf{W}, \ \sigma^2$ are specified by the EM algorithm that assumes no intercept term.  We determine $\bm{\delta}_x$ by a grid search, and the intercept term $\bm{\mu}$ is obtained by an iterative adjustment. We use the fact that the first moment of the marginal distribution of $\mathbf{X}_t$ equals
\begin{align*}\notag
& \mathbb{E}_{\mathbf{X}_t|\Psi}\big[ \mathbf{X}_t \big] = \mathbb{E}_{V_t|\Psi}\Big[ \mathbb{E}_{\mathbf{X}_t|V_t, \Psi }\big[ \mathbf{X}_t \big]  \Big]  =  \frac{v_x}{v_x-2} \bm{\delta}_x ,
\end{align*}
and consequently
\begin{align*}
& \mathbb{E}_{ \mathbf{Y}_{t}|\Psi } \big[ \mathbf{Y}_{t}\big] = \bm{\mu } +  \frac{v_x}{v_x-2} \bm{\delta}_x  \mathbf{W}^T.
\end{align*}
Therefore, we specify the update of $\bm{\mu}$, $\bm{\mu }^*$, over each iteration of the EM algorithm as
\begin{align*}
&  \bm{\mu }^* = \frac{1}{N} \sum_{t = 1}^N \mathbf{Y}_{t} -  \frac{v_x}{v_x-2} \bm{\delta}_x  \mathbf{W}^T, 
\end{align*}
for fixed $\bm{\delta}_x$, and calculate the maximizers $ \mathbf{W}^*$ and $\sigma^{*2}$ given the centred realisations $\tilde{\mathbf{Y}}_t = \mathbf{Y}_{t} - \bm{\mu }^* $. Instead of the sample average, one can use more robust estimators of the first moment. 

The steps of the EM algorithm for the centred data (no $\bm{\mu}$) with $\bm{\delta}_x$ specified on a grid assumes the following stochastic representation of the observation process
\begin{align}\label{eq:appendix_skewX_model3}
& \tilde{\mathbf{Y}}_{t, \ 1\times d} =  \mathbf{X}_t \mathbf{W}^T + \bm{\epsilon}_t,
\end{align}
and the corresponding conditional distribution $\tilde{\mathbf{Y}}_{t}  |\mathbf{X}_{t}, \mathbf{U}_{t},\mathbf{V}_{t} , \Psi$
\begin{align*}
\pi (\tilde{\mathbf{y}}_{t}  |\mathbf{x}_{t}, u_{t},v_{t} , \Psi)  =   \big(2 \pi \big)^{-\frac{d}{2}}  \big( \sigma^2\big)^{-\frac{d}{2}}  U_t^{\frac{d}{2}}  \exp \Big\{ -\frac{1}{2 \sigma^2} \Big( U_t \tilde{\mathbf{y}}_t\tilde{ \mathbf{y}}_t ^T - 2 U_t \tilde{\mathbf{y}}_t \mathbf{W}  \mathbf{x}_t ^T + U_t  \mathbf{x}_t \mathbf{W}^T\mathbf{W}  \mathbf{x}_t^T\Big)\Big\} .
\end{align*}
The steps of the algorithm are detailed in Algorithm \ref{algo_adj} given Lemma \ref{lemma:Estep_GStS_skewX_c} and Lemma \ref{lemma:Mstep_GStS_skewX_c}.

\begin{algorithm}[H]\label{algo_adj}
\caption{Algorithm of the EM algorithm for Skew-t GSt PPCA with the parameter $\bm{\delta}_x$ being specified on a grid of values and the adjustment for the intercept $\bm{\mu}$.}
\BlankLine
\addtolength\linewidth{-12ex}

\KwIn{Define $v_\epsilon, v_x$ and the grid for $\bm{\delta}_x^{grid}$; }
\KwIn{Define initial  parameters for the EM algorithm, $\mathbf{W}^{(0)},\sigma^{(0)2}$;}

Calculate $\bar{\bm{\mu}}  = \frac{1}{N} \sum_{t = 1}^N \mathbf{y}_{t}$; \\
\For{$\bm{\delta}_x^{grid} = \big\{a,\ldots,b\big\}$}{
Initialize $\mathbf{W}^{(0)},\sigma^{(0)2}$;\\
\For{$i = 1, \ldots, M$}{
Specify 
$\bm{\mu }^{(i) }= \bar{\bm{\mu}} -  \frac{v_x}{v_x-2} \bm{\delta}_x^{grid}  \mathbf{W}^{(i-1)T} 
$; \\
Specify $\tilde{\mathbf{Y}}_{N \times d}^{(i)} = \mathbf{Y}_{N \times d}- \bm{\mu }^{(i) } $; \\
Calculate maximizers $\mathbf{W}^{(i) }, \sigma^{(i)2}$ as in Lemma \ref{lemma:Mstep_GStS_skewX_c} for the centred data set $\tilde{\mathbf{Y}}_{1:N}^{(i)}$;  \\

}
}
Select the optimal $\bm{\delta}^*_x$ with the highest log-likelihood for the sample $\mathbf{y}_{1:N}$.

\end{algorithm}

\begin{lemma}\label{lemma:Estep_GStS_skewX_c}
Let the observation vector $\tilde{\mathbf{Y}}_t$ and the latent variables $\bm{\epsilon}_t, \ \mathbf{X}_t,  \ \mathbf{U}_t$ and $\mathbf{V}_t$ be modelled as in \eqref{eq:appendix_skewX_model} and \eqref{eq:appendix_skewX_model3}. Given $N$ realisations of the random vector $\tilde{\mathbf{Y}}_t$ , the E-step of the Expectation-Maximisation algorithm for centred Skew-t GSt PPCA in the complete data setting is specified as
\begin{align*} 
Q(\Psi, \Psi^* )& = \frac{1}{\pi_{\tilde{\mathbf{Y}}_{1:N}| \Psi}(\mathbf{y}_{1:N})} \sum_{t=1}^N \Big\{ I_1(\tilde{\mathbf{y}}_{t};\Psi,\Psi^*) \prod_{s=1, s \neq t}^N  I_2(\tilde{\mathbf{y}}_s;\Psi) \Big\}. 
\end{align*}
The functions $I_1, \ I_2:\mathbb{R}^{d} \rightarrow \mathbb{R}$ are defined as 
\begin{align*}\notag
& I_1(\tilde{\mathbf{y}_{t}};\Psi,\Psi^*) :=  \int_0^1 \int_0^1 w(\tilde{\mathbf{y}}_{t},s_{\epsilon,t},s_{x,t};\Psi,\Psi^*) \  m(\tilde{\mathbf{y}}_{t},s_{\epsilon,t},s_{x,t};\Psi) \ d s_{\epsilon,t} \ d s_{x,t} ,\\
& I_2(\tilde{\mathbf{y}_{t}};\Psi) :=  \int_0^1 \int_0^1    m(\tilde{\mathbf{y}}_{t},s_{\epsilon,t},s_{x,t};\Psi) \ d s_{\epsilon,t} \ d s_{x,t},
\end{align*}
where $\tilde{w}\big(\tilde{\mathbf{y}}_{t},s_{\epsilon,t},s_{x,t};\Psi,\Psi^*\big):\mathbb{R}^{d} \times [0,1]^2 \longrightarrow \mathbb{R}$ is defined as  $\tilde{w}\big(\tilde{\mathbf{y}}_{t}, s_{\epsilon,t},s_{x,t};\Psi,\Psi^*\big): = w\big(\tilde{\mathbf{y}}_{t},T_\epsilon(s_{\epsilon,t}),T_x(s_{x,t});\Psi,\Psi^*\big)$ for 
\begin{align*}
w\big(\tilde{\mathbf{y}}_{t},s_{\epsilon,t},s_{x,t};\Psi,\Psi^*\big) & = \int_{ \mathbb{R}^{k} }  \log \pi (\tilde{\mathbf{y}}_{t},\mathbf{x}_{t},u_{t},v_{t} ; \Psi^*) \pi (\mathbf{x}_{t}|\tilde{\mathbf{y}}_{t},u_{t},v_{t} ; \Psi)   d \mathbf{x}_{t},
\end{align*}
for $ \pi (\mathbf{x}_{t}|\mathbf{y}_{t},u_{t},v_{t} | \Psi)$ being a probability density function of the conditional random vector 
$$
\mathbf{X}_t|\tilde{ \mathbf{Y}}_{t},U_{t},V_{t}, \Psi \sim \mathcal{N} \bigg( \big( U_t \tilde{\mathbf{Y}}_t\mathbf{W}  + \sigma^2 \bm{\delta}_x \big)  \mathbf{M}_t^{-1},  \frac{1}{ \sigma^2} \mathbf{M}_t   \bigg) ,
$$
$\mathbf{M}_t = T_\epsilon(s_{\epsilon,t})  \mathbf{W}^T\mathbf{W}  + \sigma^2 T_x(s_{x,t}) \mathbb{I}_k$  and the function $m: \mathbb{R}^{d_o} \times [0,1]^2 \longrightarrow \mathbb{R}$ is given by 
\begin{align*}
m(\tilde{\mathbf{y}}_t,s_{\epsilon,t}, s_{x,t} , \Psi)& :=\big(2 \pi \big)^{-\frac{d}{2}}  \big( \sigma^2\big)^{\frac{k-d}{2}}  T_\epsilon(s_{\epsilon,t})^{\frac{d}{2}}   T_x(s_{x,t})^{ \frac{k}{2}} \Big|   \mathbf{M}_t \Big|^{-\frac{1}{2}}  \exp \Big\{ -\frac{ 1}{2\sigma^2}\Big( T_\epsilon(s_{\epsilon,t}) \tilde{\mathbf{y}}_t \tilde{\mathbf{y}}_t ^T \Big\}\\
& \times \exp \Big\{ -\frac{ 1}{2\sigma^2}\Big(\sigma^2 T_x(s_{x,t})^{-1} \bm{\delta}_x  \bm{\delta}_x^T  -  \big( T_\epsilon(s_{\epsilon,t}) \tilde{\mathbf{y}}_t \mathbf{W}  + \sigma^2 \bm{\delta}_x \big)  \mathbf{M}_t^{-1}    \big( T_\epsilon(s_{\epsilon,t}) \tilde{\mathbf{y}}_t\mathbf{W}  + \sigma^2 \bm{\delta}_x \big) ^T \Big)\Big\}.
\end{align*}
\end{lemma}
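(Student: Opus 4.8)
\textbf{Proof proposal for Lemma~\ref{lemma:Estep_GStS_skewX_c}.}

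The plan is to mirror the structure of the proof of Theorem~\ref{th:Estep_GStS_missing_ind}, but in the simpler complete-data, skew-only-in-$\mathbf{X}_t$ setting of \eqref{eq:appendix_skewX_model}--\eqref{eq:appendix_skewX_model3}. First I would write the expected complete-data log-likelihood as
\[
Q(\Psi,\Psi^*) = \mathbb{E}_{\mathbf{X}_{1:N},\mathbf{U}_{1:N},\mathbf{V}_{1:N}\mid \tilde{\mathbf{Y}}_{1:N},\Psi}\Big[\log \pi_{\tilde{\mathbf{Y}}_{1:N},\mathbf{X}_{1:N},\mathbf{U}_{1:N},\mathbf{V}_{1:N}\mid\Psi^*}\big(\tilde{\mathbf{y}}_{1:N},\mathbf{x}_{1:N},\mathbf{u}_{1:N},\mathbf{v}_{1:N}\big)\Big],
\]
and use the conditional independence over $t$ of the tuples $(\tilde{\mathbf{Y}}_t,\mathbf{X}_t,\mathbf{U}_t,\mathbf{V}_t)$ to factor the joint density into a product over $t$. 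Writing $\log(\prod_s \cdot)=\sum_t\log(\cdot_t)$ inside the expectation, the usual ``one index singled out'' structure appears, producing a prefactor $1/\pi_{\tilde{\mathbf{Y}}_{1:N}\mid\Psi}(\tilde{\mathbf{y}}_{1:N})$ and a sum over $t$ of an integral in the $t$-th variables times a product of marginal normalising integrals over the remaining indices.

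Next I would perform a Bayes-rule decomposition analogous to Lemma~\ref{lemma:tildePi_Xt_Yt_ind}: write
\[
\pi_{\tilde{\mathbf{Y}}_t,\mathbf{X}_t,\mathbf{U}_t,\mathbf{V}_t\mid\Psi}(\tilde{\mathbf{y}}_t,\mathbf{x}_t,\mathbf{u}_t,\mathbf{v}_t)
= \pi_{\mathbf{X}_t\mid\tilde{\mathbf{Y}}_t,\mathbf{U}_t,\mathbf{V}_t,\Psi}(\mathbf{x}_t)\ \pi_{\tilde{\mathbf{Y}}_t\mid\mathbf{U}_t,\mathbf{V}_t,\Psi}(\tilde{\mathbf{y}}_t)\ C(\mathbf{u}_t,\mathbf{v}_t;\Psi),
\]
which amounts to completing the square in $\mathbf{x}_t$ in the Gaussian kernel $\pi(\tilde{\mathbf{y}}_t\mid\mathbf{x}_t,u_t,v_t;\Psi)\,\pi(\mathbf{x}_t\mid v_t;\Psi)$. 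Collecting the $\mathbf{x}_t$-dependent terms identifies $\mathbf{X}_t\mid\tilde{\mathbf{Y}}_t,\mathbf{U}_t,\mathbf{V}_t,\Psi\sim\mathcal{N}\big((U_t\tilde{\mathbf{Y}}_t\mathbf{W}+\sigma^2\bm{\delta}_x)\mathbf{M}_t^{-1},\ \sigma^2\mathbf{M}_t^{-1}\big)$ with $\mathbf{M}_t=U_t\mathbf{W}^T\mathbf{W}+\sigma^2 V_t\mathbb{I}_k$ (note: here I would double-check the covariance scaling; the statement writes $\tfrac{1}{\sigma^2}\mathbf{M}_t$, which I expect should read $\sigma^2\mathbf{M}_t^{-1}$ by the same computation as in Lemma~\ref{lemma:tildePi_Xt_Yt_ind}, and I would reconcile this), while the leftover pieces give $m(\tilde{\mathbf{y}}_t,s_{\epsilon,t},s_{x,t};\Psi)$ after substituting $U_t=T_\epsilon(s_{\epsilon,t})$, $V_t=T_x(s_{x,t})$. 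The inner integral $\int_{\mathbb{R}^k}\log\pi(\tilde{\mathbf{y}}_t,\mathbf{x}_t,u_t,v_t;\Psi^*)\,\pi(\mathbf{x}_t\mid\tilde{\mathbf{y}}_t,u_t,v_t;\Psi)\,d\mathbf{x}_t$ is then exactly the function $w$ of the statement; it is a Gaussian expectation of a quadratic-plus-log form, evaluated using the first two moments of the conditional $\mathbf{X}_t$ just identified.

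Finally I would change variables from $(\mathbf{U}_t,\mathbf{V}_t)$ to $(S_{\epsilon,t},S_{x,t})$ via the quantile maps $T_\epsilon,T_x$, using the Jacobian relations $\pi_{\mathbf{U}_t\mid\Psi}(\mathbf{u}_t)=\mathbf{1}_{[0,1]}(s_{\epsilon,t})\,|\partial T_\epsilon/\partial s_{\epsilon,t}|^{-1}$ (and similarly for $\mathbf{V}_t$), which collapses the $d\times k$-dimensional integration over the scale variables to the two-dimensional integrals over $[0,1]^2$ and absorbs the Jacobian factors into $m$; this yields the stated $I_1,I_2$ and hence the claimed form of $Q$. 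I do not anticipate a genuine obstacle here beyond bookkeeping: the main delicate point is the ``complete the square'' step producing $\mathbf{M}_t$, $m$, and the exact normalising constant $C$, and in particular getting the $\sigma^2$-powers and the determinant $|\mathbf{M}_t|^{-1/2}$ right; a secondary subtlety is confirming the conditional-covariance normalisation in the displayed distribution of $\mathbf{X}_t\mid\tilde{\mathbf{Y}}_t,\mathbf{U}_t,\mathbf{V}_t$, which I would verify against the more general Lemma~\ref{lemma:tildePi_Xt_Yt_ind} specialised to $\mathbf{D}_{\epsilon,t}=U_t\mathbb{I}_d$, $\mathbf{D}_{x,t}=V_t\mathbb{I}_k$, $\bm{\delta}_\epsilon=\mathbf{0}$, $\bm{\mu}=\mathbf{0}$.
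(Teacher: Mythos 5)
Your proposal matches the paper's own proof, which simply states that the argument follows the same steps as the proof of Theorem~\ref{th:Estep_GStS_missing_ind} specialised to $\bm{\delta}_\epsilon = \mathbf{0}$, $\bm{\mu} = \mathbf{0}$ and no missing data; your three stages (factorisation over $t$, the Bayes-rule/complete-the-square decomposition as in Lemma~\ref{lemma:tildePi_Xt_Yt_ind}, and the change of variables collapsing the scale-variable integrals to $[0,1]^2$) are exactly that argument. Your side remark that the displayed conditional covariance should read $\sigma^2\mathbf{M}_t^{-1}$ rather than $\tfrac{1}{\sigma^2}\mathbf{M}_t$ is also correct and consistent with the general case in Lemma~\ref{lemma:tildePi_Xt_Yt_ind}.
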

\begin{proof}
The proof of Lemma \ref{lemma:Estep_GStS_skewX_c} follows the same steps as the proof of Theorem~\ref{th:Estep_GStS_missing_ind} in Subsection \ref{proof:th_Estep_GStS_missing_ind} in Appendix \ref{appendix:proofs_GStS} but assuming no $\bm{\delta}_\epsilon$ and no $\bm{\mu}$.
\end{proof}

\begin{lemma} \label{lemma:Mstep_GStS_skewX_c}
The solution to the system of equation $\nabla_{\Psi^*} Q(\Psi, \Psi^* ) = \mathbf{0}$  which determines the maximizers of the function $Q$ from Lemma \ref{lemma:Estep_GStS_skewX_c} with respect to the parameters $ \sigma^2$ and $\mathbf{W}$ are given by explicit formulas defined by two-dimensional integration problems on the hypercube $[0,1]^2$ as follows
{\footnotesize 
\begin{align*}
\begin{cases}
&\mathbf{W}^{*} = \Big[ \mathbf{A}_{14}  + \sigma^2  \mathbf{A}_{16}   \Big] \Big(\sigma^2\mathbf{A}_{11} + \mathbf{A}_{171819} \Big)^{-1}, \\\notag 
& \sigma^{*2} = \frac{1}{d}A_0^{-1} \bigg[  A_{20}(\mathbf{y}_{1:N};\Psi) - 2 \tr \Big\{ \mathbf{A}_{14} \mathbf{W}^{*T} \Big\}  - 2 \sigma^2 \mathbf{A}_{22} \bm{\delta}_x^T  + \tr \bigg\{ \Big( \sigma^2 \mathbf{A}_{11} +\mathbf{A}_{171819} \Big) \mathbf{W}^{*T} \mathbf{W}^*\bigg\} \bigg],
\end{cases}
\end{align*}
}
where the two-dimensional integrals $\mathbf{A}_i$ on the hypercube $[0,1]^2$, for $i \in \Big\{ 0, \ldots, 22 \Big\}$, are defined in Subsection \ref{proof:cor_Mstep_explicitFormulas_missing_ind} in Appendix \ref{appendix:proofs_GStS} with the exceptions
\begin{align*}
& \mathbf{A}_{16} (\mathbf{y}_{1:N};\Psi)_{d \times k}:=\sum_{t=1}^N \bigg\{  H \big(\mathbf{y}_{1:N/t};\Psi \big) \  \mathbf{y}_t^T \int_0^1 \int_0^1 T_\epsilon(s_{\epsilon,t}) \bm{\delta}_x  \mathbf{M}_t^{-1} \   m(\mathbf{y}_{t},s_{\epsilon,t},s_{x,t};\Psi) \ d s_{\epsilon,t} \ d s_{x,t} \bigg\}; \\
& \mathbf{A}_{171819}(\mathbf{y}_{1:N};\Psi)_{ k \times k} := \sum_{t=1}^N \bigg\{  H \big(\mathbf{y}_{1:N/t};\Psi \big) \  \int_0^1 \int_0^1 T_\epsilon(s_{\epsilon,t}) \mathbf{M}_t^{-1}  \Big(  T_\epsilon(s_{\epsilon,t}) \mathbf{y}_t \mathbf{W} + \sigma^2 \bm{\delta}_x \Big) ^T  \\
& \hspace{3cm}\times \Big(  T_\epsilon(s_{\epsilon,t}) \mathbf{y}_t \mathbf{W} + \sigma^2 \bm{\delta}_x \Big) \mathbf{M}_t^{-1}   m(\mathbf{y}_{t},s_{\epsilon,t},s_{x,t};\Psi) \ d s_{\epsilon,t} \ d s_{x,t} \bigg\} .
\end{align*}

\end{lemma}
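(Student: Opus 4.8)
The plan is to derive the explicit maximisers of the objective $Q(\Psi,\Psi^*)$ from Lemma~\ref{lemma:Estep_GStS_skewX_c} by specialising the argument used for Theorem~\ref{th:Mstep_explicitFormulas_missing_ind}, but with $\bm{\delta}_\epsilon = \mathbf{0}$, $\bm{\mu} = \mathbf{0}$ (the data being pre-centred as $\tilde{\mathbf{Y}}_t$), and no missing data (so that $\mathbf{Y}_t^m$ is empty and all conditional expectations $\mathbb{E}_{\mathbf{Y}_t^m|\ldots}[\cdot]$ collapse to the deterministic values $\tilde{\mathbf{y}}_t^T\tilde{\mathbf{y}}_t$ and $\tilde{\mathbf{y}}_t$). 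First I would write $\nabla_{\Psi^*}Q = \mathbf{0}$ in the form
\begin{align*}
\sum_{t=1}^N H\big(\tilde{\mathbf{y}}_{1:N/t};\Psi\big)\int_0^1\int_0^1 \frac{\partial}{\partial\Psi^*}\tilde w\big(\tilde{\mathbf{y}}_t,s_{\epsilon,t},s_{x,t};\Psi,\Psi^*\big)\,m\big(\tilde{\mathbf{y}}_t,s_{\epsilon,t},s_{x,t};\Psi\big)\,ds_{\epsilon,t}\,ds_{x,t} = \mathbf{0},
\end{align*}
using the Leibniz rule (justified by the differentiability class of $m$ established in Lemma~\ref{lemma:m_smoothness_ind}, restricted to the present model) to interchange differentiation and integration, exactly as in Subsection~\ref{proof:th_Estep_GStS_missing_ind}.

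Next I would compute the two partial derivatives explicitly. From the closed form of $w$ in Lemma~\ref{lemma:Estep_GStS_skewX_c}, the integrand $\tilde w$ is, after integrating out $\mathbf{X}_t$ against its Gaussian conditional law, a quadratic form in $\mathbf{W}^*$ and $\sigma^{*2}$ with coefficients that are polynomial in $\mathbf{D}_{\epsilon,t}=T_\epsilon(s_{\epsilon,t})\mathbb{I}_d$, $\mathbf{D}_{x,t}=T_x(s_{x,t})\mathbb{I}_k$, $\mathbf{M}_t$, $\tilde{\mathbf{y}}_t$ and $\bm{\delta}_x$. Differentiating with respect to $\mathbf{W}^*$ and setting to zero produces a linear matrix equation whose solution is $\mathbf{W}^*=\big[\mathbf{A}_{14}+\sigma^2\mathbf{A}_{16}\big]\big(\sigma^2\mathbf{A}_{11}+\mathbf{A}_{171819}\big)^{-1}$, where the modified integrals $\mathbf{A}_{16}$ and $\mathbf{A}_{171819}$ are precisely the $\bm{\mu}=\bm{\delta}_\epsilon=0$, no-missingness reductions of the corresponding quantities in Subsection~\ref{proof:cor_Mstep_explicitFormulas_missing_ind}; here I would check that the matrix $\sigma^2\mathbf{A}_{11}+\mathbf{A}_{171819}$ is invertible (it is a positive combination of Gram-type matrices weighted by the strictly positive density $m$, so positive-definite). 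Differentiating with respect to $\sigma^{*2}$ gives a scalar equation, linear in $\sigma^{*2}$ once the trace terms are collected, yielding
\begin{align*}
\sigma^{*2}=\frac{1}{d}A_0^{-1}\bigg[A_{20}(\mathbf{y}_{1:N};\Psi)-2\,\tr\big\{\mathbf{A}_{14}\mathbf{W}^{*T}\big\}-2\sigma^2\mathbf{A}_{22}\bm{\delta}_x^T+\tr\big\{\big(\sigma^2\mathbf{A}_{11}+\mathbf{A}_{171819}\big)\mathbf{W}^{*T}\mathbf{W}^*\big\}\bigg],
\end{align*}
with $\mathbf{A}_{22}$ the new integral arising from the cross term between $\bm{\delta}_x$ and $\tilde{\mathbf{y}}_t\mathbf{W}^*$.

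The bookkeeping obstacle, and the main place care is needed, is matching the surviving terms of the general $w$-expansion from Lemma~\ref{lemma:integral_w_ind} to the reduced model: most of the $21$ integrals $\mathbf{A}_i$ vanish or coalesce when $\bm{\mu}=\bm{\delta}_\epsilon=\mathbf{0}$ and $d_m=0$, and one must verify that exactly the combination $\mathbf{A}_{171819}$ — standing for $\mathbf{A}_{17}-\mathbf{A}_{18.1}-\mathbf{A}_{18.2}+\mathbf{A}_{19}$ evaluated at these parameters — is what appears, and that the remaining $\mathbf{W}^*$-dependence is genuinely quadratic so that the stationarity equation is linear and admits the stated closed form. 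Once that reduction is carried out, the two formulas follow by elementary matrix calculus (using $\partial\tr\{\mathbf{A}\mathbf{W}^T\}/\partial\mathbf{W}=\mathbf{A}$ and $\partial\tr\{\mathbf{B}\mathbf{W}^T\mathbf{W}\}/\partial\mathbf{W}=\mathbf{W}(\mathbf{B}+\mathbf{B}^T)$, with $\mathbf{B}$ symmetric here), exactly as in the proof of Theorem~\ref{th:Mstep_explicitFormulas_missing_ind}, and the remaining routine algebra is omitted.
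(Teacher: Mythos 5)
Your proposal is correct and follows essentially the same route as the paper, whose proof of Lemma~\ref{lemma:Mstep_GStS_skewX_c} simply states that one repeats the steps of the proof of Theorem~\ref{th:Mstep_explicitFormulas_missing_ind} with $\bm{\delta}_\epsilon=\mathbf{0}$, $\bm{\mu}=\mathbf{0}$ and no maximisation over $\bm{\delta}_x$. Your additional checks (Leibniz interchange via Lemma~\ref{lemma:m_smoothness_ind}, positive-definiteness of $\sigma^2\mathbf{A}_{11}+\mathbf{A}_{171819}$, and the identification of $\mathbf{A}_{171819}$ with the reduced combination of $\mathbf{A}_{17},\mathbf{A}_{18.1},\mathbf{A}_{18.2},\mathbf{A}_{19}$) are consistent with, and more explicit than, what the paper records.
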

\begin{proof}
The proof of Lemma \ref{lemma:Mstep_GStS_skewX_c}  follows the steps of the proof to Theorem~\ref{th:Mstep_explicitFormulas_missing_ind} in Subsection \ref{proof:cor_Mstep_explicitFormulas_missing_ind} in Appendix \ref{appendix:proofs_GStS} but assuming no $\bm{\delta}_\epsilon$ and $\bm{\mu}$ and no maximizers of the function $Q$ with respect to $\bm{\delta}_x$ .
\end{proof}

\section{Figures \& Tables}

\begin{figure}[h]
  \centering   
\includegraphics[scale = 0.10]{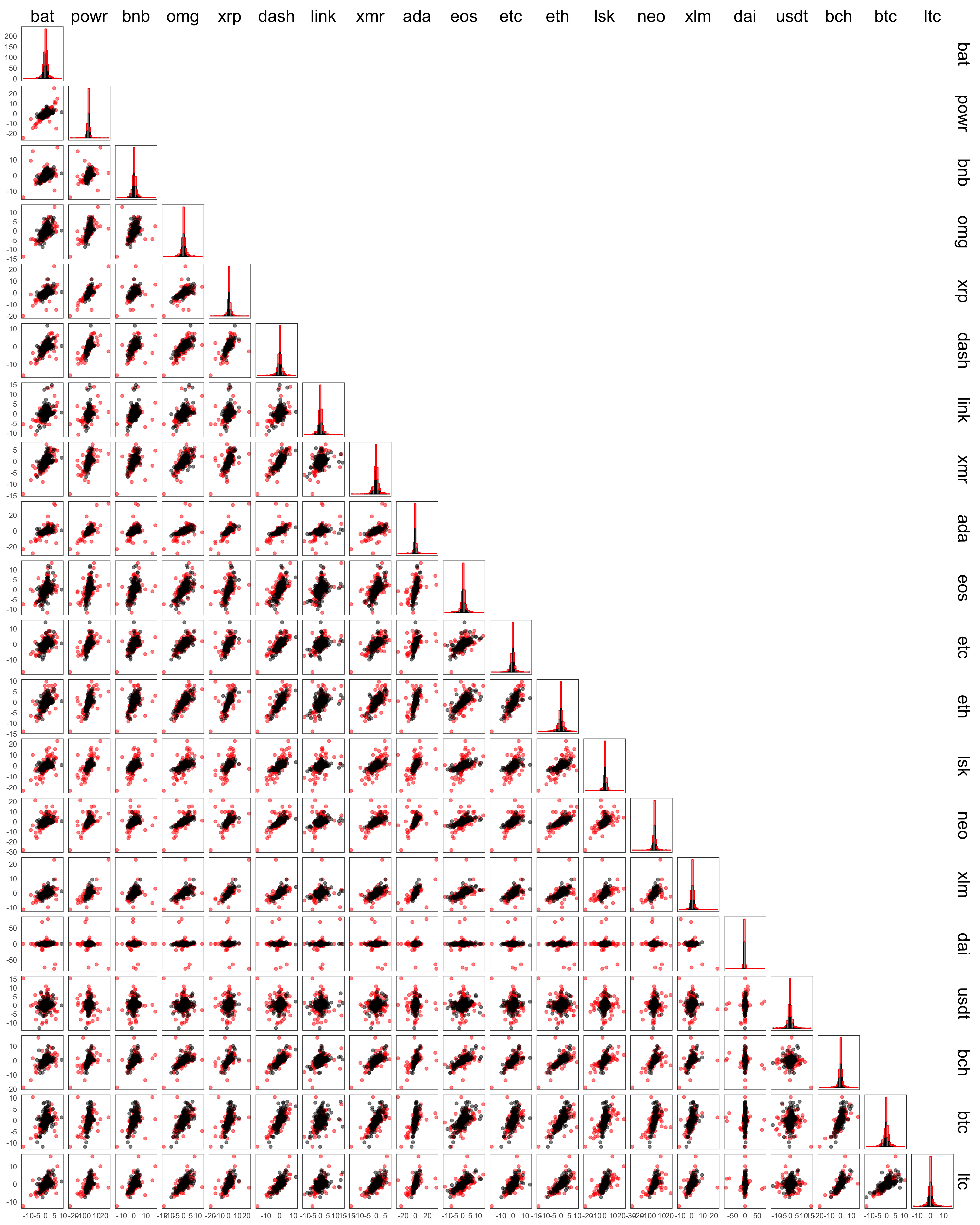}     
   \caption{The pair plots of linear interactions for standardized daily returns of $20$ crypto assets (rows and columns names of the panels) listed in Table \ref{tab:crypto_list}. The colors corresponds to the different sample periods, 2018 (red) and 2019 (black).}\label{fig:5crypto_pairplots_cat_2019}
\end{figure}

\begin{table}[htb]
\caption{The model choices (selected degrees of freedom, if applicable) and resulting log-likelihood ($\log L$) for $5$ PPCA franeworks: the Gaussian PPCA, Student-t PPCA, Student-t GSt PPCA, Grouped-t GSt PPCA and Skew-t GSt PPCA for standardized daily returns of $20$ crypto assets listed in Table \ref{tab:crypto_list} over the two sample periods, 2018 and 2019.  }
\centering \small
\begin{tabular}{lll}
\hline
  & 2018 &  2019 \\ 
  \hline
Gaussian PPCA& \\ 
  \hline
$\log L$  & -5451.735  & -2743.588 \\ 
   \hline
Student-t PPCA & \\ 
   \hline
   $\log L$ & -10455.642  &  -9989.970 \\ 
  $\nu$   & 2 & 2 \\ 
   \hline  
Student-t GSt PPCA & \\     \hline
  $\log L$  & -10385.006  & -10003.074 \\ 
  $\nu_{\epsilon}$ &  2 & 2 \\ 
  $\nu_{x}$  & 2 & 2 \\ 
   \hline 
Skew-t GSt PPCA & \\    \hline 
  $\log L$  & -11005.430  &  -9841.640 \\ 
  $\nu_{\epsilon}$  & 4 & 4 \\ 
  $\nu_{x}$  & 4 & 4 \\
   \hline 
Grouped-t GSt PPCA & \\      \hline
  $\log L$  & -10366.933  &  -9987.246 \\ 
  $\nu_{\epsilon,Advertising}$  & 2 & 2 \\ 
  $\nu_{\epsilon,Exchange}$ & 2 & 2 \\ 
  $\nu_{\epsilon,Technology}$  & 2 & 2 \\ 
  $\nu_{\epsilon,Energy}$  & 2  & 2 \\ 
  $\nu_{\epsilon,Smart Contracts}$ & 2 & 2 \\ 
  $\nu_{\epsilon,Interoperability}$  & 2 & 100 \\ 
  $\nu_{\epsilon,Governance}$  & 2 & 2 \\ 
  $\nu_{\epsilon,Privacy}$  & 2 & 2 \\ 
   $\nu_{\epsilon,Financial Service}$  & 2 & 2 \\ 
  $\nu_{\epsilon,Stablecoin}$ & 100  & 2 \\ 
  $\nu_{x,1}$  & 2 & 2 \\ 
  $\nu_{x,2}$  & 2 & 2 \\ 
  $\nu_{x,3}$ & 2 & 2 \\ 
   \hline
\end{tabular}\label{tab:20ccy_res}
\end{table}

\end{document}